%% file: main.tex
\documentclass[11pt]{article}

\usepackage{graphicx}
\usepackage{fullpage}
\usepackage{amsfonts}
\usepackage{indentfirst}
\usepackage{amsmath}
\usepackage{amsthm}
\usepackage{amssymb}
\usepackage[numbers]{natbib}
\usepackage[pagebackref=true]{hyperref}
\usepackage{cleveref}
\usepackage{indentfirst}
\usepackage{color}
\usepackage[boxed,linesnumbered]{algorithm2e}
\usepackage{enumitem}
\usepackage{tikz}
\usetikzlibrary{arrows,arrows.meta, positioning, calc, shapes.multipart, shapes.misc, shapes.geometric, patterns}

\usepackage{times}

\definecolor{corlinks}{RGB}{0,0,150}
\definecolor{cormenu}{RGB}{0,0,150}
\definecolor{corurl}{RGB}{0,0,150}

\hypersetup{
	colorlinks=true,
	urlcolor=corlinks,
	linkcolor=corlinks,
	menucolor=cormenu,
	citecolor=corlinks,
	pdfborder= 0 0 0
}

\definecolor{blue-violet}{rgb}{0.54, 0.17, 0.89}
\definecolor{orange}{rgb}{0.9, 0.3, 0.0}

\newcommand{\poly}{\mathsf{poly}}

\newcommand{\N}{\mathbb{N}}

\renewcommand{\P}{\mathsf{P}}
\newcommand{\NP}{\mathsf{NP}}

\newcommand{\PV}{\mathsf{PV}}
\newcommand{\UP}{\mathsf{UP}}
\newcommand{\DTIME}{\mathsf{DTIME}}
\newcommand{\SOT}{\mathsf{S}^1_2}

\newcommand{\APC}{\mathsf{APC}}

\newcommand{\FP}{\mathsf{FP}}

\newcommand{\bd}{\mathsf{b}}

\newcommand{\Seq}{\text{Seq}}
\newcommand{\Read}{\text{Read}}
\newcommand{\Len}{\text{Len}}
\newcommand{\CircuitVal}{\text{CircuitVal}}
\newcommand{\abs}[1]{\lvert #1 \rvert}
\newcommand{\ceil}[1]{\lceil #1 \rceil}

\newtheorem{theorem}{Theorem}[section]
\newtheorem{lemma}[theorem]{Lemma}

\newtheorem{claim}[theorem]{Claim}
\newtheorem{corollary}[theorem]{Corollary}
\newtheorem{proposition}[theorem]{Proposition}
\theoremstyle{definition}
\newtheorem*{problem}{Open Problem}

\theoremstyle{definition}
\newtheorem{definition}[theorem]{Definition}

\theoremstyle{remark}

\title{Parallelism and Adaptivity in Student-Teacher Witnessing}
\author{\large Ondřej Ježil\thanks{\texttt{Email: ondrej.jezil@email.cz}} \\ \small Faculty of Mathematics and Physics\\\small Charles University \and \large Dimitrios Tsintsilidas\thanks{\texttt{Email: dimitrios.tsintsilidas@warwick.ac.uk}}\\  \small Department of Computer Science\\ \small University of Warwick}
\date{February 2026}

\begin{document}

\maketitle
\vspace{-0.6cm}

\begin{abstract}
Student-Teacher games are a model of computation where a computationally restricted student tries to find a string satisfying some refutable property, and every time the student outputs a candidate answer an~all-knowing teacher tries to refute it if possible. These games are a classical computational model for  the witnessing of $\forall\exists\forall$-formulas in bounded arithmetic by the well-known result of Krajíček, Pudlák and Takeuti~\cite{KrajicekPT91}. We introduce subclasses of total search problems in the polynomial hierarchy which are characterized by the number of rounds and the number of candidate answers per round a student from a related level of the polynomial hierarchy would need to solve the given problem.  Our main results are as follows: 

\begin{enumerate}[label=(\alph*)]
    \item\label{bulletPVBB} We find theories of bounded arithmetic whose $\forall\exists \Pi^b_i$-consequences are witnessed by Student-Teacher games with a given amount of adaptivity and parallelism. As a consequence, we show that assuming $\NP\not\subseteq \P/\poly$ the theories 
    \[\text{$\PV_1$, $\PV_1+\mathsf{BB}(\Sigma^b_1)$, $\PV_1+\mathsf{LLIND}(\mathsf{s}\Sigma^b_1)$, $\PV_1+\mathsf{LLIND}(\Sigma^b_1)$ and $\mathsf{S}^1_2$,}\]
    are all distinct. This extends the results of Zambella~\cite{Zambella96}, Cook and Thapen~\cite{DBLP:journals/tocl/CookT06} and Garlík~\cite{garlik2015model} who separated each of the pairs in this sequence under stronger or seemingly incomparable assumptions.
    \item Generalizing the previous to all levels of polynomial hierarchy and assuming $\Sigma^p_{i+1}\not\subseteq \Delta^p_{i+1}/\poly$, we obtain an analogous result for $\mathsf{T}^i_2$. Thus, under plausible complexity theoretic assumptions, we give a solution to an open problem of Buss and Ressayre~\cite{buss1985bounded,clote1993open} regarding the strength of the $\mathsf{BB}(\Sigma^b_{i+1})$ scheme, and another problem of Pollett~\cite{pollett1997arithmetic} regarding the strength of the schemes $\mathsf{LLIND}(\Sigma^b_{i+1})$ and $\mathsf{LLIND}(\mathsf{s}\Sigma^b_{i+1})$.
    \item We revisit two major unconditional unprovability results for $\PV_1$. Namely the unprovability of circuit upper bounds of Krajíček and Oliveira~\cite{KO17}, which we show holds for the theory $\PV_1+\mathsf{BB}(\Sigma^b_1)$ and the unprovability of strong co-nondeterministic circuit lower-bounds of Pich and Santhanam~\cite{pichS21}, which we show holds for the theory $\PV_1+\mathsf{LLIND}(\mathsf{s}\Sigma^b_1)$.  It follows that both of these unprovability results hold simultaneously for a specific theory which extends $\PV_1$, and this theory is a proper extension of $\PV_1$ assuming $\NP\not\subseteq \P/\poly$.
    \item As a technical tool, we build upon the result of Krajíček, Pudlák and Sgall~\cite{KrajicekPS90} and assuming $\Sigma^p_{i+1}\not\subseteq \Delta^p_{i+1}/\poly$ we separate different classes of $\mathsf{TF}\Sigma^p_{i+2}$ which can be solved in a Student-Teacher game with $r$-many rounds (adaptivity) and $q$-many candidate answers per round (parallelism). Notably, the addition of a single round cannot be substituted by polynomial blow-up in the number of parallel answers. We also provide a general witnessing theorem for $\forall\exists\forall$-consequences of first-order theories, which we use as a base result for our other witnessing theorems.
\end{enumerate}
\end{abstract}

\newpage
\tableofcontents

\newpage

\input{introduction.tex}

\paragraph{Acknowledgements.} We would like to thank Igor Oliveira for hosting Ondřej Ježil at the University of Warwick and initiating the discussion about witnessing theorems, Student-Teacher Games and unprovability results. We are also grateful to Jan Kraj{\'\i}{\v{c}}ek for pointing out the provability of the bounded collection axiom with double length induction on non-strict formulas, Sam Buss for pointing us to the open problem from his thesis, and Neil Thapen for discussion about the method of witnessing described in his thesis.

Ondřej Ježil was supported by Charles University Research Center program No. UNCE/24/SCI/022, the project SVV-2025-260837, and by the GA UK project No. 246223. Dimitrios Tsintsilidas was supported by the UKRI Frontier Research Guarantee Grant EP/Y007999/1, the Centre for Discrete Mathematics and its Applications (DIMAP) at the University of Warwick, and Chancellor’s Scholarship.

\input{preliminaries.tex}

\input{student-teacher.tex}
    
\input{witnessing.tex}

\input{separations.tex}

\input{unprovability.tex}

\bibliographystyle{alpha}
\bibliography{ref}

\appendix
\input{appendix.tex}

\end{document}

%% file: introduction.tex
\section{Introduction}

\subsection{Context and Motivation}

Bounded Arithmetic refers to weak theories of arithmetic that were introduced to study the relation of computational complexity with mathematical logic. The theories in Buss' hierarchy are the most used theories for this goal, since they closely relate and formalise reasoning with functions in the corresponding class of the polynomial hierarchy \cite{buss1985bounded}. In fact, the theories $\mathsf{S}^i_2,\mathsf{T}^i_2$ of the $i$th level of the hierarchy are axiomatisable by different axiom schemes, length induction ($\mathsf{LIND}$) and usual induction ($\mathsf{IND}$), respectively, on predicates from the class $\Sigma^p_i$ (in logic as bounded formulas $\Sigma^b_i$). At the same time, by Buss's theorem all the functions with $\Sigma^b_i$ graphs which are provably total in the theory $\mathsf{S}^i_2$ are exactly the functions in the class $\mathsf{FP}^{\Sigma^p_{i-1}}$. By their definition, it is easy to prove that $\mathsf{S}^i_2 \subseteq\mathsf{T}^i_2$, and as expected by the polynomial hierarchy, we also have $\mathsf{T}^{i}_2 \subseteq\mathsf{S}^{i+1}_2$, thus getting an increasing sequence of theories.

Another widely used Bounded Arithmetic theory is $\PV_1$, which is placed at the bottom of Buss's hierarchy. This theory was introduced by Stephen Cook \cite{Coo75} as the equational theory $\PV$, based on Cobham's axioms for defining symbols for all $\FP$ functions \cite{Cob64}, and later extended to first-order as a fragment of Peano Arithmetic \cite{KrajicekPT91} with open formula ($\P$-time predicate) induction. $\PV_1$ is proven to be equivalent to $\mathsf{T}^0_2$, if we consider an extended language for the latter \cite{jerabek:sharply-bounded}, but they are also equivalent theories if we get $\mathsf{T}^0_2$ on the language of $\PV$.

The question if Buss's hierarchy is proper was addressed very early in the literature. The separations between these theories rely on a fundamental tool of Bounded Arithmetic, the witnessing theorems. These theorems show that from a proof of a suitable existential statement in a given theory, one can extract a computational procedure that ``witnesses'' the existential quantifier. In \cite{KrajicekPT91}, a new witnessing theorem is developed, which was later called KPT theorem, after the authors, in order to show that the inclusion $\mathsf{T}^{i}_2 \subseteq\mathsf{S}^{i+1}_2$ (or $\PV_1 \subseteq\mathsf{S}^{1}_2$) is proper, under the assumption that $\Sigma^p_{i+1}\not\subseteq \Delta^p_{i+1}/\poly$. In general, if the polynomial hierarchy does not collapse, then Buss's hierarchy does not collapse, as well. The case of $\mathsf{S}^i_2$ versus $\mathsf{T}^i_2$ was addressed in \cite{krajivcek1993fragments}, where another witnessing theorem is used, and the separation is achieved under the assumption that $\P^{\Sigma^p_i}[O(\log n)]$ (logarithmic number of $\Sigma^p_i$-oracle queries) is not equal to $\Delta^p_{i+1}$ (which is $\P^{\Sigma^p_i}$ with unlimited number of queries).

The KPT theorem is particularly interesting, because the computational procedure extracted by a proof is represented by an interactive game, called Student-Teacher Game. In this game, the Student tries to witness the existential quantifier of a $\exists\forall$-formula and sends their attempts to the Teacher, who either accepts if the Student guessed the correct answer or otherwise, provides a counterexample, which corresponds to the universal quantifier. The Student then can use the counterexamples adaptively to guess a better answer. The KPT theorem asserts that if the statement is provable in the theory $\PV_1$ or some theory $\mathsf{T}^i_2$, there is a Student represented by uniform functions in the corresponding level of the polynomial hierarchy that succeeds in a constant number of rounds. 

This framework of Student-Teacher Games has been proved useful for other applications, too, such as unprovability results. For example, two important unprovability results are that for any $k \in \N$, $\PV_1 \nvdash \P \subseteq \mathsf{SIZE}[n^k]$ due to Kraj\'{\i}\v{c}ek-Oliveira \cite{KO17}, and also that $\PV_1\nvdash \mathsf{NSubExp} \not\subseteq \mathsf{Avg}\text{-} \mathsf{coNSIZE}[2^{n^\delta}]$\footnote{This unprovability statement can also be extended to a fragment of the theory $\APC_1$, but in this paper we only consider theories in Buss's hierarchy.} due to Pich–Santhanam \cite{pichS21}. These results belong to a wider family of results that are concerned with the meta-mathematics of complexity theory \cite{oli2025}, which has as its main goal to prove known results of algorithms and complexity in the weakest theories possible (e.g. $\PV_1$ proves the Cook-Levin theorem and PCP theorem \cite{pichpcp}, $\mathsf{T}^2_2$ proves the circuit size hierarchy \cite{ckkot}, etc.), or to establish the unprovability of open problem statements in these theories. It is worth noting that, by~\cite{KrajicekPT91}, the separation of the theories in Buss's hierarchy is itself equivalent to the unprovability of certain open problem statements about the polynomial-time hierarchy.

Another axiom considered in \cite{buss1985bounded} is the bounded replacement axiom, which is also called collection or choice axiom, since it can be considered as a feasible analogue to the axiom of choice. It is useful because it provides a theory with a quantifier exchange property, where you can move all the bounded quantifiers of the formula outside the sharply bounded quantifiers (quantifiers with smaller scope). This is in analogy with the case of unbounded fragments of Peano Arithmetic, where the replacement axiom can change the order of unbounded and bounded quantifiers. In fact in this case, the theory axiomatised by the collection axiom on $\Sigma_{i+1}$ formulas (with $i+1$ unbounded quantifiers) is strictly between the theories axiomatised by induction on $\Sigma_{i}$ and $\Sigma_{i+1}$ formulas \cite{PARIS1978199}. We write that as $\mathsf{I}\Sigma_{i}\subsetneq \mathsf{B}\Sigma_{i+1}\subsetneq \mathsf{I}\Sigma_{i+1}$. We expect the same to be true in the bounded case, too.

As a matter of fact, Buss proved the analogous inclusions $\mathsf{S}^{i}_2\subseteq \SOT +\mathsf{BB}(\Sigma^b_{i+1})\subseteq \mathsf{S}^{i+1}_2$, where $\SOT$ is used as a base theory and $\mathsf{BB}(\Sigma^b_{i+1})$ denotes the bounded replacement axiom. However, it was left open whether these inclusions are proper. It was later proposed as a major open problem in Bounded Arithmetic by Buss and Ressayre (Problem 12 in Section 1.2 of the survey on open problems in Bounded Arithemetic, Proof Complexity and Fragments of Peano Arithmetic \cite{clote1993open}).

The answer for the second part was given by Zambella \cite{Zambella96}, who shows by a model theoretic argument that $\mathsf{S}^{i+1}_2\subseteq\mathsf{T}^i_2+\mathsf{BB}(\Sigma^b_{i+1})$ is equivalent with $ \mathsf{S}^{i+1}_2\subseteq\mathsf{T}^i_2$. Thus, he separated the bounded replacement axiom from the length induction axiom for the same class of formulas, under the same assumptions \cite{KrajicekPT91} uses to separate $\mathsf{T}^i_2$ and $\mathsf{S}^{i+1}_2$. Nonetheless, the first part, whether $\mathsf{S}^i_2\vdash \mathsf{BB}(\Sigma^b_{i+1})$ was not considered, or the even stronger, whether $\mathsf{T}^i_2\vdash \mathsf{BB}(\Sigma^b_{i+1})$. This question has been open since then.

The first answer towards that direction was given by Cook and Thapen \cite{DBLP:journals/tocl/CookT06}. Here, they prove that under the hardness of factoring against probabilistic polynomial time, $\PV_1\not\vdash \mathsf{BB}(\Sigma^b_1)$. Also, using similar model theoretic ideas with \cite{Zambella96}, they show a new witnessing theorem for the theory $\PV_1+\mathsf{BB}(\Sigma^b_0)$, where we get a version of the Student-Teacher Game with constant rounds and polynomially many parallel queries per round. Provided this witnessing theorem, Je\v{z}il proved the separation of $\PV_1+\mathsf{BB}(\Sigma^b_0)$ from $\SOT$ under the new cryptographic assumption that non-uniform polynomial-time algorithms cannot factor a constant fraction of all semiprimes \cite{jezil}.

However, the separation from \cite{DBLP:journals/tocl/CookT06} is only known for the first level of the hierarchy and cannot generalise to all levels, due to the nature of the factoring problem. Hence, more general hardness assumptions would be preferable to completely solve the problem. 

Another problem, concerning theories between $\PV_1$ and $\SOT$ comes from the variant of length induction axiom, where length is replaced by double length ($\mathsf{LLIND}$). It is not known whether $\mathsf{LLIND}$ accepted for the bounded existential formulas in the language of $\PV$ proves the suitable quantifier exchange property. The generalised version about the separation of theories axiomatised by $\mathsf{LLIND}(\Sigma^b_{i+1})$ and, on the other hand, $\mathsf{LLIND}(\mathsf{strict}\Sigma^b_{i+1})$ has been a long-standing open problem mentioned in the works of Pollett \cite{pollett1997arithmetic,POLLETT1999189,pollett2018finite}.

Only a partial answer to the problem was known by Garlík, who shows by a model theoretic argument that on the first level of the hierarchy the corresponding theories are actually separated, assuming the existence of one-way permutations secure against polynomial-size circuits \cite{garlik2016construction}. Nonetheless, it was still open to prove this, under worst-case assumptions, or to generalise this to all levels of Buss's hierarchy.

\paragraph{Our Contributions.}
In this paper, motivated by the aforementioned open problems on separations of theories and the extensive use of the Student-Teacher witnessing in Bounded Arithmetic, we show the following:
\begin{enumerate}
    \item\textbf{Solutions to open problems about separations of theories:} We give conditional solutions for all levels of Buss's hierarchy to the two aforementioned open problems about fundamental separations of theories in Bounded Arithmetic \cite{clote1993open,pollett1997arithmetic}. The condition we use is the same as in \cite{KrajicekPT91} about the separation of theories $\mathsf{T}^i_2$ and $\mathsf{S}^i_2$. Our results extend the results of \cite{DBLP:journals/tocl/CookT06,garlik2015model}, who solved the two problems only on the first level of the hierarchy using stronger or seemingly incomparable assumptions.

    \item\textbf{Generalised separation of theories:} We provide a bigger family of different theories between $\PV_1$ and $\SOT$, which form hierarchies over variations of the length induction axioms and over the bounded replacement axioms. All these theories are separated under the assumption $\NP\not\subseteq \P/\poly$ (see \Cref{fig:theory-poset}). Again, these separations are generalised to all levels of Buss's hierarchy, where the assumption becomes $\Sigma^p_{i+1}\not \subseteq \Delta^p_{i+1}/\poly$.

    \item\textbf{Lifting unprovability results:} As an attempt towards Open Problems 5.1 and 5.8(d) in \cite{oli2025}, we lift the unprovability results $\PV_1 \nvdash \P \subseteq \mathsf{SIZE}[n^k]$ of Krajíček-Oliveira \cite{KO17}, and $\PV_1 \nvdash \mathsf{NSubExp} \not\subseteq \mathsf{Avg-} \mathsf{coNSIZE}[2^{n^\delta}]$ of Pich-Santhanam \cite{pichS21} to the theories $\PV_1+\mathsf{BB}(\Sigma^b_1)$ and $\PV_1+\mathsf{LLIND}(\mathsf{s}\Sigma^b_1)$, respectively. These theories, and also a theory in the intersection where both results hold, are strictly stronger than $\PV_1$, under $\NP\not\subseteq \P/\poly$, from the previous results.
    
    \item\textbf{Separations of Student-Teacher Games classes:} Our main technical tool is the study of Student-Teacher Games. For that, we classify total search problems in the polynomial hierarchy into new complexity classes we introduce, which are characterised by the number of rounds and the number of candidate solutions per round that a Student needs to solve the problem in the Student-Teacher Game manner. Then, building on the result of \cite{KrajicekPS90}, under the assumption that $\Sigma^p_i\not\subseteq \Delta^p_i/\poly$, we separate these classes for both the direction of rounds and the direction of parallel queries. These separations will be the core for the separation of different theories, while the fine-grained version of these classes will be the one that will improve the unprovability results.

    \item\textbf{Unifying witnessing theorems:} In order to connect these fine-grained Student-Teacher classes with the corresponding theories, we provide a model-theoretic proof that unifies all the witnessing theorems of this kind. To use this general result, we also provide the description of these classes in an appropriate theory.
\end{enumerate}

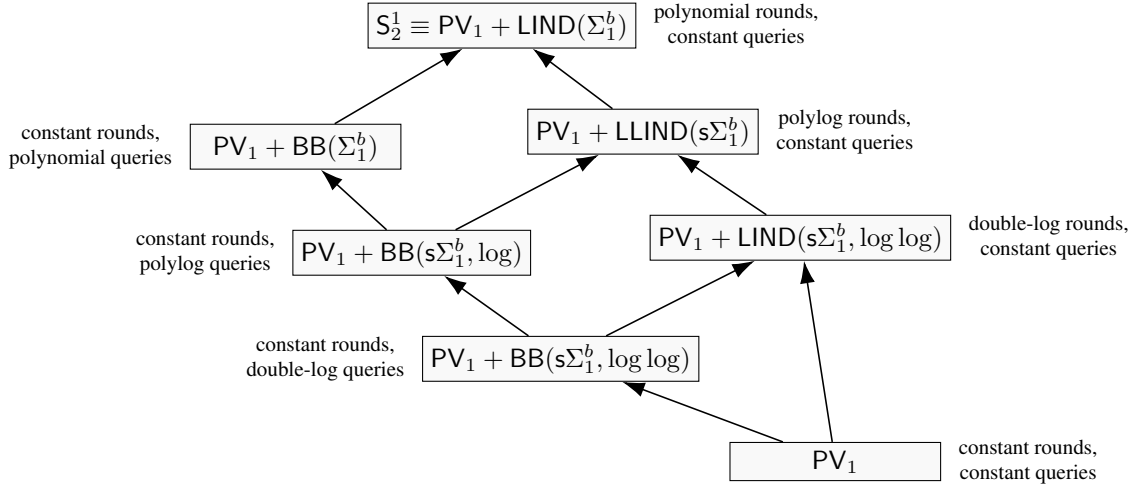
\begin{figure}[ht]
\centering
\begin{tikzpicture}[
    node distance=12mm and 20mm,
    theory/.style = {draw, rounded corners=0pt, top color=gray!5, bottom color=gray!5,
                     font=\small, minimum width=28mm, align=center, inner sep=3pt},
    arrow/.style = {-{Latex[length=3mm,width=2mm]}, line width=0.6pt},
    comment/.style = {font=\scriptsize, align=center}
  ]

  \node[theory] (LIND) {$\mathsf{S}^{1}_2 \equiv \PV_1 + \mathsf{LIND}(\Sigma^b_{1})$};
  \node[comment, right=1mm of LIND] {polynomial rounds, \\ constant queries};

  \node[theory, below left=10mm and 0.5mm of LIND, xshift = 5mm] (BBpoly) {$\PV_1 + \mathsf{BB}(\Sigma^b_{1})$};
  \node[comment, left=1mm of BBpoly] {constant rounds, \\ polynomial queries};

  \node[theory, below right=8mm and 0.5mm of LIND, xshift = -15mm] (LINDlog) {$\PV_1 + \mathsf{LLIND}(\mathsf{s}\Sigma^b_{1})$};
  \node[comment, right=1mm of LINDlog] {polylog rounds, \\ constant queries};

  \node[theory, below right=8mm and 0.5mm of LINDlog, xshift = -15mm] (LINDloglog) {$\PV_1 + \mathsf{LIND}( \mathsf{s}\Sigma^b_{1},\log\log)$};
  \node[comment, right=1mm of LINDloglog] {double-log rounds, \\ constant queries};

  \node[theory, below right=8mm and 0.5mm of BBpoly, xshift = -15mm] (BBlog) {$\PV_1 + \mathsf{BB}(\mathsf{s}\Sigma^b_{1},\log)$};
  \node[comment, left=1mm of BBlog] {constant rounds, \\ polylog queries};

  \node[theory, below right=8mm and 0.5mm of BBlog, xshift = -15mm] (BBloglog) {$\PV_1 + \mathsf{BB}(\mathsf{s} \Sigma^b_{1},\log\log)$};
  \node[comment, left=1mm of BBloglog] {constant rounds, \\ double-log queries};

  \node[theory, below right=8mm and 4mm of BBloglog] (PV1) {$\PV_1$};
  \node[comment, right=1mm of PV1] {constant rounds, \\ constant queries};

  % Inclusions (arrows)
  \draw[arrow] (PV1) -- (LINDloglog);
  \draw[arrow] (PV1) -- (BBloglog);
  \draw[arrow] (BBloglog) -- (BBlog);
  \draw[arrow] (BBloglog) -- (LINDloglog);   
  \draw[arrow] (BBlog) -- (BBpoly);
  \draw[arrow] (BBlog) -- (LINDlog);
  \draw[arrow] (BBpoly) -- (LIND);
  \draw[arrow] (LINDloglog) -- (LINDlog);
  \draw[arrow] (LINDlog) -- (LIND);

\end{tikzpicture}
\caption{Partial order of theories between $\PV_1$ and $\mathsf{S}^{1}_2$. There are two hierarchies, one with extensions axiomatised by length induction axioms (right), and one with extensions axiomatised by bounded replacement axioms (left). The levels are for the bounds $\poly,\log,\log\log$, etc.
Under the hardness assumption $\NP\not\subseteq \P/\poly$, all these theories are distinct. The arrows symbolise the only possible inclusions. }
\label{fig:theory-poset}
\end{figure}

\subsection{Our results}

In what follows, we give an overview of the main results from the paper.

\subsubsection{Student--Teacher Games} 
Our main technical tool is the computational framework provided by Student-Teacher Games, where we establish lower bounds against different variants of them. We start by considering them as a computational model for search problems in $\mathsf{TFPH}$. We adopt the definition of $\mathsf{TF\Sigma^p_{i+2}}$ for $i\geq 0$, from \cite{KleinbergKMP21}.

\begin{definition}
A relation $R(x, y)$ belongs to $\mathsf{TF\Sigma^p_{i+2}}$ if it is polynomially bounded and total (that is, for every $x$ there exists $y$ such that $(x, y)$ is in the relation), and there exist a predicate $\varphi\in \Delta^p_{i+1}$ and a polynomial $p(n)$ such that
\[
R(x, y) \iff \forall z \in \{0,1\}^{p(|x|)}\, \varphi(x, y, z).
\]
\end{definition}

We now define the notion of a \emph{Student--Teacher Game} and introduce the corresponding complexity classes.

\begin{definition}[Student--Teacher Game]
A $\mathsf{TF\Sigma^p_{i+2}}$ relation $R(x,y)$, specified by a polynomial-time predicate $\varphi\in \Delta^p_{i+1}$ and a polynomial $p(n)$, can be solved by a Student-Teacher Game with $r(n)$ rounds and $q(n)$ parallel queries per round if, for $n=|x|$, there exists a uniform function $f\in\FP^{\Sigma^p_i}$, representing a Student, that takes as input the number $x$ of size $n$, the index of the round $i\in [r(n)]$, and an additional input depending on the round, which has size $(i-1)\cdot p(n)\cdot q(n)$, and produces outputs $y^i_1, y^i_2, \dots, y^i_{q(n)}$, such that:
\begin{multline}
\forall z^1 = (z^1_1,\dots,z^1_q), \dots, z^{r} = (z^{r}_1,\dots,z^{r}_q) \;
\big(\exists j \in [q]\; \varphi(x,f(x,1)_j,z^1_j)\big)
\;\lor\\
\big(\exists j \in [q]\, \varphi(x,f(x,2,z^1)_j,z^2_j)\big)
\;\lor \dots \lor\;
\big(\exists j \in [q]\; \varphi(x,f(x,r,z^1,\dots,z^{r-1})_j,z^{r}_j)\big),
\nonumber
\end{multline}
where we abbreviate $r := r(n)$ and $q := q(n)$, and we consider these as functions computable in polynomial time for the sake of well-definedness.

The class $\mathcal{ST}^{\Sigma^p_i}[r(n),q(n)]$ consists of all $\mathsf{TF\Sigma^p_{i+2}}$ relations $R(x,y)$ that can be solved by a Student-Teacher Game with $r(n)$ rounds and $q(n)$ parallel queries per round.
\end{definition}

The interpretation of this computational model is as an interaction between a Student trying to witness the existential quantifier of the $\mathsf{TF\Sigma^p_{i+2}}$ problem and a Teacher that provides counterexamples for the next universal quantifier. Thus, the Student represented by the $\FP^{\Sigma^p_i}$ function $f$, at every round receives the input, the number of the round and the corresponding counterexamples provided by the Teacher for their previous attempts, and computes $q(n)$ new candidate witnesses. If the round is more than $r(n)$, then the Student must reject. Also, the Teacher's counterexamples can be checked by the Student using the $\Sigma^b_i$-oracle.

For Student-Teacher Games with one query per round, we have the following foundational theorem from \cite{KrajicekPS90}, where the framework was first introduced.

\begin{theorem}[\cite{KrajicekPS90}]\label{thm:sgall}
If $\NP \not\subseteq \P/\mathsf{poly}$, then for any sublinear, unbounded, increasing, polynomial-time function $r(m)$,
\[
\mathcal{ST}[r(m),1] \subsetneq \mathcal{ST}[r(m)+1,1].
\]
\end{theorem}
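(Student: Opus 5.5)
The plan is to exhibit an explicit $\mathsf{TF\Sigma^p_2}$ problem that a Student solves with $r(m)+1$ rounds and one query per round. We then show that any polynomial-time Student solving it in $r(m)$ rounds yields polynomial-size circuits for $\mathrm{SAT}$.

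\textbf{The problem.} An input $x$ of length $m$ is parsed as a tuple $(\phi_1,\dots,\phi_k)$ of $k:=r(m)+1$ CNF formulas, each of length about $n:=m/k$. Since $r$ is sublinear and unbounded, $n\to\infty$ and $k\to\infty$. A solution is a tuple $y=(a_1,\dots,a_k)$ of assignments, and
\[R(x,y)\iff \forall (i,z)\,\big(\phi_i(z)\rightarrow \phi_i(a_i)\big).\]
The inner predicate is polynomial time, and $R$ is total, so $R\in\mathsf{TF\Sigma^p_2}$.

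\textbf{Upper bound.} The Student first outputs arbitrary assignments. Each counterexample $(i,z)$ shows that $\phi_i$ is satisfiable and is not yet correctly answered. The Student sets $a_i:=z$, and this coordinate is never refuted again. After at most $k$ counterexamples all coordinates are correct, so $r(m)+1=k$ rounds suffice. This places $R$ in $\mathcal{ST}[r(m)+1,1]$.

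\textbf{Lower bound.} Suppose a polynomial-time Student $f$ solves $R$ in $k-1$ rounds. We build a Teacher strategy and use it to extract advice.

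The Teacher only ever plays pairs $(i,z)$ with $z$ satisfying $\phi_i$. Intuitively, with $k$ satisfiable hard formulas and only $k-1$ counterexamples, the Student must eventually produce a satisfying assignment for some $\phi_i$ on its own. The difficulty is that the Student chooses adaptively which coordinate to "solve", and it may rely on information contained in earlier counterexamples.

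The main obstacle is to turn this into a non-uniform $\mathrm{SAT}$ algorithm, which I would do by a nested advice, or "hard sequence", argument in the style of KPS. I would proceed by downward induction on the number $j$ of formulas still unanswered. At each level there is a dichotomy.

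\begin{enumerate}
\item Either there exists a fixed choice of $j-1$ satisfiable formulas, together with their assignments used as Teacher answers, that can serve as polynomial-size advice. In that case, for every satisfiable $\psi$ of length $n$, placing $\psi$ in the remaining slot and simulating the game forces the Student to output a satisfying assignment of $\psi$ within the allotted rounds. This gives a $\P/\poly$ algorithm for $\mathrm{SAT}$ at length $n$ (search-to-decision is free, since we simply check the output).
\item Or no such advice exists. Then for every partial configuration some satisfiable formula defeats it. That formula can be placed in a slot and handed a counterexample while preserving the invariant "rounds used $<$ formulas answered by the Teacher".
\end{enumerate}

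Iterating the second alternative $k$ times would build a tuple of $k$ satisfiable formulas on which the Teacher answers every round and the Student never wins within $k-1$ rounds. This contradicts correctness. Hence the first alternative must hold at some level, for infinitely many $n$, because $n$ is unbounded. This gives $\mathrm{SAT}\in\P/\poly$ infinitely often, which we upgrade to all lengths by padding the formulas, contradicting $\NP\not\subseteq\P/\poly$.

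I expect most of the care to go into the following points.
\begin{itemize}
\item Formulating the dichotomy so that the advice size stays polynomial. It is $k$ formulas of size $n$, which is at most $m$ and hence polynomial in $n$ because $r$ is sublinear.
\item Handling the Student's freedom to permute coordinates. We quantify over the slot position, with only $k$ choices.
\item Making the uniformity in $m$ versus $n$ work, using that $r$ is polynomial-time computable and increasing.
\end{itemize}
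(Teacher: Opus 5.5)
Your lower bound has a genuine gap: the dichotomy cannot be closed. Alternative~2 only says that every candidate advice is defeated by some satisfiable $\psi$. This does not iterate into a single tuple on which all $k$ slots are answered, because the Student's output already in round~$1$ depends on the whole tuple. Once you place the next formula, the entire run changes, and neither the earlier ``defeating'' formula nor the previously answered ones need be answered any more.

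In fact the dichotomy is false as a combinatorial statement, and your alternative~2 derives its contradiction from correctness alone. Take two formulas and two rounds (one counterexample). Fix a tournament on the satisfiable formulas of length $n$, and let the Student, on a pair, output in round~$1$ a correct assignment for the winner and a wrong one for the loser. The Teacher must answer the loser, and in round~$2$ the Student is right on both, so this Student is correct (its running time is irrelevant here). Yet a single advice formula $\sigma$ handles every $\chi$ only if $\sigma$ loses to every other formula, and a tournament need not have such a vertex. So alternative~1 fails at every level while no contradiction is available.

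What is true, and what the paper uses (the proof of Theorem~\ref{thm:st1} with $q=1$), is that polynomially many advice sets suffice. In the tournament picture this is a dominating set of size about $\log N$.
\begin{itemize}
\item Fix the Teacher: least refuted index, lexicographically first witness.
\item For every $k$-set of satisfiable formulas, listed in increasing order, only $k-1$ answers are given, so some member $x$ is never answered. Call $(S,x)$ good, where $S$ is the rest of the set.
\item Comparing $\binom{N}{k}$ with $\binom{N}{k-1}$, some $(k-1)$-set $S$ is good for at least $(N-k+1)/k$ formulas.
\item Remove those formulas and repeat. After $O(kn)$ steps only $O(kn)$ formulas are left, and these are hardwired.
\item On input $\chi$, simulate the game on the sorted tuple of $S_i\cup\{\chi\}$ for every $i$, answering from the advice witnesses. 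Accept iff some run outputs a satisfying assignment of $\chi$.
\end{itemize}
Goodness makes the simulation faithful, and soundness is free since outputs are checked.

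Separately, there is an off-by-one error in your upper bound. With $k$ satisfiable formulas the trivial Student needs $k$ counterexamples and hence $k+1$ rounds. So $k=r(m)+1$ only gives $R\in\mathcal{ST}[r(m)+2,1]$, and under the hypothesis $R\notin\mathcal{ST}[r(m)+1,1]$. Take $k=r(m)$ and refute $k$-round Students, i.e.\ Students receiving $k-1$ counterexamples. This is the pigeonhole situation your intuition actually describes.
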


The theorem can be easily generalised to $\mathcal{ST}^{\Sigma^p_i}[r(m)-1,1] \subsetneq \mathcal{ST}^{\Sigma^p_i}[r(m),1]$ under the assumption $\Sigma^p_{i+1}\not\subseteq\Delta^p_{i+1}/\poly$. However, from this theorem, it is not clear where this power of the extra round is coming from; the adaptivity through the interaction between the Student and the Teacher or the number of counterexamples that the Student is able to use.

Motivated by this question, we study the Student-Teacher Games with multiple (up to polynomially many) parallel queries per round and we extend the aforementioned result by establishing separations in both the \emph{round} and \emph{query} dimensions. That is both adaptivity through interaction and number of parallel queries (along the respective counterexamples) can give more power to the Student.

Our first result generalizes \Cref{thm:sgall} by showing that the separation between $r(m)$ and $r(m)+1$ rounds persists even when the Student may issue an arbitrary polynomial number of parallel queries. 
This demonstrates that \emph{adaptivity across rounds} is a crucial source of power in Student-Teacher interactions.

\begin{theorem}\label{thm:st1}
If $\Sigma^p_{i+1}\not\subseteq\Delta^p_{i+1}/\poly$, then for any sublinear, unbounded, increasing, polynomial-time function $r(m)$ and any unbounded, increasing, polynomial-time function $1\leq q(m) \leq \mathsf{poly}(m)$,
\[
\mathcal{ST}^{\Sigma^p_i}[r(m)+1,1] \not\subseteq\mathcal{ST}^{\Sigma^p_i}[r(m),q(m)].
\]
\end{theorem}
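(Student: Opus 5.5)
We would prove \Cref{thm:st1} in the contrapositive: assuming a Student for a carefully chosen problem in $\mathcal{ST}^{\Sigma^p_i}[r(m)+1,1]$ wins with only $r(m)$ rounds and $q(m)$ parallel candidates per round, we extract $\Delta^p_{i+1}/\poly$ circuits for a $\Sigma^p_{i+1}$-complete problem. We write it for $i=0$ (SAT, $\P/\poly$); the general case relativizes verbatim, with the Student's $\Sigma^p_i$ oracle and $\Pi^p_i$-checkable formulas in place of propositional ones.

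\textbf{The hard problem.} This is a KPS-style ``first satisfiable formula'' problem. Write $t=r(m)+1$. An input $x$ of length $m$ encodes formulas $\varphi_1,\dots,\varphi_t$, each of size about $n=m/t$; sublinearity of $r$ makes $n$ unbounded and polynomially related to $m$ up to the padding we choose. A candidate is $y=(k,a_1,\dots,a_{k-1})$. Its meaning is: for $j<k$, $a_j$ satisfies $\varphi_j$, and $\varphi_k$ is unsatisfiable (for $k=t+1$, only the first clause). A counterexample $z$ is a satisfying assignment of $\varphi_k$, so $R(x,y)\iff\forall z\,\psi(x,y,z)$ with $\psi$ polynomial time, and $R$ is total. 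The Student with one candidate per round plays round $j$ as $(j,a_1,\dots,a_{j-1})$, using the assignments returned so far, and wins within $t$ rounds. Hence $R\in\mathcal{ST}[r(m)+1,1]$.

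\textbf{The lower bound.} Suppose $f$ solves $R$ with $t-1$ rounds and $q=q(m)\le\poly(m)$ candidates per round. We build, by induction on the round $\ell$, nonuniform advice consisting of satisfiable formulas $\varphi_1,\dots,\varphi_{\ell}$ together with their assignments. Their role is to let a polynomial-time machine simulate the Teacher on the first $\ell$ rounds. A candidate $(k,\bar a)$ whose listed assignments fail can be refuted trivially. A candidate with $k\le\ell$ is refuted by the advised assignment. Once all advised formulas are satisfiable, any candidate with $k\le t$ that actually wins must point at an unsatisfiable formula at a position the advice does not control.

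The core step is a pigeonhole/averaging argument in the spirit of \cite{KrajicekPS90}. At round $\ell$, either there is a choice of satisfiable $\varphi_{\ell+1}$ (with witness) that forces the Student to ``waste'' that round, or else, for every satisfiable $\psi$ placed at the next free position, the Student's $q$ candidates in this round already commit to $k$-values whose refutation requires a satisfying assignment of $\psi$. In the second case, running $f$ with the advised prefix and $\psi$ in the free slot, and then checking which of the polynomially many candidates claim unsatisfiability of that slot, yields a polynomial-size circuit deciding SAT on size-$n$ formulas. An unsatisfiable $\psi$ must be accepted by some candidate, because the Student wins; a satisfiable one is never correctly claimed unsatisfiable. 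If the second case never occurs, after $t-1$ rounds the advice fixes $t-1$ satisfiable formulas and the Student has no rounds left to place the remaining slot correctly. Since the slots outnumber the rounds, this contradicts correctness.

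\textbf{Main obstacle.} The expected hard part is handling the $q$ parallel candidates. In one round the Student may spread candidates over many positions $k$, possibly including later slots, and the chosen formula $\psi$ also influences the later, adaptively computed candidates. The plan is to choose the free slot always as the \emph{last} one, $k=t$, keeping the intermediate slots for advice. We then argue that within one round a candidate with $k<t$ can only succeed if the prefix up to $k$ is already handled, which happens at most once per advised formula. The number of candidates being polynomial affects only the circuit size, not the count of rounds. This is exactly why polynomial parallelism cannot replace one extra round. Making this bookkeeping precise, so that each round ``consumes'' at most one new slot no matter how many parallel candidates are issued, is the step we expect to require the most care.
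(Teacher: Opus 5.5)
Your proposal has a genuine gap in the lower-bound argument, and the round count in the upper bound is off by one.

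\textbf{The round count.} If all $t$ slots are satisfiable, your one-candidate Student is refuted in rounds $1,\dots,t$. It wins only in round $t+1$, with $(t+1,a_1,\dots,a_t)$. Already for $t=1$, a one-round Student would have to output an assignment for every satisfiable formula. So with $t=r(m)+1$ slots your problem lies in $\mathcal{ST}[r(m)+2,1]$. Hardness against $t-1=r(m)$ rounds, which is all your ``slots outnumber rounds'' count aims at, then gives only a two-round gap. You need $t=r(m)$ slots and hardness against exactly $t$ rounds.

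\textbf{The core step.} It does not yield a sound nonuniform algorithm. Deciding $\psi$ by whether some candidate \emph{claims} the free slot unsatisfiable cannot work, because such claims are unverifiable and are exactly what the Teacher refutes. In fact your second-case hypothesis says every satisfiable $\psi$ is claimed unsatisfiable, so the circuit would misclassify all of them. In the other direction, an unsatisfiable $\psi$ need not be claimed in the current round, and simulating later rounds needs Teacher answers for slots your advice does not control.

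The dichotomy also never quantifies over the other slots, although the Student's output depends on them. A single fixed advised prefix cannot stop the Student from computing an advised assignment by itself. Once it does, its parallel candidates can make the Teacher reveal several slots in one round. So ``each round consumes one slot'' is exactly the unproved point.

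\textbf{What is missing.} Three ingredients.

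First, accept only on a verifiable event: the Student outputs a valid assignment for the unknown instance.

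Second, a lemma that on every tuple of satisfiable instances, some slot's assignment is output by the Student before the Teacher has revealed it. This does hold for your problem if the Teacher answers candidates with an invalid prefix by $0$. Without such a self-found assignment, only slots $1,\dots,\ell$ can be revealed by round $\ell$, whereas a win in round $\le t$ needs a valid $a_t$.

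Third, the KPS averaging. Put $x$ in the first self-found slot of the sorted tuple; then every $t$-subset of satisfiable instances yields a good pair $(S,x)$. Some $(t-1)$-set $S$ lies in at least $(N-t+1)/t$ good pairs, and iterated peeling leaves $O(n)$ such sets, with witnesses, as advice.

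The paper makes the second ingredient trivial by its choice of problem: ``find witnesses for every member of $L$ among $x_1,\dots,x_p$''. There a single counterexample $(i,z)$, with $i$ the least index of a member of $L$ witnessed by no candidate, refutes all $q$ candidates at once. Hence each round reveals at most one witness regardless of $q$.
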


The idea for the proof is that there exists a hard search problems, which can be solved in $r(m)+1$ rounds, but if the Student has one less round, even with many queries per round, they do not get enough information for the teacher. If the opposite is true, then we use this powerful Student to decide any $\Sigma^p_{i+1}$ problem with a $\Delta^p_{i+1}$-algorithm that uses polynomial-size advice. The proof is quite technical and it is resumed for \Cref{sec:st1}.

Our second result highlights the power gained from \emph{parallelism}; that is, from allowing the Student to make multiple simultaneous queries to the Teacher. 
Essentially, each additional query, which results to a counterexample, combined with one extra adaptive round, allows the Student to solve strictly more problems.

\begin{theorem}\label{thm:st2}
If $\Sigma^p_{i+1}\not\subseteq\Delta^p_{i+1}/\poly$, then for any sublinear, unbounded, increasing, polynomial-time function $2\leq q(m)\leq m$,
\[
\mathcal{ST}^{\Sigma^p_i}[2,q(m)-1] \subsetneq \mathcal{ST}^{\Sigma^b_i}[2,q(m)],
\]
and, more generally, for any sublinear, unbounded, increasing functions $1\leq r_1(m), q_1(m), r_2(m), q_2(m)$,
\[
\mathcal{ST}^{\Sigma^p_i}[1+r_1(m),q_1(m)] \not\subseteq \mathcal{ST}^{\Sigma^p_i}[1+r_2(m),q_2(m)]
\;\text{   whenever   }\; r_1(m)q_1(m) > r_2(m)q_2(m).
\]
\end{theorem}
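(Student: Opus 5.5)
We will prove both parts of \Cref{thm:st2} with one template. We exhibit a $\mathsf{TF\Sigma^p_{i+2}}$ problem whose natural Student uses the first round to \emph{harvest} a prescribed number $N$ of counterexamples. A hypothetical Student with a smaller counterexample budget is then turned into a $\Delta^p_{i+1}/\poly$ algorithm for a $\Sigma^p_{i+1}$-complete language, in the style of the argument of \cite{KrajicekPS90} behind \Cref{thm:sgall}.

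\textbf{The problem.} Fix a $\Sigma^p_{i+1}$-complete language $L=\{w:\exists u\,\psi(w,u)\}$ with $\psi\in\Pi^p_i$. On input $x$ of length $m$, read $x$ as a tuple $(w_1,\dots,w_N)$ of $N=N(m)$ instances, each of length about $m/N$. A solution $y$ is a tuple $(b_1,u_1,\dots,b_N,u_N)$ with these properties. Whenever $b_j=1$, $\psi(w_j,u_j)$ must hold; this is checked by the universal quantifier together with the $\Delta^p_{i+1}$ matrix. Whenever $b_j=0$, the universal quantifier must find no witness for $w_j$. Rewriting ``$w_j\notin L$'' as ``$\forall u\,\neg\psi$'' makes the relation $\forall z\,\varphi(x,y,z)$ with $\varphi\in\Delta^p_{i+1}$. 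The relation is total, since the correct bits together with witnesses form a solution.

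\textbf{The upper bound.} Take $N=r_1q_1$. Initially the Student does not know which $w_j$ belong to $L$. In each round it outputs $q_1$ candidates. Each candidate declares some subset of the still unresolved instances to be ``$1$'' (using the oracle to test witnesses it already holds) and the rest ``$0$''. The candidate asserting $0$ on a fresh instance either wins, or the Teacher's refutation reveals a witness for that instance. Arranged appropriately, $q_1$ parallel candidates resolve $q_1$ fresh instances per round. After $r_1$ rounds all $N$ instances are resolved, and one more round outputs the correct answer. This places the problem in $\mathcal{ST}^{\Sigma^p_i}[1+r_1,q_1]$. For the first claim take $r_1=1$ and $q_1=q$.

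\textbf{The lower bound.} This is the main obstacle. Suppose a Student $f$ solves the problem in $1+r_2$ rounds with $q_2$ queries per round, where $r_2q_2<N$. Follow the \cite{KrajicekPS90} strategy: consider the Teacher who answers each candidate with the lexicographically first counterexample. The run of the game is then a function of $x$, and it involves at most $r_2q_2$ counterexamples. We argue that on a suitable input, in which all but one instance is hard-coded as advice (including the witnesses for the members of $L$), the Student cannot have learned the status of every instance. At least one instance $w_j$ receives no counterexample carrying information about it. For such a coordinate, the Student's final correct bit $b_j$ must be computable by an $\FP^{\Sigma^p_i}$ procedure that simulates the game. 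In the simulation, the Teacher's answers on the other coordinates are supplied from advice, and the answers that concern $w_j$ are replaced by ``no counterexample'' whenever that is consistent.

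The delicate point is making this replacement sound. We would try an averaging and hybrid argument over the positions $j$: choose, as advice, a tuple of hard instances that is \emph{maximal} in the sense of \cite{KrajicekPS90}. This means a tuple on which the Student needs all its counterexamples, so that appending a new instance $w$ at the free position forces the Student's answer to reveal whether $w\in L$. The pigeonhole step $r_2q_2<N$ guarantees that such a maximal tuple of length at most $r_2q_2$ exists and has polynomial size. The $\Delta^p_{i+1}$ algorithm then decides $L$ on all inputs of length $m/N$ using this advice, contradicting $\Sigma^p_{i+1}\not\subseteq\Delta^p_{i+1}/\poly$.

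For the strict inclusion in the first claim, the containment $\mathcal{ST}^{\Sigma^p_i}[2,q-1]\subseteq\mathcal{ST}^{\Sigma^p_i}[2,q]$ is trivial, since the Student can pad with a duplicate candidate. Separation then follows from the general statement with $r_1=r_2=1$, $q_1=q$ and $q_2=q-1$.
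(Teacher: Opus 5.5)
There is a genuine gap, and it is in the choice of the separating problem. Your upper bound (``arranged appropriately, $q_1$ parallel candidates resolve $q_1$ fresh instances per round'') is false for the problem you define. A candidate $(b_1,u_1,\dots,b_N,u_N)$ must commit on every coordinate at once. If it sets $b_\ell=1$ on an instance for which you hold no witness, the Teacher refutes it by pointing at $\ell$, which reveals nothing. If it sets $b_\ell=0$ on an instance whose witness you already hold, the Teacher refutes it with that known witness. So the only informative candidate in a round is ``$1$ with witnesses on all resolved instances, $0$ on all unresolved ones''. All $q_1$ candidates of a round are therefore effectively identical, and the Teacher can refute all of them with a single witness for the least unresolved index in $L$. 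This is exactly the Teacher used in the proof of \Cref{thm:st1}. In fact your problem is the problem $\Phi$ from that proof with explicit membership bits: a Student for your version converts into one for $\Phi$ with the same numbers of rounds and queries. Hence, under the very hypothesis $\Sigma^p_{i+1}\not\subseteq\Delta^p_{i+1}/\poly$, it is not in $\mathcal{ST}^{\Sigma^p_i}[N,\poly(m)]$. Since $N=r_1q_1\geq r_1+1$ as soon as $q_1\geq 2$, it is not in $\mathcal{ST}^{\Sigma^p_i}[1+r_1,q_1]$ either. For the first claim, with $N=q\geq 2$ instances, it is not in $\mathcal{ST}^{\Sigma^p_i}[2,q]$. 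Your problem measures rounds, not rounds times queries, so it cannot witness the statement.

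What is missing is a local, disjunctive solution option, which the paper's $\Psi$ provides: output an index $i$ with $x_i\notin L$, or a tuple of witnesses for all $x_j$. A single index is now a legitimate candidate, and its only refutation is a witness for that very $x_i$. Splitting $[N]$ into $r$ blocks of size $q$ therefore extracts $q$ distinct witnesses per round. This puts $\Psi$ with $N=rq$ instances in $\mathcal{ST}^{\Sigma^p_i}[1+r,q]$; the full tuple is checked by the Student itself, since its universal quantifier is sharply bounded. For the lower bound, on inputs whose instances all lie in $L$, the Student can only win with the full tuple. With fewer than $N$ index-queries, some position $l$ is never queried, so the witness for $x_l$ is produced from Teacher answers about the other positions only.

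Your lower-bound sketch also needs repair at this point. Replacing the answers about $w_j$ by ``no counterexample whenever that is consistent'' is not something the simulator can decide, and the existence of your ``maximal tuple'' is asserted rather than proved. The paper instead reuses the good-pair counting from \Cref{thm:st1}. This yields $O(n\cdot N)$ sets $S_k$ such that every $x\in L$ of length $n$ either lies in the advice set $Q$ or forms a good pair with some $S_k$. The algorithm simulates the game on each $X_{(S_k,x)}$, taking Teacher answers from the advice witnesses and abandoning any run that would need an answer about position $l$. It accepts only if a verified witness for $x$ appears.
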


Note that in this theorem, we only consider Student-Teacher Games with more than one round, since otherwise there is no interaction with the Teacher. The idea of the proof is similar with \Cref{thm:st2}, but we use another kind of search problem. The full proof is in \Cref{sec:st2}.

\subsubsection{Witnessing with Student-Teacher Games}
The original motivation for defining Student--Teacher Games arises from the witnessing theorems of Bounded Arithmetic. 
The KPT Theorem, due to Krajíček, Pudlák, and Takeuti~\cite{KrajicekPT91}, is a Herbrand-type result for universal theories, which in our case applies to the theories $\PV_1$ and universal conservative extensions of $\mathsf{T}^i_2$. 
We first recall the general version.

\begin{theorem}[General KPT Theorem]
Let $\mathsf{T}$ be a universal theory in a $\mathcal{L}$, and let $\varphi$ be a $\Sigma_1$-$\mathcal{L}$-formula. 
Suppose that
\[
\mathsf{T} \,\vdash\, \forall x\, \exists y \, \forall z \;\varphi(x,y,z).
\]
Then there exists a finite sequence $s_1, \dots, s_k$ of $\mathcal{L}$-terms such that
\[
\mathsf{T} \,\vdash\,  \forall x,z_1,\dots,z_k\,
\bigl(\varphi(x,s_1(x),z_1) \lor \varphi(x,s_2(x,z_1),z_2) \lor \dots \lor \varphi(x,s_k(x,z_1,\dots,z_{k-1}),z_k)\bigr).
\]
\end{theorem}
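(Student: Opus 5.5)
We will prove the General KPT Theorem by contradiction, combining compactness with a model-construction argument in which an adversarial Teacher answers every term. The key requirement is that $\mathsf{T}$ is universal, so substructures of models of $\mathsf{T}$ are again models of $\mathsf{T}$.

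Suppose no finite sequence of terms works. Introduce a new constant $c$ for $x$ and a countable supply of new constants $d_1,d_2,\dots$ to serve as Teacher's counterexamples. Enumerate all $\mathcal{L}$-terms $t(x,z_1,\dots,z_m)$ in a list $t_1,t_2,\dots$ such that $t_k$ mentions at most the variables $z_1,\dots,z_{k-1}$. This is possible by padding: repeat terms and let each term appear at a sufficiently late position. Define $s_k := t_k$, substituting $c$ for $x$. Consider
\[
\Gamma = \mathsf{T}\cup\{\neg\varphi(c,t_k(c,d_1,\dots,d_{k-1}),d_k) : k\ge1\}.
\]
A finite subset of $\Gamma$ involves only $k\le K$. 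If it were inconsistent, then $\mathsf{T}\vdash \bigvee_{k\le K}\varphi(c,t_k(c,\bar d),d_k)$. Since the $d_k$ and $c$ are fresh constants, generalisation gives $\mathsf{T}\vdash\forall x,z_1,\dots,z_K\bigvee_{k\le K}\varphi(x,t_k(x,z_1,\dots,z_{k-1}),z_k)$. This is exactly a KPT disjunction with $s_k=t_k$, contradicting our assumption. By compactness, $\Gamma$ therefore has a model $M$.

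Let $N\subseteq M$ be the substructure generated by the interpretations of $c,d_1,d_2,\dots$; its elements are exactly the values of terms in these constants. Because $\mathsf{T}$ is universal, $N\models\mathsf{T}$. We now show $N\models\neg\exists y\forall z\,\varphi(c,y,z)$. Take any $b\in N$. Then $b=t(c,d_1,\dots,d_m)$ for some term $t$, and this term occurs as some $t_k$ with $k>m$, where the variables from $z_m$ up to $z_{k-1}$ are dummy. Since $M\models\Gamma$, we have $M\models\neg\varphi(c,b,d_k)$, and $d_k\in N$. Here $\varphi$ is only assumed $\Sigma_1$, and $\neg\varphi$ is $\Pi_1$, which is preserved downward from $M$ to $N$. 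Hence $N\models\neg\varphi(c,b,d_k)$, so $d_k$ witnesses $N\models\exists z\neg\varphi(c,b,z)$. Thus $N\not\models\forall x\exists y\forall z\,\varphi$, contradicting $\mathsf{T}\vdash\forall x\exists y\forall z\,\varphi$.

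We expect the main obstacle to be this last step, namely the downward transfer of $\neg\varphi$. If ``$\Sigma_1$'' were meant to include bounded quantifiers that are not closed under the term functions, preservation could fail. We would resolve this by reading $\varphi$ as open, or quantifier-free in a language such as that of $\PV$ where bounded quantifiers are eliminable by function symbols; in that setting satisfaction is absolute between $N$ and $M$. If $\varphi$ genuinely is existential, $\Pi_1$ preservation downward is exactly what substructures give, so the argument goes through as written. The remaining bookkeeping is the enumeration, which must ensure that every term is eventually answered with a fresh $d_k$ while respecting the variable dependencies.
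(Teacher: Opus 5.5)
Your proof is correct. $N\models\mathsf{T}$ by universality, and every $b\in N$ is the value of some $t_k$ placed after its variable indices, so $M\models\neg\varphi(c,b,d_k)$. This descends to $N$ because $\neg\varphi$ is universal, so your worry in the last paragraph is moot for existential $\varphi$. The only bookkeeping caveat is that listing all terms as $t_1,t_2,\dots$ presupposes a countable language. For arbitrary $\mathcal{L}$, first restrict to the finite sublanguage of $\varphi$ and of the finitely many axioms of $\mathsf{T}$ used in the given proof, and run your argument there.

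The paper does not prove this theorem; it is quoted from \cite{KrajicekPT91}. The closest argument it gives, for Theorem~\ref{theoremwitnessing}, follows the classical syntactic route:
\begin{itemize}
\item Herbrandize $z$ by a fresh function symbol $\alpha$ representing the Teacher.
\item Apply Herbrand's theorem to the universal theory in the expanded language, obtaining finitely many terms $t_i$ with $\bigvee_i\psi(x,t_i(x),\alpha(x,t_i(x)))$.
\item Merge these into one term by definition by cases.
\item Read off the Student from that term, each occurrence of $\alpha$ becoming a query to the Teacher.
\end{itemize}

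Your proof replaces $\alpha$ by the fresh constants $d_k$. It replaces the unwinding of nested $\alpha$-occurrences by your dependency-respecting enumeration. As a result the round structure of the KPT disjunction comes out directly, and nothing beyond universality of $\mathsf{T}$ is needed: no Herbrand function symbol and no closure of terms under definition by cases.

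What your route does not give is control over the shape of the Student. It only yields some finite number of single-answer rounds. Keeping $\alpha$ inside the language is what lets the paper combine the Herbrand step with the $\forall\exists$-conservativity results (Lemmas~\ref{lemmapvbbconservativity} and~\ref{lemmapvllind}) for non-universal theories such as $\PV_1+\mathsf{BB}(\Sigma^b_1)$. It also lets the paper read off the number of rounds and of parallel queries from how $\alpha$ occurs in the witnessing term.
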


The theory $\PV_1$ defined in \cite{KrajicekPT91} is a universal theory whose terms denote polynomial-time functions. In a similar manner, the theories $\PV_{i+1}$ are universal theories whose terms are the $\FP^{\Sigma^b_i}$-functions, but they are conservative over $\mathsf{T}^i_2$ (under the same language they prove the same theorems), thus we do not distinguish them here.
Combining the KPT theorem with the definition of Student-Teacher Games allows us to characterize the $\Sigma^b_{i+2}$-consequences of $\mathsf{T}^i_2$, or for $\PV_1$ if $i=0$.

\begin{theorem}
Let $i\geq0$ and $T$ be $\PV_1$ for $i=0$ and $\mathsf{T}^i_2$ otherwise. Let $\varphi$ be a $\Sigma^b_i$-formula \footnote{One can also apply the general KPT theorem to a $\exists \Pi^b_i$-formula $\varphi$ and with the quantifier $\forall^b z$ unbounded, obtaining a characterisation on a greater variety of consequences for $\mathsf{T}^i_2$; however, in this work we restrict attention to $\Sigma^b_{i+2}$-consequences only, which correspond to total problems in $\mathsf{TF\Sigma^p_{i+2}}$, which enables us to classify them in the classes $\mathcal{ST}^{\Sigma^b_i}[r(m),q(m)]$. It is noted that we allow the Student to be able to check the validity of the counterexamples in this definition of the Games.}, 
and suppose that 
\[
T \,\vdash\, \forall x\, \exists y \, \forall^b z \;\varphi(x,y,z).
\]
Then there exists a polynomial $p(n)$, given by the bound on $\forall^bz$, such that the relation $R(x,y)$\footnote{The relation $R(x,y)$ is guaranteed to be polynomially bounded by Parikh's Theorem \cite{parikh}, which states that if there is a proof with an unbounded existential quantifier, then a Bounded Arithmetic theory can also prove the same theorem with the quantifier being bounded.} characterized by $\varphi$ and $p(n)$ is total, and there exists a constant $k\in\N$ such that $R(x,y)\in \mathcal{ST}^{\Sigma^p_i}[k,1]$.
\end{theorem}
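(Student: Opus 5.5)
We combine Parikh's theorem with the General KPT Theorem applied to the universal theory $\PV_{i+1}$; for $i=0$ this is $\PV_1$ itself. For $i\ge 1$, $\PV_{i+1}$ is conservative over $\mathsf{T}^i_2$, and its function symbols denote $\FP^{\Sigma^p_i}$ functions.

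\emph{Step 1: bounding $y$ and totality.} By Parikh's theorem, from $T\vdash\forall x\,\exists y\,\forall^b z\,\varphi$ we get a term $t(x)$ with $T\vdash\forall x\,\exists y\le t(x)\,\forall^b z\,\varphi(x,y,z)$. Let $p(n)$ be a polynomial dominating the lengths of $t(x)$ and of the bound on $z$. Define $R(x,y)$ by $|y|\le p(|x|)\wedge\forall z\in\{0,1\}^{p(|x|)}\,\varphi'(x,y,z)$, where $\varphi'$ adds the bound condition on $z$ as a hypothesis. Since $\varphi\in\Sigma^b_i$, $\varphi'$ defines a $\Delta^p_{i+1}$ predicate. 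Totality of $R$ holds because the standard model satisfies $T$. So $R\in\mathsf{TF\Sigma^p_{i+2}}$.

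\emph{Step 2: reduction to an open matrix.} In $\PV_{i+1}$, every $\Sigma^b_i$ formula is equivalent to an open formula: for $i\ge1$ the characteristic function of a $\Sigma^b_i$ predicate is a $\PV_{i+1}$ function symbol, and for $i=0$ we use $\varphi\in\Sigma^b_0$ with the $\PV_1$ elimination of sharply bounded quantifiers. This gives an open $\psi$ with $\PV_{i+1}\vdash\forall x\,\exists y\,\forall z\,\psi(x,y,z)$, where the bounds on $y$ and $z$ are folded into $\psi$ as implications. Here we use conservativity of $\PV_{i+1}$ over $\mathsf{T}^i_2$. This equivalence step is the main obstacle: for $i\ge1$ one must check that $\PV_{i+1}$ proves $\varphi\leftrightarrow\psi$ and inherits the theorem. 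If that is awkward, we would apply KPT directly with $\psi$ a $\Sigma_1$ formula, which the general theorem allows.

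\emph{Step 3: KPT and the Student.} The General KPT Theorem yields terms $s_1,\dots,s_k$ with
\[\PV_{i+1}\vdash\forall x,z_1,\dots,z_k\,\bigvee_{j=1}^k\psi(x,s_j(x,z_1,\dots,z_{j-1}),z_j).\]
These terms are $\FP^{\Sigma^p_i}$ functions, and the disjunction is true in the standard model. We define the Student $f(x,j,z_1,\dots,z_{j-1})$ to be $s_j(x,z_1,\dots,z_{j-1})$, truncated to length $p(|x|)$ if needed; truncation is harmless since any $s_j$ value satisfying $\psi$ satisfies the $y$-bound. We pad with a trivial output for rounds beyond $k$. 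Counterexamples $z_j$ of length $p(n)$ are exactly the Teacher's replies. The displayed validity is then precisely the winning condition of the Student--Teacher Game with $k$ rounds and one query per round. Hence $R\in\mathcal{ST}^{\Sigma^p_i}[k,1]$.
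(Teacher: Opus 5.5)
Your proposal is correct and follows the route the paper indicates: Parikh's theorem, passing to the universal theory $\PV_{i+1}$ (conservative over $\mathsf{T}^i_2$), applying the General KPT Theorem, and reading the terms $s_1,\dots,s_k$ as a $k$-round, one-query $\FP^{\Sigma^p_i}$ Student; the step you flag as the main obstacle is the standard fact that $\Sigma^b_i$ formulas are $\PV_{i+1}$-equivalent to open formulas, which the paper takes for granted. Two small slips, neither of which you need to rely on: the bound $y\le t(x)$ must enter $\psi$ as a conjunct rather than an implication (your truncation argument already assumes this), and the $\Sigma_1$ fallback would not apply literally to $\Sigma^b_i$ for $i\ge 2$.
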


The theories $\mathsf{S}^i_2$ are not universal, but building on the KPT theorem we can obtain an analogous result.

\begin{theorem}[$\mathsf{S}^i_2$-analogue to KPT \cite{krajivcek1992no,Pudlak92}]
Let $\varphi$ be a $\Sigma^b_i$-formula\footnote{Again, this theorem can be shown for $\varphi\in \Sigma^b_{i+1}$ and unbounded $\forall z$, too.}, and suppose that 
\[
\mathsf{S}^{i+1}_2\,\vdash\, \forall x\, \exists y \, \forall^b z \;\varphi(x,y,z).
\]
Then there exists a polynomial $p(n)$, given by the bound on $\forall^bz$, such that the relation $R(x,y)$ characterized by $\varphi$ and $p(n)$ is total, and there exists a polynomial $r(n)$ such that $R(x,y)\in \mathcal{ST}^{\Sigma^p_i}[r(n),1]$.
\end{theorem}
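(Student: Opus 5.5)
We prove the $\mathsf{S}^{i+1}_2$-analogue of KPT. The plan is to pass to the universal theory $\PV_{i+1}$, apply the General KPT Theorem to an enlarged formula, and then unfold the resulting terms into a Student that runs polynomially many rounds.

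\textbf{Step 1 (reduction to $\mathsf{T}^i_2$ via conservativity).} Since $\mathsf{S}^{i+1}_2$ is $\forall\Sigma^b_{i+1}$-conservative over $\mathsf{T}^i_2$ but not $\forall\Sigma^b_{i+2}$-conservative, we cannot simply invoke the previous theorem. Instead we use the standard fact that $\mathsf{S}^{i+1}_2$ is contained in $\mathsf{T}^i_2$ extended by the $\Sigma^b_{i+1}$-$\mathsf{LIND}$ scheme. We then trade each instance of $\Sigma^b_{i+1}$-$\mathsf{LIND}$ used in the proof for an explicit search.

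Concretely, write the provable sentence as $\forall x\,\exists y\,\forall^b z\,\varphi$. Every instance of $\mathsf{LIND}$ used has the form $\psi(0)\wedge\forall u<|t|\,(\psi(u)\to\psi(u+1))\to\psi(|t|)$ with $\psi\in\Sigma^b_{i+1}$. After prenexing, $\psi$ has the shape $\exists w\,\theta(u,w)$ with $\theta\in\Pi^b_i$. The whole provable statement, with these instances added as hypotheses, becomes a $\forall\exists\forall$ statement provable in the universal theory $\PV_{i+1}$. Its existential part bundles $y$ together with a "failure index" $u\le|t|$ and witness $w$, and its universal part bundles $z$ together with refutations.

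\textbf{Step 2 (KPT, then unfolding).} Apply the General KPT Theorem in $\PV_{i+1}$ to this enlarged formula. This gives constantly many $\FP^{\Sigma^p_i}$ terms. The Student uses these terms to work through the length-induction instance step by step: it binary-searches or linearly walks $u=0,1,\dots,|t|$. At each step it uses the counterexamples the Teacher supplies to a candidate $y$ to extract a witness $w$ for $\psi(u+1)$ from a witness for $\psi(u)$. Each step costs a constant number of rounds, and $|t|$ is polynomial in $n$, so the total number of rounds is polynomial, $r(n)=O(|t|)$.

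This is the Krajíček / Pudlák argument. Alternatively we can cite it directly, since the statement is attributed to \cite{krajivcek1992no,Pudlak92}, and only check that the resulting Student matches our definition of $\mathcal{ST}^{\Sigma^p_i}[r(n),1]$. Two points need checking:
\begin{itemize}
\item the Student may verify Teacher replies, because $\varphi\in\Sigma^b_i$ is decidable with the $\Sigma^p_i$ oracle;
\item totality and polynomial boundedness follow from Parikh's theorem, as in the previous theorem.
\end{itemize}

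\textbf{Main obstacle.} The hard part is the inductive step in Step 2. The witness $w$ for $\psi(u)$ lives under a $\Pi^b_i$ matrix, so a naive step needs a $\Sigma^p_{i+1}$ oracle to certify it. The fix is that the Student never certifies $w$. It proposes a $y$ derived from the current $w$, and any Teacher refutation $z$ is converted by the KPT terms into either a better $w$ at the same level or progress to level $u+1$. We then argue by a counting/potential argument on $u$ that this process terminates within polynomially many rounds.
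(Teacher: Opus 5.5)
\textbf{There is a genuine gap, starting in Step 1.} When you move the $\Sigma^b_{i+1}$-$\mathsf{LIND}$ instances into the hypothesis, it is their \emph{negation} that lands on the existential side. With $\psi=\exists w\,\theta$ and $\theta\in\Pi^b_i$, the negation is $\psi(0)\wedge\forall u<|t|\,(\neg\psi(u)\vee\psi(u+1))\wedge\neg\psi(|t|)$. This prenexes to $\exists w_0\,\forall u<|t|\,\exists w'\,\forall w,w''(\ldots)$. So the enlarged sentence is $\forall\exists\forall\exists\forall$, not $\forall\exists\forall$, and the General KPT Theorem does not apply in $\PV_{i+1}$.

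The ``failure index'' $u$ that you bundle into the existential part is in fact universally quantified there. It is existential only in $\mathsf{LIND}$ itself, i.e.\ when one \emph{proves} the scheme, not when one uses it. Pulling the sharply bounded $\forall u$ in front of $\exists w'$ would require $\mathsf{BB}(\Sigma^b_{i+1})$, which $\PV_{i+1}$ lacks. Consequently there are no KPT terms that turn a witness for $\psi(u)$ plus Teacher refutations of some $y$ into a witness for $\psi(u+1)$, and Step 2 has nothing to unfold. Even a full Herbrand analysis of the five-block sentence only yields a game in which the Teacher also supplies the induction challenges $u,w,w''$. Showing how the Student simulates those moves against the real Teacher within polynomially many rounds is the actual content of the theorem.

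The termination claim is also unsupported. A potential on $u$ gives no bound on how many rounds are spent producing ``a better $w$ at the same level'', and that per-level bound is exactly where the polynomial round count must come from. Two smaller inaccuracies:
\begin{itemize}
\item Certifying $w$ is not the obstacle, since $\theta\in\Pi^b_i$ is decidable with a $\Sigma^p_i$ oracle; the difficulty is \emph{finding} the next witness.
\item It is not known that $\mathsf{S}^{i+1}_2$ fails to be $\forall\Sigma^b_{i+2}$-conservative over $\mathsf{T}^i_2$; that is equivalent to $\mathsf{T}^i_2\neq\mathsf{S}^{i+1}_2$.
\end{itemize}

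\textbf{The missing idea} is to witness the induction \emph{step} itself with bounded access to the Teacher and then iterate it, rather than applying KPT once to the end formula. The paper only cites this theorem, but its machinery in Section~\ref{sec:witnessing} recovers it as Theorem~\ref{thm:lindwitness} with $\bd$ the identity. The route is as follows:
\begin{itemize}
\item Add a fresh function symbol $\alpha$ for the Teacher and Herbrandize to $\forall x\,\exists y\,\varphi(x,y,\alpha(x,y))$.
\item Work in the universal theory $\PV_{i+1}(\to,\alpha,\bd)$. Its function symbols are closed under limited recursion on notation for polynomially many steps, so one term may call $\alpha$ sequentially $\poly(|x|)$ times.
\item Lemma~\ref{lemmapvllind} shows that adding $\mathsf{LIND}(\mathsf{s}\Sigma^b_{i+1}(\alpha))$ is $\forall\exists$-conservative. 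In the Herbrand-saturated extension $\hat M$ of Theorem~\ref{theoremherbrandsaturated}, the step $\exists w\,\theta(u,w)\to\exists w'\,\theta(u+1,w')$ is realized by a single term $s(u,w)$, obtained by merging the Herbrand terms via definition by cases. Iterating $s$ by limited recursion along $u\le|t|$ yields the witness at $|t|$.
\item Theorem~\ref{theoremwitnessing} then gives one term $t$ with $\varphi(x,t(x),\alpha(x,t(x)))$. Interpreting $\alpha$ as the Teacher yields the Student, and the polynomial round bound is read off from the recursion in $t$, with no potential argument.
\end{itemize}
Note also that $\alpha$ appears only in substituted terms, never under bounded quantifiers. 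A naive relativized Buss witnessing for $\mathsf{S}^{i+1}_2(\alpha)$ would, for $i\geq 1$, let the $\Sigma^p_i(\alpha)$ oracle query the Teacher, which is not a Student--Teacher game.
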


As we see, the $\Sigma^b_{i+2}$-consequences of the theories $\mathsf{T}^i_2$ correspond to total search problems in the class $\mathcal{ST}^{\Sigma^p_i}[O(1),1]$, whereas the $\Sigma^b_{i+2}$-consequences of $\mathsf{S}^{i+1}_2$ correspond to total search problems in the class $\mathcal{ST}^{\Sigma^p_i}[\mathsf{poly}(n),1]$. Therefore, there is a fundamental difference between these theories.

Following the intermediate classes of Student-Teacher Games defined above, we can also consider theories between $\mathsf{T}^i_2$ and $\mathsf{S}^{i+1}_2$ whose $\Sigma^b_{i+2}$-consequences are captured by the corresponding classes of Student-Teacher Games, thus having more fine-grained separations of theories.

It is known that $\mathsf{T}^{i}_2\subseteq \mathsf{S}^{i+1}_2$, but if we weaken the axioms of $\mathsf{S}^{i+1}_2$ to get weaker theories corresponding to weaker classes of Student-Teacher Games, it may be the case that these theories do not include $\mathsf{T}^{i}_2$ any more. That is why we work with $\mathsf{T}^{i}_2$ as the base theory, and define all other theories as its extensions. To begin with, $\mathsf{S}^{i+1}_2$ can be regarded as an extension of $\mathsf{T}^{i}_2$ by the Length Induction Axiom Scheme for $\Sigma^b_{i+1}$ formulas, denoted $\mathsf{LIND}(\Sigma^b_{i+1})$. 
The formal definition of the axiom for some $\varphi\in \Sigma^b_{i+1}$ is 
\[
\mathsf{LIND}(\varphi) := 
\forall x \Big[(\varphi(0) \land \forall n < |x| \, (\varphi(n)\rightarrow \varphi(n+1))) \rightarrow \varphi(|x|)\Big].
\]

This scheme is analogous to standard induction, but the induction is carried out on the length of the variable rather than on the variable itself. 
Since $\varphi$ is $\Sigma^b_{i+1}$, the axiom $\mathsf{LIND}(\varphi)$ is a $\forall\Sigma^b_{i+2}$ sentence, which can be trivially witnessed in $\mathcal{ST}^{\Sigma^b_i}[\poly(|x|),1]$. We can easily see that if the Student starting from $0$ tries to find a contradiction in $\varphi(n)\to \varphi(n+1)$ step by step by acquiring the witnesses for the existential quantifier for $\varphi$ from the Teacher.

We can also define a variant where induction proceeds up to the double length of the number:
\[
\mathsf{LLIND}(\varphi) := 
\forall x \Big[(\varphi(0) \land \forall n < ||x|| \, (\varphi(n)\rightarrow \varphi(n+1))) \rightarrow \varphi(||x||)\Big].
\]
There are two cases to consider here. If $\varphi$ is a \emph{strict} $\Sigma^b_{i+1}$ formula (denoted $\mathsf{s}\Sigma^b_{i+1}$), this means that all the sharply bounded quantifiers are after the bounded quantifiers (which will make the last part verifiable in polynomial time). In this case, this $\forall\Sigma^b_{i+2}$ axiom can be witnessed in $\mathcal{ST}^{\Sigma^p_i}[\log(|x|),1]$, in the same way as in length induction. 
However, if $\varphi$ is not strict — i.e., it contains sharply bounded quantifiers preceding the outer bounded existential quantifier — then polynomial number of parallel versions of the game are required to eliminate all the cases indicated by the sharply bounded quantifiers. This implies an upper bound for the witnessing of $\varphi$ in the class $\mathcal{ST}^{\Sigma^p_i}[\mathsf{polylog}(|x|),\mathsf{poly}(|x|)]$. 
We see that these $\mathsf{poly}(|x|)$ parallel queries are unnecessary in the case of $\mathsf{LIND}$, as they can be simulated by the polynomially many rounds available in that setting.

This distinction between strictly $\Sigma^b_{i+1}$ and general $\Sigma^b_{i+1}$ corresponds to the exchange property for sharply bounded quantifiers  guaranteed by Bounded Replacement Axiom Scheme ($\mathsf{BB}(\Sigma^b_{i+1})$). 
The $\mathsf{BB}$ axiom for a formula $\varphi\in \Sigma^b_{i+1}$, is defined as follows:
\[
\mathsf{BB}(\varphi) := 
\forall x,t \Big[
(\forall j \leq |x| \, \exists y \leq t \; \varphi(j,y))
\;\rightarrow\;
(\exists w \, \forall j \leq |x| \, \varphi(j, w_j))
\Big],
\]
where $w$ encodes a sequence of length $|x|$ with numbers less than $t$. 
When a theory includes the axiom $\mathsf{BB}(\Sigma^b_{i+1})$, one can show that for any $\Sigma^b_{i+1}$ formula $\psi$, there exists an equivalent strict $\Sigma^b_{i+1}$ formula $\psi'$, which is provable in the theory.

The axioms $\mathsf{BB}(\Sigma^b_{i+1})$ are $\forall\Sigma^b_{i+2}$ sentences that can be witnessed in $\mathcal{ST}^{\Sigma^p_i}[O(1),\mathsf{poly}(n)]$. 
Moreover, they are provable in $\mathsf{S}^{i+1}_2$ and $\mathsf{T}^i_2+\mathsf{LLIND}(\Sigma^b_{i+1})$ \cite{ALLEN19911}, which aligns with the intuition we have from the Student-Teacher Games.

It is proven in \cite{DBLP:journals/tocl/CookT06} that the $\Sigma^b_2$-consequences of $\PV_1+\mathsf{BB}(\Sigma^b_1)$ correspond to total search problems in $\mathcal{ST}[O(1),\mathsf{poly}(n)]$. The proof can be generalised to cover the $\Sigma^b_{i+2}$-consequences of $\mathsf{T}^i_2+\mathsf{BB}(\Sigma^b_{i+1})$, which give us total search problems in $\mathcal{ST}^{\Sigma^p_i}[O(1),\mathsf{poly}(n)]$.

Cook and Thapen also introduced fine-grained variants of $\mathsf{BB}(\Sigma^b_0)$ parametrised by different growth rates of lengths. For that we can use any sublinear, increasing function $\bd(x)$ provably total in $\PV_1$, and we get the axiom:
\[
\mathsf{BB}(\varphi,\bd):=  \forall x,t \Big[(\forall j \leq \bd(|x|) \, \exists y \leq t \; \varphi(j,y))\;\rightarrow\;(\exists w \, \forall j \leq \bd(|x| )\, \varphi(j, w_j)) \Big].
\]
They show that witnessing in $\PV_1+\mathsf{BB}(\Sigma^b_0,\bd)$ corresponds to total search problems in $\mathcal{ST}[O(1),\mathsf{poly}(\bd(n))]$, and they separate all these different theories assuming the hardness of factoring against probabilistic algorithms.

In this work, we also consider a fine-grained version of the length induction axioms, denoted $\mathsf{LIND}(\Sigma^b_{i+1}, \bd)$, and defined as:
\[
\mathsf{LIND}(\varphi,\bd):= \forall x \Big[\big(\varphi(0) \land \forall n < \bd( |x|)\, (\varphi(n)\rightarrow \varphi(n+1))\big) \rightarrow \varphi(\bd(|x|))\Big].
\]
In this way, $\mathsf{LLIND}(\Sigma^b_{i+1})$ is the same as $\mathsf{LIND}(\Sigma^b_{i+1},\log)$ and $\mathsf{LIND}(\Sigma^b_{i+1})$ is the same as $\mathsf{LIND}(\Sigma^b_{i+1},\poly)$.

We aim to separate all the theories which extend $\mathsf{T}^i_2$ with these different axioms, by showing that their $\Sigma^b_{i+2}$-consequences correspond to the natural Student-Teacher Game class. We already know these for $\mathsf{T}^i_2, \,\mathsf{S}^{i+1}_2$ and $\mathsf{T}^i_2+\mathsf{BB}(\Sigma^b_{i+1})$. However, in this work, using ideas from \cite{AVIGAD2002219,thapen2002weak,krajivcek1992no}, we provide a generalised model-theoretic argument that unifies all these witnessing theorems, and enables us to show the witnessing for all these extensions. This framework has the potential to be used in various situations to facilitate the proof of other witnessing theorems, too.

Informally, we take a theory $T$ which is an extension of $\mathsf{T}^i_2$ by an axiomatic scheme with formula complexity in $\Sigma^b_{i+2}$. We also suppose that the axiomatic scheme can be witnessed by an $\FP^{\Sigma^b_i}$-Student in a specific class of Student-Teacher Games that must be closed under definition by cases and finite composition. Then the $\Sigma^b_{i+2}$-consequences of $T$ can also be witnessed within the same class of Student-Teacher Games. The detailed exposition is in \Cref{sec:witnessing}.

\subsubsection{Separations of Theories}

We may thus consider all extensions of $\mathsf{T}^i_2$ by the different axiomatic schemes $\mathsf{LIND}(\mathsf{s}\Sigma^b_{i+1},\bd)$ and $\mathsf{BB}(\mathsf{s}\Sigma^b_{i+1},\bd)$. Each one admits a witnessing theorem within the naturally corresponding $\mathcal{ST}^{\Sigma^p_i}$ class. 
Applying the ideas underlying \Cref{thm:st1,thm:st2}, we obtain the following separations.

\begin{theorem}\label{thm:bbsep}
    If $\Sigma^p_{i+1}\not\subseteq \Delta^p_{i+1}/\poly$, then for all increasing $\PV$ functions $\bd_1(x), \bd_2(x) \leq x$, such that for any $k\in\N$ and for large enough $x$, $\bd_1(x)\geq \bd_2(x)^k$, $\mathsf{T}^i_2+\mathsf{BB}(\mathsf{s}\Sigma^b_{i+1},\bd_2)\not\vdash \mathsf{BB}(\mathsf{s}\Sigma^b_{i+1},\bd_1)$, or equivalently, $\mathsf{T}^i_2+\mathsf{BB}(\mathsf{s}\Sigma^b_{i+1},\bd_1)\not\subseteq \mathsf{T}^i_2+\mathsf{BB}(\mathsf{s}\Sigma^b_{i+1},\bd_2)$.
\end{theorem}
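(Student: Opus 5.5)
Assume towards a contradiction that $\mathsf{T}^i_2+\mathsf{BB}(\mathsf{s}\Sigma^b_{i+1},\bd_2)\vdash \mathsf{BB}(\mathsf{s}\Sigma^b_{i+1},\bd_1)$. The plan has three steps: a witnessing upper bound for the weaker theory, a concrete instance of $\mathsf{BB}(\mathsf{s}\Sigma^b_{i+1},\bd_1)$ that defines a hard $\mathsf{TF}\Sigma^p_{i+2}$ search problem, and a Student-Teacher lower bound in the style of \Cref{thm:st2}.

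\textbf{Step 1 (upper bound for the weaker theory).} First I would establish that the $\forall\Sigma^b_{i+2}$-consequences of $\mathsf{T}^i_2+\mathsf{BB}(\mathsf{s}\Sigma^b_{i+1},\bd_2)$ lie in $\mathcal{ST}^{\Sigma^p_i}[O(1),\poly(\bd_2(n))]$. To do this I would invoke the general witnessing theorem announced for \Cref{sec:witnessing}. Its hypotheses need two checks.

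\begin{enumerate}
\item Each axiom $\mathsf{BB}(\varphi,\bd_2)$ with $\varphi\in\mathsf{s}\Sigma^b_{i+1}$ is witnessed in $\mathcal{ST}^{\Sigma^p_i}[2,\bd_2(n)+1]$. In round one the Student outputs the trivial candidate $w=0$. The Teacher's counterexample is then an index $j$ with $\varphi(j,w_j)$ false; since we are inside the contrapositive, he must actually certify, for each $j\le\bd_2(|x|)$, either a $y_j\le t$ with $\varphi(j,y_j)$ or a failure of the premise. So the Student queries all $\bd_2$ coordinates in parallel, collects the $y_j$ (strictness lets the $\Sigma^p_i$-oracle verify them), and assembles $w$ in round two.
\item The class $\mathcal{ST}^{\Sigma^p_i}[O(1),\poly(\bd_2)]$ is closed under definition by cases and finite composition. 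This holds because constant rounds compose to constant rounds, and polynomial blowups of $\poly(\bd_2)$ remain $\poly(\bd_2)$.
\end{enumerate}

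\textbf{Step 2 (the hard instance).} Next I would pick a specific instance of $\mathsf{BB}(\mathsf{s}\Sigma^b_{i+1},\bd_1)$ defining a problem that requires about $\bd_1(n)$ counterexamples. Modelling on the search problem behind \Cref{thm:st2}, the instance is as follows. The Student's input $x$ codes $\bd_1(n)$ independent instances $\theta_1,\dots,\theta_{\bd_1}$ of a $\Sigma^p_{i+1}$-complete problem, each padded with a trivial alternative so that $\forall j\,\exists y\le t\,\varphi(j,y)$ holds. A valid $w$ must correctly give, for every $j$, either a witness of $\theta_j$ or the trivial answer certified by the universal part. The resulting $\mathsf{TF}\Sigma^p_{i+2}$ relation $R$ is solvable with $2$ rounds and $\bd_1$ queries. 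Under the assumption it would then also lie in $\mathcal{ST}^{\Sigma^p_i}[k,\bd_2(n)^c]$ for constants $k,c$.

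\textbf{Step 3 (the lower bound, main obstacle).} Finally I would show that $R\notin\mathcal{ST}^{\Sigma^p_i}[k,\bd_2^c]$ unless $\Sigma^p_{i+1}\subseteq\Delta^p_{i+1}/\poly$. The Student receives at most $k\bd_2^c$ counterexamples in total, and $\bd_1\ge\bd_2^{kc+1}$ for large inputs, so most of the $\bd_1$ coordinates never receive information from the Teacher.

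I would follow the \cite{KrajicekPS90} advice argument to exploit this. Fix $n$ and choose, greedily, a polynomial-size advice string consisting of ``hard'' instances that fill all but one coordinate, together with the Teacher's answers on them. A $\Delta^p_{i+1}$ machine with this advice then decides a fresh $\Sigma^p_{i+1}$ instance $\theta$ as follows:

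\begin{enumerate}
\item Place $\theta$ in a free coordinate.
\item Simulate the Student, answering each query with the advised counterexamples or with ones found via the $\Sigma^p_i$ oracle.
\item Read off whether the Student's final $w$ supplies a genuine witness for $\theta$.
\end{enumerate}

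The delicate point is the counting, or pigeonhole, argument guaranteeing that such advice exists and has polynomial size. Concretely, one must show that when the Student's total counterexample budget is $k\bd_2^c<\bd_1$, some coordinate can always be forced to be answered ``for free'' by a Student that never learned a counterexample there. This must work uniformly across the $k$ adaptive rounds, since the Teacher's answers to later rounds depend on earlier outputs.

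I would try an induction on $k$, peeling off one round at a time via advice, exactly as in the proof of \Cref{thm:st2} with $r_1q_1=\bd_1>k\bd_2^c=r_2q_2$. The aim is to reduce the statement directly to the second part of \Cref{thm:st2}, checking that its hypothesis tolerates the non-constant $q$'s given here.

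The contradiction with $\Sigma^p_{i+1}\not\subseteq\Delta^p_{i+1}/\poly$ yields non-provability. The equivalent formulation follows because $\mathsf{BB}(\mathsf{s}\Sigma^b_{i+1},\bd_2)$ is provable from $\mathsf{BB}(\mathsf{s}\Sigma^b_{i+1},\bd_1)$ by padding, since $\bd_2\le\bd_1$.
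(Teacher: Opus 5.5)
\textbf{Step 2 is a genuine gap: the sentence you construct is not a usable instance of $\mathsf{BB}(\mathsf{s}\Sigma^b_{i+1},\bd_1)$.} Steps 1 and 3 follow the paper's route: Step 1 is exactly \Cref{thm:bbwitness}, and Step 3 is the counting argument from the proof of \Cref{thm:st2}.

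You pad each coordinate so that the premise $\forall j\,\exists y\le t\,\varphi(j,y)$ always holds. You then demand that every $w_j$ be either a witness for $\theta_j$ or a ``trivial answer'' that is correct only when $\theta_j\notin L$. That is the search problem $\Phi$ of \Cref{thm:st1}, not an instance of $\mathsf{BB}(\mathsf{s}\Sigma^b_{i+1},\bd_1)$.
\begin{itemize}
\item The clause ``trivial and $\theta_j\notin L$'' is $\Pi^b_{i+1}$, so $\varphi$ is not (strict) $\Sigma^b_{i+1}$.
\item The problem is not solvable in $2$ rounds with $\bd_1$ queries. A candidate is a whole tuple $w$ and the Teacher chooses which coordinate to refute. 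By the proof of \Cref{thm:st1}, the Student learns at most one new coordinate per round, however many parallel candidates it sends.
\item The paper uses precisely $\Phi_{\bd}$ to show that the $\mathsf{BB}$ theories do not prove $\mathsf{LIND}(\mathsf{s}\Sigma^b_{i+1},\bd)$ (\Cref{thm:bblindsep}). So under the hypothesis your sentence is not provable even from $\mathsf{BB}(\mathsf{s}\Sigma^b_{i+1},\bd_1)$.
\end{itemize}
Suppose instead you make the trivial alternative genuinely $\Sigma^b_{i+1}$, so that $\varphi$ stays strict. Since the premise always holds, every solution $(w,\vec z)$ can be checked with the Student's own $\Sigma^p_i$ oracle, and the Teacher conveys no information. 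Hardness then becomes a question about total search problems with $\Pi^p_i$-checkable solutions versus $\FP^{\Sigma^p_i}$, which $\Sigma^p_{i+1}\not\subseteq\Delta^p_{i+1}/\poly$ does not imply. Your decoder (``does $w$ contain a witness for $\theta$?'') also stops deciding $L$, because the Student may legitimately answer with the alternative.

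\textbf{The fix is not to pad.} Take a $\Sigma^p_{i+1}$-complete $L=\{x:\exists y\,\pi(x,y)\}$ with $\pi$ a strict $\Pi^b_i$ formula. Use the instance $\mathsf{BB}(\psi,\bd_1)$ with $\psi(X,j,y):=\pi(x_j,y)$; this is the paper's $\Psi_{\bd_1}$ (\Cref{prop:psi}). Three things now hold:
\begin{itemize}
\item The premise may fail, so a single index $j$ with $x_j\notin L$ is a legitimate solution; it is accepted exactly when the Teacher cannot supply a witness.
\item The $2$-round, $\bd_1$-query strategy is your Step 1(1).
\item Each Teacher answer reveals a witness only for the index the Student itself queried.
\end{itemize}
The last property is what drives Step 3. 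With fewer than $\bd_1$ queries in total, some position $l$ is never queried. For a good pair $(S,x)$ with all entries in $L$, the Student must therefore finish with a full tuple, and its $l$-th entry is a witness for $x$ that the advice machine reads off. No induction on rounds is needed; the total query count suffices.

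\textbf{A smaller gap in Step 3: your budget comparison mixes arguments.} You compare $\bd_1\ge\bd_2^{kc+1}$, but there are $\bd_1(n)$ coordinates for instances of length $n$. The witnessing bound is $\poly(\bd_2(m))$ in the full input length $m=n\cdot\bd_1(n)$. You therefore need an estimate like (\ref{eq:sizeinequality}) giving $\bd_2(m)^k<\bd_1(n)$.
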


The above theorem is also proved in \cite{DBLP:journals/tocl/CookT06}, but only for the first level, and under a cryptographic hardness assumption instead of a worst-case one.

\begin{theorem}\label{thm:lindsep}
    If $\Sigma^p_{i+1}\not\subseteq \Delta^p_{i+1}/\poly$, then for all increasing $\PV$ functions $\bd_1(x), \bd_2(x) \leq x$, such that for any $k\in\N$ and for large enough $x$, $\bd_1(x)\geq \bd_2(x)^k$, $\mathsf{T}^i_2+\mathsf{LIND}(\mathsf{s}\Sigma^b_{i+1},\bd_2)\not\vdash \mathsf{LIND}(\mathsf{s}\Sigma^b_{i+1},\bd_1)$, or equivalently, $\mathsf{T}^i_2+\mathsf{LIND}(\mathsf{s}\Sigma^b_{i+1},\bd_1)\not\subseteq \mathsf{T}^i_2+\mathsf{LIND}(\mathsf{s}\Sigma^b_{i+1},\bd_2)$.
\end{theorem}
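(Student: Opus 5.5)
The plan follows the template already used for \Cref{thm:bbsep}: a witnessing theorem for the weaker theory, an easy Student for the axiom of the stronger theory, and a separation of the two Student-Teacher classes.

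\textbf{Step 1 (upper bound for the weaker theory).} First I would establish the witnessing theorem for $\mathsf{T}^i_2+\mathsf{LIND}(\mathsf{s}\Sigma^b_{i+1},\bd_2)$. Its $\forall\Sigma^b_{i+2}$-consequences should be solvable in $\mathcal{ST}^{\Sigma^p_i}[O(\bd_2(n)^{c}),1]$ for some constant $c$. I expect constant-many compositions of the axiom to yield $\poly(\bd_2)$ rounds. This should follow from the general model-theoretic witnessing result of \Cref{sec:witnessing}, which needs two inputs:
\begin{itemize}
\item each instance $\mathsf{LIND}(\varphi,\bd_2)$ with $\varphi$ strict is witnessed by a Student with $\bd_2(|x|)$ rounds and one query per round. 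The Student walks $n=0,1,\dots$ and proposes the failing step $\varphi(n)\wedge\neg\varphi(n+1)$. The witness of the strict existential for $\varphi(n+1)$ is then supplied by the Teacher's counterexample. Strictness ensures the remaining sharply bounded part is checkable with the $\Sigma^p_i$ oracle.
\item the class $\bigcup_c\mathcal{ST}^{\Sigma^p_i}[\bd_2^{\,c},1]$ is closed under definition by cases and finite composition.
\end{itemize}
The closure check is routine: composing games adds rounds, and rounds bounded by $\bd_2^{c_1}+\bd_2^{c_2}$ stay polynomial in $\bd_2$.

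\textbf{Step 2 (a hard instance of the stronger axiom).} Next I would design a specific strict $\Sigma^b_{i+1}$ formula $\varphi$ so that the search problem associated with $\mathsf{LIND}(\varphi,\bd_1)$ is the hard problem from the round hierarchy. The candidate is the Krajíček--Pudlák--Sgall style chain problem behind \Cref{thm:sgall} and \Cref{thm:st1}. In it, $\varphi(n)$ states ``there is a correct $\Sigma^p_{i+1}$-style certificate at stage $n$ of the chain'' (generalizing the chain of $\NP$ witnesses), so that solving it requires $\bd_1(|x|)$ adaptive rounds. The statement ``$\mathsf{LIND}(\varphi,\bd_1)$ holds'' is then a total $\mathsf{TF}\Sigma^p_{i+2}$ problem lying in $\mathcal{ST}^{\Sigma^p_i}[\bd_1(n)+1,1]$.

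\textbf{Step 3 (lower bound).} Then I would show this problem is not in $\mathcal{ST}^{\Sigma^p_i}[\bd_2(n)^c,1]$ for any $c$. This follows the proof of \Cref{thm:st1}, with the rounds parameter set to $r=\bd_2^{c}$, since by hypothesis $\bd_1\ge\bd_2^{c+1}>\bd_2^c+1$ eventually. Padding the instance converts the length parameter into the right scale: the chain has $\bd_1(|x|)$ steps rather than $|x|$. The proof of \Cref{thm:st1} is stated for sublinear $r(m)$ with inputs of length $m$, so I would reparametrize by $m=\bd_1(n)$, making $r(m)=\bd_2(n)^c$ sublinear in $m$.

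From an efficient Student I would then extract a $\Delta^p_{i+1}/\poly$ algorithm for a $\Sigma^p_{i+1}$-complete problem, contradicting $\Sigma^p_{i+1}\not\subseteq\Delta^p_{i+1}/\poly$. The advice consists of hard Teacher counterexample sequences, and a nested pigeonhole over rounds finds a round where the Student's failure reveals the answer.

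Combining Steps 1–3 gives the theorem: if $\mathsf{T}^i_2+\mathsf{LIND}(\mathsf{s}\Sigma^b_{i+1},\bd_2)$ proved $\mathsf{LIND}(\varphi,\bd_1)$, Step 1 would place the problem in the forbidden class.

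\textbf{Main obstacle.} I expect Step 3 to be hardest. Two points need care:
\begin{itemize}
\item the reduction's running time and advice must stay polynomial in the original $n$, not only in $m$, when $\bd_1$ is small (e.g.\ $\log\log$);
\item the hard chain problem must be expressed through a \emph{strict} $\varphi$, so that it is genuinely an instance of the axiom.
\end{itemize}
I would first try to take $\varphi(n)$ as ``$\exists$ a sequence of $n$ valid chain witnesses'', which is strict $\Sigma^b_{i+1}$ since the sequence check is $\Pi^b_i$ after the existential.
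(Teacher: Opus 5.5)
Your three-step plan has the same architecture as the paper's proof: \Cref{thm:lindwitness} for the $\bd_2$-theory, a KPS-type sentence available from $\bd_1$-length induction, and the counting argument from the proof of \Cref{thm:st1}. Where you differ is in using an instance of $\mathsf{LIND}(\varphi,\bd_1)$ itself as the separating sentence, instead of deriving $\Phi_{\bd_1}$ from $\mathsf{LMAX}$ as in \Cref{prop:phi}. That is fine, but two things must be fixed first. First, $\varphi$ has to be pinned down; your ``chain'' is never defined, while the paper's $\psi(t)$, ``there are $t$ \emph{distinct} indices with witnesses'', works. Second, you must fix a Teacher that answers each query by extending your own witness sequence by a single new index. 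A counterexample to $\neg\varphi(t+1)$ is a whole sequence of $t+1$ witnesses, whereas the KPS count needs at most one newly revealed index per round. Also, the extraction you sketch (``hard counterexample sequences'', pigeonhole over rounds) is not what \Cref{thm:st1} does. What you need is its good-pair counting together with the greedy cover of $U$ by polynomially many sets $S_i$ given as advice.

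The genuine gap is the size bookkeeping in Step 3, where three constraints must hold at once. \Cref{thm:lindwitness} bounds the rounds by $\poly(\bd_2(\ell))$, where $\ell$ is the length of the \emph{whole} input, which contains all $p$ instances. The counting argument needs at most $p$ rounds. And the resulting $\Delta^p_{i+1}/\poly$ algorithm runs the Student on inputs of length $\ell$, so it is polynomial only if $\ell\le\poly(n)$ for the instance size $n$. You apply $\bd_1\ge\bd_2^{c+1}$ at one unspecified argument. Your ``reparametrize by $m=\bd_1(n)$'' also misreads \Cref{thm:st1}: there $m$ is the input length and $r(m)$ counts instances of size $m/r(m)$. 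The paper takes $\bd_1(n)$ instances of size $n$, so $\ell\approx n\bd_1(n)\le n^2$, and must then prove $\bd_2(n\bd_1(n))^k<\bd_1(n)$, namely \eqref{eq:sizeinequality}.

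Your diagnosis of the obstacle is also reversed. Small $\bd_1$ is harmless. The danger arises when you pack $\bd_1(\ell)$ instances into length $\ell$ and $\bd_1$ is near-linear (e.g.\ $\bd_1=\mathrm{id}$, the $\mathsf{LIND}$ case): instances then have size $\ell/\bd_1(\ell)$, possibly $O(1)$, and the reduction is not polynomial. Within your axiom-instance encoding there is a clean repair. Use only $p=\min(\bd_1(\ell),\lfloor\sqrt{\ell}\rfloor)$ instances of a size $n$ with $pn\le\ell\le n^2$, and set $\varphi(t):=\psi(\min(t,p))$. The hypothesis together with $\bd_1\le\mathrm{id}$ gives $\bd_2(\ell)^{2c+1}\le\bd_1(\ell)\le\ell$ eventually, hence $\bd_2(\ell)^c\le p$. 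This compares $\bd_1$ and $\bd_2$ at the same argument, so no growth condition like \eqref{eq:sizeinequality} is needed.
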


Therefore, we have two hierarchies for the extensions of $\mathsf{LIND}$ and $\mathsf{BB}$, and the levels can be indexed by $\poly,\,\log,\,\log\log,\dots$ etc. We also know that 
\[
\mathsf{T}^i_2+\mathsf{LIND}(\mathsf{s}\Sigma^b_{i+1},\bd)\,\vdash\, \mathsf{BB}(\mathsf{s}\Sigma^b_{i+1},\bd).
\]
by parametrising the proof of $\mathsf{S}^{i+1}_2\vdash\mathsf{BB}(\Sigma^b_{i+1})$ (\Cref{theoremllindprovesbb}), which means that every level of the $\mathsf{LIND}$ hierarchy proves the corresponding theory in the $\mathsf{BB}$ hierarchy. However, this is the best we can have by the two final theorems:

\begin{theorem}\label{thm:bblindsep}
    If $\Sigma^p_{i+1}\not\subseteq \Delta^p_{i+1}/\poly$, then for all increasing $\PV$ functions $\bd(x) \leq x$, $\mathsf{T}^i_2+\mathsf{BB}(\mathsf{s}\Sigma^b_{i+1})\not\vdash \mathsf{LIND}(\mathsf{s}\Sigma^b_{i+1},\bd)$, or equivalently and $\mathsf{T}^i_2+\mathsf{LIND}(\mathsf{s}\Sigma^b_{i+1},\bd)\not\subseteq\mathsf{T}^i_2+\mathsf{BB}(\mathsf{s}\Sigma^b_{i+1})$.
\end{theorem}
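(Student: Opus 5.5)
We prove the statement by contradiction, combining the witnessing theorems with the round separation of \Cref{thm:st1}. The key point is that $\mathsf{T}^i_2+\mathsf{BB}(\mathsf{s}\Sigma^b_{i+1})$ has only constantly many rounds, whereas $\mathsf{LIND}(\mathsf{s}\Sigma^b_{i+1},\bd)$ already encodes a $\bd(n)$-round game for an unbounded $\bd$. We assume $\bd$ is unbounded, as the statement implicitly requires; for bounded $\bd$ the axiom is provable in $\mathsf{T}^i_2$.

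\textbf{Step 1 (upper bound for the $\mathsf{BB}$ theory).} We invoke the witnessing theorem for $\mathsf{T}^i_2+\mathsf{BB}(\Sigma^b_{i+1})$, which generalizes Cook--Thapen and follows from the general witnessing result of \Cref{sec:witnessing}. It says that every $\forall\Sigma^b_{i+2}$-consequence of this theory, read as a $\mathsf{TF}\Sigma^p_{i+2}$ relation, lies in $\mathcal{ST}^{\Sigma^p_i}[k,\poly(n)]$ for some constant $k$.

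\textbf{Step 2 (a hard instance of $\mathsf{LIND}$).} We take the hard problem $R$ from the proof of \Cref{thm:st1}. It is a $\mathsf{TF}\Sigma^p_{i+2}$ problem in $\mathcal{ST}^{\Sigma^p_i}[r(m)+1,1]$ that lies outside $\mathcal{ST}^{\Sigma^p_i}[r(m),q(m)]$, for a slowly growing round function $r$. We choose $r(m)\le\bd(m)$, for instance $r(m)=\min(\bd(m),\log m)$ made polynomial-time and increasing, so that $r$ is sublinear, unbounded and increasing. We take $q(m)=m^c$ for an arbitrary constant $c$.

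We then show that $\mathsf{T}^i_2+\mathsf{LIND}(\mathsf{s}\Sigma^b_{i+1},\bd)\vdash\forall x\exists y\, R(x,y)$, or at least proves a $\Sigma^b_{i+2}$ statement to which $R$ reduces within constant-round games. For this we formalize the $(r+1)$-round Student strategy as a strict $\Sigma^b_{i+1}$ induction hypothesis. The hypothesis $\varphi(n)$ says: ``there exist counterexamples $z^1,\dots,z^n$ against the Student's first $n$ answers'', or else some earlier answer was already correct. This is strict $\Sigma^b_{i+1}$ because the Student is an $\FP^{\Sigma^p_i}$ function and the counterexample condition is $\Delta^b_{i+1}$. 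The induction runs up to $r(|x|)\le\bd(|x|)$, and the conclusion at step $r(|x|)+1$ contradicts totality of the game, which is proven in $\mathsf{T}^i_2$ from the construction of $R$. Equivalently, in the proof of \Cref{thm:st1} the problem $R$ should be chosen to be essentially the $\mathsf{LIND}(\varphi,\bd)$ instance for a suitable $\varphi$. This matches the introduction's remark that such $\mathsf{LIND}$ axioms are witnessed naturally with $\bd$ rounds and one query.

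\textbf{Step 3 (contradiction).} Suppose the $\mathsf{BB}$ theory proves $\mathsf{LIND}(\mathsf{s}\Sigma^b_{i+1},\bd)$. Then by Step 1 the relation $R$ lies in $\mathcal{ST}^{\Sigma^p_i}[k,m^c]$ for constants $k,c$. Since $r$ is unbounded, $k\le r(m)$ for large $m$, so $R\in\mathcal{ST}^{\Sigma^p_i}[r(m),m^c]$; finitely many small inputs are handled by hardwiring. This contradicts \Cref{thm:st1}, which gives $\Sigma^p_{i+1}\subseteq\Delta^p_{i+1}/\poly$. The equivalent formulation about inclusion of theories follows immediately.

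\textbf{Main obstacle.} The delicate step is Step 2. We must make sure that the separating problem produced in the proof of \Cref{thm:st1} is actually a provable consequence of the $\mathsf{LIND}$ scheme. This means that totality of its $(r+1)$-round game must be derivable from an $\mathsf{s}\Sigma^b_{i+1}$ length-induction of length $\bd(|x|)$ over $\mathsf{T}^i_2$. Our first approach is to take the generic complete problem for $\mathcal{ST}^{\Sigma^p_i}[r+1,1]$ given by the Krajíček--Pudlák--Sgall construction, namely iterated witnessing for a $\Sigma^p_{i+1}$-complete predicate. We would verify that the proof of \Cref{thm:st1} only needs hardness of this canonical $\mathsf{LIND}$-type instance.
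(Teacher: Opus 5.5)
Your overall architecture is the paper's. Its proof combines three ingredients: \Cref{prop:phi}, which says the $\mathsf{LIND}$ theory proves the ``witness every instance in $L$'' sentence $\Phi_{\bd}$; the proof of \Cref{thm:st1}, which says this problem is not in $\mathcal{ST}^{\Sigma^p_i}[O(1),\poly(m)]$; and the constant-round, polynomial-query witnessing of \Cref{thm:bbwitness}. These are exactly your Steps 1 and 3 around Step 2. The paper parametrizes by $p(n)=\bd(n)$ with $n$ the length of a single instance, so it needs no $\min(\bd,\log)$ device. Your parametrization also works, and your remark that $\bd$ must be unbounded is correct.

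The genuine difference is in Step 2. The paper applies $\mathsf{LMAX}(\mathsf{s}\Sigma^b_{i+1},\bd)$ to $\psi(t)=$ ``$t$ of the instances have witnesses'' and reads the answer off a maximal such family, never mentioning a Student. You instead run length induction along the rounds of the copying Student. Your route works, but two points need repair.

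\emph{First}, the disjunct ``or else some earlier answer was already correct'' is $\Pi^b_{i+1}$, so with it your induction formula is not strict $\Sigma^b_{i+1}$. Drop it and argue by contradiction from $\forall y\,\neg R(x,y)$. Under that assumption, $\theta(j)\to\theta(j+1)$ is immediate for $\theta(j)=$ ``there are valid counterexamples to the Student's first $j$ answers''. Then $\mathsf{LIND}(\theta,\bd)$, truncated to $p\le\bd(\abs{x})$ and followed by one more step, yields $\theta(p+1)$.

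\emph{Second}, the final contradiction must not appeal to ``totality of the game'' as a $\forall\exists$ statement, since that is what you are proving. What you need is a universal fact: $p+1$ valid counterexamples against the copying Student have pairwise distinct indices. This holds because, at any later round, the Student's entry at an already revealed index is the witness supplied by a valid earlier counterexample. Distinct indices contradict the pigeonhole principle on $[p]$, which $\PV_1$ proves because $p\le\abs{x}$.

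With these fixes your argument is complete, and the fallback via the generic Kraj\'{\i}\v{c}ek--Pudl\'ak--Sgall problem in your last paragraph is unnecessary. As for what each route buys: your induction gets distinctness of the revealed indices for free from the game. In the $\mathsf{LMAX}$ formulation, the index sequence in $\psi(t)$ must be required injective for maximality to say anything. In exchange, the paper's formula directly produces the answer $Y$ without formalizing a Student or sequences of counterexamples.
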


\begin{theorem}\label{thm:lindbbsep}
    If $\Sigma^p_{i+1}\not\subseteq \Delta^p_{i+1}/\poly$, then for all increasing $\PV$ functions $\bd_1(x), \bd_2(x) \leq x$, such that for any $k\in\N$ and for large enough $x$, $\bd_1(x)\geq \bd_2(x)^k$, $\mathsf{T}^i_2+\mathsf{LIND}(\mathsf{s}\Sigma^b_{i+1},\bd_2)\not\vdash \mathsf{BB}(\mathsf{s}\Sigma^b_{i+1},\bd_1)$, or equivalently, $\mathsf{T}^i_2+\mathsf{BB}(\mathsf{s}\Sigma^b_{i+1},\bd_1)\not\subseteq \mathsf{T}^i_2+\mathsf{LIND}(\mathsf{s}\Sigma^b_{i+1},\bd_2)$ 
\end{theorem}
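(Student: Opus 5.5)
We argue by contradiction, combining the witnessing theorem for the $\mathsf{LIND}$-extensions with the round/query separation of \Cref{thm:st2}. Suppose $\mathsf{T}^i_2+\mathsf{LIND}(\mathsf{s}\Sigma^b_{i+1},\bd_2)\vdash \mathsf{BB}(\mathsf{s}\Sigma^b_{i+1},\bd_1)$.

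\textbf{Step 1 (upper bound).} The unified witnessing result of \Cref{sec:witnessing} applies to $\mathsf{T}^i_2+\mathsf{LIND}(\mathsf{s}\Sigma^b_{i+1},\bd_2)$, since the scheme $\mathsf{LIND}(\mathsf{s}\Sigma^b_{i+1},\bd_2)$ is witnessed by the Student who walks the induction step by step. That Student plays $O(\bd_2(n))$ rounds with one candidate per round, and the relevant class is closed under definition by cases and under composition. Hence every $\forall\Sigma^b_{i+2}$-consequence of the theory, and in particular every instance of $\mathsf{BB}(\mathsf{s}\Sigma^b_{i+1},\bd_1)$ viewed as a total search problem, lies in $\mathcal{ST}^{\Sigma^p_i}[O(\bd_2(n))^{O(1)},O(1)]$. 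Composition of constantly many $\mathsf{LIND}$ instances can multiply rounds, so the bound is stated polynomially in $\bd_2$. Here $n$ must be the length of the whole input, and I will check that $\bd_2$ is applied to $|x|$ consistently with how $\bd_1$ appears.

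\textbf{Step 2 (hard instance).} I will choose one specific $\mathsf{s}\Sigma^b_{i+1}$ formula $\varphi(j,y)$ so that the search problem associated with $\mathsf{BB}(\varphi,\bd_1)$ is essentially the parallel hard problem from the proof of \Cref{thm:st2}. Concretely, this means $\bd_1(|x|)$ independent copies of the Krají\v{c}ek–Pudlák–Sgall style problem, where each coordinate $j$ asks for a $\Sigma^p_{i+1}$-type witness (for example, a satisfying assignment or a refutation of a $\Sigma^p_{i+1}$ instance). A Student solves this in $\mathcal{ST}^{\Sigma^p_i}[2,\bd_1(n)]$: one round of parallel guesses, then it repairs each coordinate from the counterexamples.

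The lower-bound half of \Cref{thm:st2} states that under $\Sigma^p_{i+1}\not\subseteq\Delta^p_{i+1}/\poly$, such a problem is not in $\mathcal{ST}^{\Sigma^p_i}[1+r_2,q_2]$ whenever $r_2q_2<\bd_1$. Intuitively, each counterexample reveals information about at most one coordinate, so $\bd_1$ coordinates need at least $\bd_1$ counterexamples in total, beyond what the $\Delta^p_{i+1}/\poly$ simulation can supply.

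\textbf{Step 3 (contradiction).} Take $r_2q_2=O(\bd_2(n))^{c}$ for the constant $c$ from Step 1. The hypothesis $\bd_1\geq\bd_2^k$ for every $k$ makes $r_2q_2<\bd_1$ for large $n$, so Step 1 and Step 2 are incompatible. This yields the non-provability. The equivalent inclusion form follows immediately, because $\mathsf{T}^i_2+\mathsf{BB}(\mathsf{s}\Sigma^b_{i+1},\bd_1)$ contains the axiom in question.

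\textbf{Main obstacle.} The hard step is Step 2. \Cref{thm:st2} is stated only for sublinear unbounded functions and for the comparison $r_1q_1>r_2q_2$. I must check that the hard problem witnessing that theorem is literally an instance of the $\mathsf{BB}$ scheme with a strict formula, with the parameter $\bd_1(|x|)$ rather than $q(m)$ of the raw input length. This needs a padding argument: choose $x$ with $|x|$ polynomially related to $m$, so that $\bd_1(|x|)$ plays the role of $q$.

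I would also verify that the counting argument tolerates $r_2q_2=\bd_2^{O(1)}$, a polynomial rather than linear deficit, which the hypothesis $\bd_1\geq\bd_2^k$ is designed for. Failing a direct instantiation, my first fallback is to reprove the lower bound of \Cref{thm:st2} directly for the $\mathsf{BB}$ search problem. The plan there is to fix an advice string encoding the Student's behavior on all but one coordinate, then use the Teacher's forced answers to decide $\Sigma^p_{i+1}$ in $\Delta^p_{i+1}/\poly$.
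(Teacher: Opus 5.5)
Your proposal is correct and is essentially the paper's proof. The paper uses \Cref{prop:psi} to get $\Psi_{\bd_1}$ from $\mathsf{BB}(\mathsf{s}\Sigma^b_{i+1},\bd_1)$, then \Cref{thm:lindwitness} to place it in $\mathcal{ST}^{\Sigma^p_i}[\poly(\bd_2(m)),1]$, and derives the contradiction from the counting lower bound in the proof of \Cref{thm:st2} together with the size inequality (\ref{eq:sizeinequality}), $\bd_2(n\cdot\bd_1(n))^k<\bd_1(n)$. The one point to tighten is your description of the hard instance: it is exactly the ``one index with no witness, or witnesses for all indices'' problem $\Psi_{\bd_1}$, not independent copies of the KPS-type problem $\Phi$, whose per-coordinate condition $\forall z\,(\varphi(x_j,z)\to\varphi(x_j,y))$ is $\Pi^b_{i+1}$ and so does not fit the $\mathsf{BB}(\mathsf{s}\Sigma^b_{i+1})$ scheme.
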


The first theorem shows that no theory in the $\mathsf{BB}$ hierarchy includes any theory from the $\mathsf{LIND}$ hierarchy, and the second shows that the inclusions per level are the only ones possible. Therefore, an interesting consequence of that is that the theory $\mathsf{T}^i_2+\mathsf{LIND}(\mathsf{s}\Sigma^b_{i+1},\log)$ is independent from $\mathsf{T}^i_2+\mathsf{BB}(\Sigma^b_{i+1},\poly)$.

The proofs for all the above theorems are applications of \Cref{thm:st1,thm:st2}, where we show that the sentence used for the separation is provable in the corresponding class. We can see all these relations in \Cref{fig:theory-poset}, if we transfer them to the level $i$ of Buss's hierarchy.

\subsubsection{Conditional Solutions to Open Problems in Bounded Arithmetic}

The separations described in the previous section give some natural solutions to two open problems in Bounded Arithmetic regarding the power of the bounded collection axiom and the difference of double length induction for strict and non-strict formulas.\footnote{Amusingly in both problems the concerned theories are both denoted by the letter $\mathsf{R}$. $\mathsf{R}_i= \SOT + \mathsf{BB}(\Sigma^b_i)$ which was defined already in Buss's thesis \cite{buss1985bounded}, and $\mathsf{R}^i_2= \mathsf{BASIC}+\mathsf{LLIND(\Sigma^b_1)}$, which is defined in the same manner as $\mathsf{S}^i_2$ \cite{pollett1997arithmetic}. We do not use these names to avoid confusion.} The difference that these problems have with the above separations is that the corresponding theories are defined over a different base theory. However, the same results still hold. The assumption we use is, as always in the paper, $\Sigma^p_{i+1}\not\subseteq \Delta^p_{i+1}/\poly$, which is the best known assumption for separations of theories in Buss's Hierarchy.

\vspace{-1em}
\paragraph{Buss and Rassayre's Open Problem \cite{clote1993open}.}As mentioned above, it was proven in \cite{buss1985bounded} that over the base theory $\SOT$, the axiomatic scheme $\mathsf{BB}(\Sigma^b_{i+1})$ can be proven by the scheme $\mathsf{LIND}(\Sigma^b_{i+1})$, while it can prove $\mathsf{LIND}(\Sigma^b_{i})$ (see in the Preliminaries \Cref{theorembbproveslind,theoremllindprovesbb,theoremsotprovesbb1}), which means we have $\mathsf{S}^{i}_2\subseteq \SOT +\mathsf{BB}(\Sigma^b_{i+1})\subseteq \mathsf{S}^{i+1}_2$.

The problem is whether these theories can be separated.
It is easy to see that under the usual assumption $\Sigma^p_{i+1}\not\subseteq \Delta^p_{i+1}/\poly$, which is the same as the one used to separate $\mathsf{T}^i_2$ from $\mathsf{S}^{i+1}_2$, we can also show the desired separations. 

By \Cref{thm:bbsep}, for $\bd_2=0$, we get that $\mathsf{T}^i_2\not\vdash \mathsf{BB}(\mathsf{s}\Sigma^b_{i+1},\bd)$ for any $\bd$ which is super-constant, which also implies that $\mathsf{S}^i_2\not\vdash \mathsf{BB}(\mathsf{s}\Sigma^b_{i+1},\bd)$. For $\bd(x)=x$, this directly shows that $\mathsf{BB}(\Sigma^b_{i+1})\not\subseteq \mathsf{S}^i_2$, thus achieving the first separation. 

For the second part, we can use \Cref{thm:bblindsep} with $\bd(x)=x$, which gives us that $\mathsf{T}^i_2+\mathsf{BB}(\Sigma^b_{i+1}) \nvdash \mathsf{LIND}(\mathsf{s}\Sigma^b_{i+1})$, which also means, by weakening the base theory, that $\SOT+\mathsf{BB}(\Sigma^b_{i+1}) \nvdash\mathsf{LIND}(\mathsf{s}\Sigma^b_{i+1})$. This separates the second part.

\vspace{-1em}
\paragraph{Double-length induction on strict vs. non-strict $\Sigma^b_1$-formulas \cite{pollett1997arithmetic}.}The theories concerned in this problem have the basic set of axioms which is used in Buss's theories, denoted by $\mathsf{BASIC}$ (see Preliminaries), but they extend it by two versions of double length induction. As we see in \Cref{theoremllindprovesbb}, the length induction for strict formulas can prove the bounded collection axiom, which provides the sharply bounded quantifier exchange property \Cref{prop:bd-exchange}. This means that length induction on strict formulas is equivalent with length induction with formulas from the corresponding non-strict class. However, this was not known for double length induction; the problem is if double length induction on strict formulas can prove the version for non-strict ones.

We show that this is not true ($\mathsf{BASIC}+\mathsf{LLIND}(\mathsf{s}\Sigma^b_i) \nvdash \mathsf{LLIND}(\Sigma^b_i)$). This is the case, since we already know that $\mathsf{BASIC}+\mathsf{LLIND}(\Sigma^b_i) \nvdash \mathsf{BB}(\Sigma^b_i)$ \cite{ALLEN19911}, and it cannot happen that $\mathsf{BASIC}+\mathsf{LLIND}(\mathsf{s}\Sigma^b_i) \nvdash \mathsf{BB}(\Sigma^b_i)$, under the assumption that $\Sigma^p_{i+1}\not\subseteq \Delta^p_{i+1}/\poly$.  If the latter was true, then we would also have $\PV_1+\mathsf{LLIND}(\mathsf{s}\Sigma^b_i) \vdash \mathsf{BB}(\Sigma^b_i)$, which is contradictory with \Cref{thm:lindbbsep}.

\subsubsection{Unprovability Results} 

In the final section, we revisit two major unprovability results in $\PV_1$ — the unprovability of circuit upper bounds by Krajíček–Oliveira \cite{KO17} and the unprovability of average-case circuit lower bounds by Pich–Santhanam \cite{pichS21} — and investigate whether their conclusions can be strengthened to provably stronger (under the complexity assumptions we use above) theories. It is an open problem if these unprovability results hold also for $\SOT$ (Open Problems 5.1 and 5.8(d) in \cite{oli2025})

Both of these results use the KPT theorem and the Student-Teacher Game derived from it, to find a contradiction. Hence, the question we are trying to answer is what is the strongest Student-Teacher Game class sufficient to obtain contradiction. For the former case, concerning the unprovability of circuit upper bounds, we can see from the proof that having more than constant rounds breaks the argument of composing polynomial-time functions. For the latter case, it is not clear how to adapt the proof for the number of rounds which is polynomial in $2^n$. We were able to extend the unprovability as follows (see \Cref{fig:unprovability-poset} for summary).

\begin{figure}[ht]
\centering
\begin{tikzpicture}[
    node distance=12mm and 20mm,
    theory/.style = {draw, rounded corners=0pt, top color=gray!5, bottom color=gray!5,
                     font=\small, minimum width=25mm, minimum height =8mm,align=center, inner sep=3pt},
    arrow/.style = {-{Latex[length=3mm,width=2mm]}, line width=0.6pt},
    comment/.style = {font=\scriptsize, align=center}
  ]

  \node[theory] (LIND) {$\mathsf{S}^{1}_2 $};
  \node[comment, right=1mm of LIND] {Open problem whether \\ unprovability holds here};

  \node[theory, below left=8mm and 0.5mm of LIND, xshift = 5mm] (BBpoly) {$\PV_1 + \mathsf{BB}(\Sigma^b_{1})$};
  \node[comment, left=1mm of BBpoly] {Best theory known \\ for the unprovability of \\ $\P \subseteq \mathsf{SIZE}[n^k]$};

  \node[theory, below right=8mm and 0.5mm of LIND, xshift = -5mm] (LINDlog) {$\PV_1 + \mathsf{LLIND}(\mathsf{s}\Sigma^b_{1})$};
  \node[comment, right=1mm of LINDlog] {Best theory known \\ for the unprovability of \\ $\mathsf{NSubExp} \not\subseteq \mathsf{Avg-} \mathsf{coNSIZE}[2^{n^\delta}]$};

  \node[theory, below=24mm of LIND] (BBlog) {$\PV_1 + \mathsf{BB}(\mathsf{s}\Sigma^b_{1},\log)$};
  \node[comment, left=1mm of BBlog] {New best theory for \\\textbf{both} unprovability results};

  \node[theory, below=8mm of BBlog] (PV1) {$\PV_1$};
  \node[comment, right=1mm of PV1] {Best unprovability results \\ known until now};

  % Inclusions (arrows)
  \draw[arrow] (PV1) -- (BBlog); 
  \draw[arrow] (BBlog) -- (BBpoly);
  \draw[arrow] (BBlog) -- (LINDlog);
  \draw[arrow] (BBpoly) -- (LIND);
  \draw[arrow] (LINDlog) -- (LIND);

\end{tikzpicture}
\caption{Summary of the unprovability results. Both results were previously known for the theory $\PV_1$, while we show them for $\PV_1+\mathsf{BB}(\Sigma^b_{1})$ and $\PV_1 + \mathsf{LLIND}(\mathsf{s}\Sigma^b_{1})$. We get unprovability of both results for the theory $\PV_1 + \mathsf{BB}(\mathsf{s}\Sigma^b_{1},\log)$ which lies in the intersection. All the inclusions are strict under the hardness assumption $\NP\not\subseteq \P/\poly$.}
\label{fig:unprovability-poset}
\end{figure}
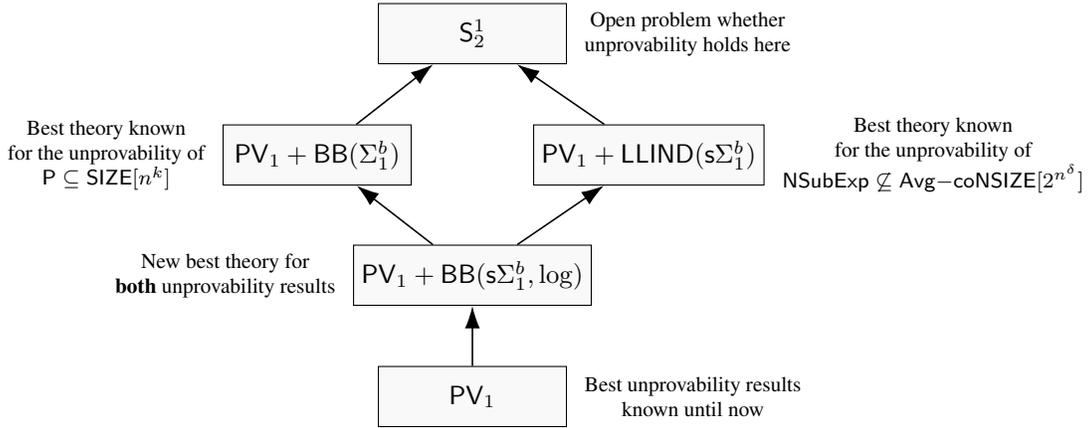

\begin{theorem}[Theorem~\ref{theoremkobb}]\label{thm:krajicekoliveira}
For every $k \in \N$,
        $
        \PV_1+\mathsf{BB}(\Sigma^b_1) \not\vdash \P\subseteq \mathsf{SIZE}[n^k].
        $
\end{theorem}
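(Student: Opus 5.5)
We follow the structure of the Krajíček--Oliveira argument \cite{KO17}, replacing the KPT theorem by the witnessing theorem for $\PV_1+\mathsf{BB}(\Sigma^b_1)$: its $\forall\Sigma^b_2$-consequences are solvable in $\mathcal{ST}[O(1),\mathsf{poly}(n)]$, by the Cook--Thapen result, or by our general witnessing theorem applied to the $\mathsf{BB}$ scheme.

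First we fix the formalisation. For each $\PV$ function symbol $h$ (a polynomial-time machine), consider
\[
\forall 1^n\, \exists C \le 2^{n^{k+1}}\, \forall x \in \{0,1\}^n\; \big(|C|\le n^k \wedge C(x)=h(x)\big).
\]
This is a $\forall\exists\forall^b$ sentence with a polynomial-time matrix. The statement $\P\subseteq\mathsf{SIZE}[n^k]$ is the scheme of these sentences, or a single one for a universal machine with clocks. Assume that for some $k$ the theory $\PV_1+\mathsf{BB}(\Sigma^b_1)$ proves all of them.

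Applying witnessing to the sentence for a fixed $h$, we obtain a constant $r$, a polynomial $q$, and a polynomial-time Student. On input $1^n$, in each of $r$ rounds, the Student outputs $q(n)$ candidate circuits. For each candidate that is wrong, the Teacher returns an $x$ on which it errs, and the Student adapts. If every candidate in round $j$ is refuted, the Student proceeds to round $j+1$. Within $r$ rounds some candidate must be correct on all inputs of length $n$.

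The key step is to rerun the KO diagonalisation against this parallel Student. We choose $h$ by diagonalisation, as in \cite{KO17}: $h$ on inputs of length $n$ simulates the Student (using $h$'s own code via the recursion theorem, or a universal polynomial-time machine indexed by the finitely many Students arising from the proof). It plays the game with a canonical Teacher that answers each wrong candidate with its lexicographically first counterexample. In each round, $h$ then evaluates all $q(n)$ candidate circuits on a fresh set of inputs and defines its own values there to disagree with every candidate.

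The naive KO counting argument must be adjusted for parallelism. There are $r\cdot q(n)$ circuits of size $n^k$ in total. By a counting or hierarchy argument à la Kannan, a function of size about $(rq(n))\cdot n^{k}\cdot n$ can disagree with all of them. Since $r$ is constant and $q$ is polynomial, this is still a polynomial-time computation, because the Student is polynomial-time. It is also consistent: $h$ is polynomial-time, and none of the Student's candidates computes $h$. This contradicts correctness of the Student.

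The main obstacle is circularity. The Teacher's counterexamples depend on $h$, and $h$ depends on the Student's play against those counterexamples. Following KO, we resolve this by letting $h$ on input $x$ compute the whole constant-round game on length $|x|$ in polynomial time. This works because the canonical Teacher needs the values of $h$ only on previously fixed points, which we make lexicographically smaller. Constantly many rounds keeps the recursion depth constant, so the total time is polynomial. This is exactly where more than constant rounds would break the argument. Polynomially many queries per round only multiply the work by $q(n)$ and enlarge the diagonalisation set polynomially, so the argument survives.
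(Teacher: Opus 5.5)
There is a genuine gap at the diagonalisation step. The Student you want $h$ to simulate comes from a proof of the upper bound for $h$ itself, so it depends on $h$. Its running time is some polynomial $n^{a}$ fixed only by that proof, and its terms may well contain $h$ as a subterm. For the hypothesis to apply, $h$ must be a PV symbol, i.e.\ carry a polynomial clock $n^{d}$ chosen before any proof exists, and nothing guarantees $d\ge a$.

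Neither suggested fix works. The recursion theorem gives a Turing machine whose running time depends on the Student it finds; clocking it at a preassigned bound destroys the simulation. There is no general fixed-point theorem for systems of total polynomial-time functions: the map sending $e$ to a symbol computing $1-e(x)$ has no fixed point. The ``universal machine indexed by the Students'' variant is circular for the same reason.

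A sanity check confirms the problem: your construction never uses that $r$ is constant. $h$ plays the rounds sequentially, and $\poly(n)$ rounds of a polynomial-time Student still take polynomial time. Combined with polynomial-round witnessing for $\SOT$, the same argument would give $\SOT\nvdash\P\subseteq\mathsf{SIZE}[n^k]$, which is open (Open Problem~5.1 of \cite{oli2025}). So your remark about ``recursion depth'' is not where constancy of rounds actually enters.

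The missing idea is the Kraj\'{\i}\v{c}ek--Oliveira trick that the paper follows. It is non-constructive and never simulates the Student inside the hard function.
\begin{itemize}
\item Fix $g=g_{3k+3}$ from Lemma~\ref{lemmatimehierarchy}. It $\PV_1$-provably differs from every $\DTIME[n^{3k+2}]$ algorithm with $n^{2/3}$ bits of advice. Take the Student for $\UP_{k,c}(g)$.
\item Apply the assumed provable upper bound a \emph{second} time, to the padded direct-connection language $\tilde s^1$ of the round-one candidate circuits (Lemma~\ref{lemmasuccrep}).
\item Small circuits for $\tilde s^1$ on length $\ceil{n^{1/(2k)}}$ serve as $n^{2/3}$-bit advice. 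From this advice each candidate can be reconstructed and evaluated in time $O(n^{2k+2})$, whatever the Student's own running time.
\item Hence the theory proves that $g$ disagrees with every candidate $s^1_i$, and Herbrand's theorem yields a PV counterexample function $e^1$.
\item Substituting $\tilde e^1$ for the Teacher's first answer eliminates round one. Repeat with the new PV symbol $s^2(1^{(n)},\tilde e^1(1^{(n)}))$, and so on.
\end{itemize}
This is where constancy of $r$ is genuinely needed: each round requires a fresh meta-level application of the hypothesis and of Herbrand's theorem. Parallelism is harmless because the refutation is proved for all $i\le\abs{p(1^{(n)})}$ at once.
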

\begin{theorem}[Theorem~\ref{theorempsllind}]\label{thm:pichsanthanam}
        $
        \PV_1+\mathsf{LLIND}(\mathsf{s}\Sigma^b_1) \nvdash \mathsf{NSubExp} \not\subseteq \mathsf{Avg}\text{-} \mathsf{coNSIZE}[2^{n^\delta}].
        $
\end{theorem}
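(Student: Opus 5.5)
Assume, towards a contradiction, that $\PV_1+\mathsf{LLIND}(\mathsf{s}\Sigma^b_1) \vdash \mathsf{NSubExp} \not\subseteq \mathsf{Avg}\text{-}\mathsf{coNSIZE}[2^{n^\delta}]$. The plan has three parts:
\begin{enumerate}
\item use the witnessing theorem for $\PV_1+\mathsf{LIND}(\mathsf{s}\Sigma^b_1,\log)$ to get a polylogarithmic-round, single-query Student;
\item re-run the Pich--Santhanam argument with this Student;
\item check that the extra rounds still fit in their counting budget.
\end{enumerate}

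\textbf{Step 1: witnessing.} Fix the standard formalisation. For a fixed nondeterministic subexponential machine $M$ and inputs of length $n$, the lower bound says: for every circuit $C$ of size $2^{n^\delta}$ (given as input, with $N=2^n$ as the length parameter), there exists a set/list of inputs witnessing that $C$ fails to compute $\neg M$ on a noticeable fraction of inputs. The universal part of that witness is a co-nondeterministic check, so the statement has the form $\forall\exists\forall^b$, i.e. it is $\forall\Sigma^b_2$ in the parameter $N$. By the general witnessing theorem of \Cref{sec:witnessing}, and by the fact that $\mathsf{LIND}(\mathsf{s}\Sigma^b_1,\log)$ is witnessed in $\mathcal{ST}[\log(|x|),1]$, we get the following. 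The statement is solved by a uniform polynomial-time Student (polynomial in the input length, i.e. $\poly(2^n)$) in $\mathcal{ST}[O(\log^{c}|x|),1]$. Here $|x|=\poly(2^{n^\delta})$, so the number of rounds is $r=\poly(n)$, and each round makes a single query.

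\textbf{Step 2: replay the Pich--Santhanam contradiction.} Their proof uses the constant-round KPT Student as follows. Play the game against a Teacher who is itself small. They exploit that a counterexample (a nondeterministic witness that $C$ agrees with $M$ on the claimed inputs) can be found nondeterministically. The whole interaction can then be collapsed into a $\mathsf{coNSIZE}$ circuit, or an $\mathsf{NSubExp}$ computation, of size at most $2^{n^{\delta}}\cdot(\text{overhead})^{r}$, or additively $r\cdot\poly$. This yields a circuit family computing $\neg M$ on average, contradicting a hierarchy or diagonalization fact. I will redo it round by round. The Teacher's response in round $j$ is produced by guessing nondeterministically the counterexample to the Student's $j$-th candidate. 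The key point is that in each round there is only \emph{one} candidate, so nothing branches in parallel. The rounds compose sequentially, and the cost is $r$ times the cost of one round, with the Student's state carried forward.

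\textbf{Step 3: the parameter check (main obstacle).} The issue is whether $r=\poly(n)=\mathrm{polylog}(N)$ sequential rounds keep the resulting object inside the size bound $2^{n^{\delta'}}$ that is needed for the contradiction, and whether the "success probability"/averaging argument loses only a factor of $1/r$. I would first check that each round's Student call is a $\poly(N)$-time computation. If the original argument only needs each round's computation to be subexponential, then a $\poly(n)$ number of them stays subexponential, since $\poly(n)\cdot 2^{n^{\delta}} \le 2^{n^{\delta'}}$. For the averaging step, the round at which the Student succeeds can be guessed among $r$ possibilities, which loses only a $1/\poly(n)$ factor; that loss is absorbed by the average-case slack. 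This is exactly where polynomially many rounds ($\mathcal{ST}[\poly(N),1]$, as for $\mathsf{S}^1_2$) would break the proof: $\poly(N)=2^{O(n)}$ rounds exceed the subexponential budget. So the delicate point is verifying that every quantity in the original proof degrades by at most a polylog-in-$N$ factor.
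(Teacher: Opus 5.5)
\textbf{Gap: Step 2 omits the mechanism that produces the contradiction.}

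Your Step 1 essentially matches the paper. The paper witnesses the worst-case consequence $\mathsf{LB}(M,2^{n^\delta},1,n_0)$, so each answer is a single $x$ with candidate witnesses, and it gets $r=n^k$ rounds each costing $2^{O(n)}$. The average-case statement is then used only through its truth in $\mathbb{N}$.

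Step 2 has three problems.
\begin{enumerate}
\item You never say which circuit the Student is run on.
\item Having the Teacher ``guess the counterexample nondeterministically'' cannot be done consistently inside a co-nondeterministic circuit. Whether a counterexample exists at all is exactly whether the Student has already won, and different guessed counterexamples send the Student down different paths.
\item Even granting such a simulation, a $2^{O(n)}$-time Student on $n$-bit instances yields a circuit of size $2^{O(n)}$ on $n$-bit inputs, which contradicts nothing.
\end{enumerate}

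The paper, following Pich--Santhanam, runs the Student on the Nisan--Wigderson circuits $D_w(x)=\neg M(w|J_x(A))$ for an $(n,n^{c/2})$-design. Here $w=r_{X_t}(a,y)$ embeds the real input $y\in\{0,1\}^{n^{c/2}}$ in row $X_t$. Since $|J_x(A)\cap J_{X_t}(A)|\le n$ for $x\neq X_t$, the Teacher's answers at other queries depend on at most $n$ bits of $y$ and can be hard-wired as $2^{O(n)}$ advice. If the Student succeeds at $X_t$, then $M(y)=M(X_t)$. This gives a co-nondeterministic circuit of size $2^{O(n)}$ on inputs of length $n^{c/2}$ that approximates $M$. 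The contradiction is with the average-case lower bound at that length, which is true by soundness of the theory. It is not a contradiction with a hierarchy or diagonalization theorem.

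\textbf{Step 3 misplaces the bottleneck.}

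To make the Student's successful query land on the embedding row, and to make the Teacher's earlier answers describable by advice, one fixes a whole trace $T=(X_1,\dots,X_t)$ together with $a$ (Lemma~\ref{lem:trace}). The advantage obtained is $(2/3-1/n)/(n3^{t-1}2^{nt})=2^{-O(n^{k+1})}$, not a loss of a factor $1/r$. This loss is absorbed only by choosing the design stretch $c\ge 2\delta^{-1}(k+1)$. That choice gives $2^{n^{k+1}}\le 2^{(n^{c/2})^{\delta}}$, while the size $2^{O(n)}=2^{O((n^{c/2})^{2/c})}$ stays below $2^{(n^{c/2})^\delta}$.

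Likewise, $\poly(N)$ rounds are not problematic because of circuit size. With $t=2^{O(n)}$ rounds of $2^{O(n)}$ time each, the simulating circuit still has size $2^{O(n)}$. The real obstruction is that the trace-fixing loss $2^{-nt}$ becomes doubly exponential, and no polynomial stretch $n^{c/2}$ can compensate for it.

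Your check $\poly(n)\cdot 2^{n^\delta}\le 2^{n^{\delta'}}$ is therefore beside the point. The per-round cost is $2^{O(n)}$, which is subexponential only relative to the stretched length $n^{c/2}$ that the design provides.
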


Here, we use $\mathsf{Avg}$ on the uniform distribution. In other words, the lower bound states that for any $\mathsf{NSubExp}$ Turing machine $M$ and any $D\in \mathsf{coNSIZE}[2^{n^\delta}]$, $\Pr_{x\in \{0,1\}^n}[M(x)=D(x)]\leq 1/2 + 1/2^{n^\delta}$.

It follows from the previous results that the theory $\PV_1+\mathsf{BB}(\mathsf{s}\Sigma^b_1,\log)$ is contained in both of these theories and is still stronger than $\PV_1$ unless $\NP\subseteq\P/\poly$. We thus obtain the following as a corollary.

\begin{theorem}[Corollary \ref{corollarybothunprovabilities}]
\begin{itemize}
    \item[]
    \item For every $k \in \N$, $\PV_1+\mathsf{BB}(\mathsf{s}\Sigma^b_1,\log)\not\vdash \P\subseteq \mathsf{SIZE}[n^k].$
    \item For every rational $\delta\in (0,1)$, $\PV_1+\mathsf{BB}(\mathsf{s}\Sigma^b_1,\log) \nvdash  \mathsf{NSubExp} \not\subseteq \mathsf{Avg-} \mathsf{coNSIZE}[2^{n^\delta}]$.
\end{itemize}
\end{theorem}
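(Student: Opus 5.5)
The statement is the corollary combining both unprovability results for $\PV_1+\mathsf{BB}(\mathsf{s}\Sigma^b_1,\log)$. The plan is to derive it directly from \Cref{thm:krajicekoliveira} and \Cref{thm:pichsanthanam}. The only work is to show that $\PV_1+\mathsf{BB}(\mathsf{s}\Sigma^b_1,\log)$ is contained in both $\PV_1+\mathsf{BB}(\Sigma^b_1)$ and $\PV_1+\mathsf{LLIND}(\mathsf{s}\Sigma^b_1)$. Unprovability in a theory then transfers to every subtheory: if $\PV_1+\mathsf{BB}(\mathsf{s}\Sigma^b_1,\log)$ proved $\P\subseteq\mathsf{SIZE}[n^k]$, so would $\PV_1+\mathsf{BB}(\Sigma^b_1)$, contradicting \Cref{thm:krajicekoliveira}. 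The same argument with \Cref{thm:pichsanthanam} gives the second item for each rational $\delta\in(0,1)$.

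\textbf{First inclusion}, $\PV_1+\mathsf{BB}(\Sigma^b_1)\vdash \mathsf{BB}(\mathsf{s}\Sigma^b_1,\log)$. Take a strict $\Sigma^b_1$ formula $\varphi(j,y)$ and assume $\forall j\le \log(|x|)\,\exists y\le t\,\varphi(j,y)$, where $\log(|x|)$ is read as $||x||$.
\begin{itemize}
\item Put $x' := |x|$, a $\PV$ term, so that $|x'| = ||x||$.
\item Apply the instance of $\mathsf{BB}(\varphi)$ with parameters $x',t$. This is legitimate because $\mathsf{s}\Sigma^b_1\subseteq\Sigma^b_1$.
\item It yields $w$ with $\forall j\le |x'|\,\varphi(j,w_j)$, which is exactly the conclusion of $\mathsf{BB}(\varphi,\log)$.
\end{itemize}
This is a routine substitution, using only that $\PV_1$ handles the term $|x|$ and the sequence coding.

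\textbf{Second inclusion}, $\PV_1+\mathsf{LLIND}(\mathsf{s}\Sigma^b_1)\vdash\mathsf{BB}(\mathsf{s}\Sigma^b_1,\log)$. This is the case $\bd=\log$, $i=0$ of the fact stated before \Cref{thm:bblindsep}, namely $\mathsf{T}^i_2+\mathsf{LIND}(\mathsf{s}\Sigma^b_{i+1},\bd)\vdash\mathsf{BB}(\mathsf{s}\Sigma^b_{i+1},\bd)$ (\Cref{theoremllindprovesbb}), together with $\mathsf{LLIND}=\mathsf{LIND}(\cdot,\log)$ and $\PV_1=\mathsf{T}^0_2$. If that parametrised proof is to be checked rather than cited, the plan is as follows.
\begin{itemize}
\item Do induction on $n\le \bd(|x|)$ for the strict $\Sigma^b_1$ formula $\exists w\le \mathrm{bound}\,\forall j< n\,\varphi(j,w_j)$.
\item The sharply bounded $\forall j<n$ sits after the bounded $\exists w$, as strictness requires, provided $\varphi$ is strict. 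The resulting matrix should be put in the required strict form, either by treating it as $\PV$-decidable or by pulling the inner existentials of $\varphi$ into $w$.
\item For the induction step, take the witness $w$ for $n$ and append a $y$ for $j=n$, obtained from the hypothesis.
\end{itemize}
Making the formula genuinely strict is the only delicate point, and I expect it to be the main (mild) obstacle. If it is delicate, I would instead absorb $\varphi$'s existential witnesses into the sequence $w$.

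Finally, note for the corollary's remark that this theory still properly extends $\PV_1$ under $\NP\not\subseteq\P/\poly$. This follows from \Cref{thm:bbsep} with $\bd_2=0$ and $\bd_1=\log$, although it is not needed for the unprovability claims themselves.
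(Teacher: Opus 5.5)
Your proposal is correct and matches the paper's argument. The paper likewise notes that $\PV_1+\mathsf{BB}(\mathsf{s}\Sigma^b_1,\log)$ is contained in both $\PV_1+\mathsf{BB}(\Sigma^b_1)$ and $\PV_1+\mathsf{LLIND}(\mathsf{s}\Sigma^b_1)$, obtains the second inclusion from \Cref{theoremllindprovesbb} with $\bd=\log$, and transfers \Cref{theoremkobb} and \Cref{theorempsllind} down to the subtheory; your explicit instantiation $x\mapsto |x|$ for the first inclusion just spells out what the paper calls ``not hard to see.''
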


\subsection{Open Problems}

All the separations of theories we give are conditioned upon reasonable complexity-theoretic assumptions. The most general open problem would be to extend upon these results by proving them under weaker assumptions, and hopefully to eventually resolve the state of the separations without any assumptions, which seems like a distant goal at this point.

    Is it possible to show that $\mathsf{T}^i_2\not \subseteq\mathsf{S}^i_2+\mathsf{BB}(\Sigma^b_{i+1})$, under some plausible assumptions?
The conditional separation $\mathsf{S}^i_2\subsetneq \mathsf{S}^{i}_2 +\mathsf{BB}(\Sigma^b_{i+1}) \subsetneq \mathsf{S}^{i+1}_2$ is related to another open problem. The inclusions resemble the situation from Buss' hierarchy, where $\mathsf{S}^i_2\subseteq \mathsf{T}^i_2 \subseteq \mathsf{S}^{i+1}_2$. By a result of Ressayre~\cite{ressayre1986conservation}, the theory $\mathsf{S}^{i}_2 +\mathsf{BB}(\Sigma^b_{i+1})$ is $\forall\Sigma^b_{i+1}$-conservative over $\mathsf{S}^i_2$. This contrasts with the expected situation between $\mathsf{S}^i_2$ and $\mathsf{T}^i_2$, where different propositional proof systems correspond to the theories and thus no conservativity is expected. 

Hence arises the question of the difference between $\mathsf{S}^i_2+\mathsf{BB}(\Sigma^b_{i+1})$ and $\mathsf{T}^i_2$. We have already showed that $\mathsf{S}^i_2+\mathsf{BB}(\Sigma^b_{i+1})\not \subseteq \mathsf{T}^i_2$, assuming $\Sigma^p_{i+1}\not\subseteq\Delta^p_{i+1}/\poly$. By~\cite{krajivcek1993fragments} combined with the aforementioned conservativity result of $\mathsf{S}^i_2+\mathsf{BB}(\Sigma^b_{i+1})$ over $\mathsf{S}^i_2$, we obtain that the other separation $\mathsf{T}^i_2\not \subseteq\mathsf{S}^i_2+\mathsf{BB}(\Sigma^b_{i+1})$ is only known to follow from $\P^{\Sigma^p_i}$ being distinct from $\P^{\Sigma^p_i}[O(\log n)]$, the class of problems decidable by a polynomial-time oracle machine with $O(\log n)$-many calls to a $\Sigma^p_i$-oracle, and this hypothesis is not known to be implied by the polynomial hierarchy not collapsing. Therefore, it is natural to ask the following.

\begin{problem}
    Is it possible to show that $\mathsf{T}^i_2\not \subseteq\mathsf{S}^i_2+\mathsf{BB}(\Sigma^b_{i+1})$, under the assumption that the polynomial hierarchy does not collapse?
\end{problem}

What is more, in Corollary~\ref{corollarybothunprovabilities} we show that both unprovabilities of Krajíček--Oliveira and Pich--Santhanam are true for any theory which is included in $\PV_1+\mathsf{BB}(\Sigma^b_1)$ and $\PV_1+\mathsf{LLIND}(\mathsf{s}\Sigma^b_1)$. The authors are under the suspicion that $\PV_1+\mathsf{BB}(\mathsf{s}\Sigma^b_1,\log)$ might not be the greatest such theory.

\begin{problem}
    Is it possible to axiomatize the greatest common subtheory of $\PV_1+\mathsf{BB}(\Sigma^b_1)$ and $\PV_1+\mathsf{LLIND}(\mathsf{s}\Sigma^b_1)$ by some natural axiom scheme? Can we separate this theory from $\PV_1+\mathsf{BB}(\mathsf{s}\Sigma^b_1,\log)$ under some plausible assumptions?
\end{problem}

\subsection{Organisation}

The rest of the paper is organised as follows:
\begin{enumerate}
    \item In \Cref{sec:preliminaries}, we state some preliminaries on Bounded Arithmetic and the different axioms we use.
    \item In \Cref{sec:stgames}, we prove \Cref{thm:st1,thm:st2}, which concern the main technical tool of the separations among different classes of Student-Teacher Games.
    \item In \Cref{sec:witnessing}, we provide a unifying proof for witnessing theorems, and we show how it applies to the length induction axiom and the bounded replacement axiom, which correspond respectively to rounds and parallel queries per round in Student-Teacher Games.
    \item In \Cref{sec:separations}, we combine the results of the previous sections to prove \Cref{thm:bbsep,thm:lindsep,thm:bblindsep,thm:lindbbsep} about the separations of different Bounded Arithmetic theories.
    \item In \Cref{sec:unprovability}, we use the new witnessing theorems to lift the aforementioned unprovability results to seemingly stronger theories (\Cref{thm:krajicekoliveira,thm:pichsanthanam}).
    \item In \Cref{sec:appendix} of the Appendix, we provide some more detailed calculations for the functions used in the generalised versions of our theorems.
\end{enumerate}

%% file: preliminaries.tex
\section{Preliminaries}\label{sec:preliminaries}

We refer to \cite{buss1985bounded} and \cite{krajicek1995bounded} for an introduction to Bounded Arithmetic. Here, we state some facts from these sources which are useful for the purposes of the paper.

We first need to classify formulas in different classes.
\vspace{-0.5em}
\begin{definition}[Formulas]
\hfill 
\vspace{-0.5em}
\begin{itemize}
    \item The quantifiers of the form $\exists x\leq \abs{t}$ or $\forall y\leq \abs{t}$ for some term $t$, are called \emph{sharply bounded quantifiers}, and the quantifiers of the form $\exists x\leq t$ or $\forall y\leq t$ for some term $t$, are called \emph{bounded quantifiers}. We sometimes use the notation $\exists^b$ and $\forall^b$ for bounded quantifiers, if we do not care to determine the term $t$.
    \item The class of formulas $\Sigma^b_0=\Pi^b_0$ are called sharply bounded formulas, and they are the formulas where the only quantifiers appearing are sharply bounded.
    \item For $i\geq 0$, the classes $\Sigma^b_{i+1}$ and $\Pi^b_{i+1}$ are the smallest formula classes that satisfy the properties:
    \begin{enumerate}
        \item $\Sigma^b_{i}\cup \Pi^b_i\subseteq \Sigma^b_{i+1}$ and $\Sigma^b_{i}\cup \Pi^b_i\subseteq \Pi^b_{i+1}$.
        \item If $\phi\in \Sigma^b_{i+1}$, then $\exists^b\phi\in \Sigma^b_{i+1}$, and $\neg \phi \in \Pi^b_{i+1}$.
        \item If $\phi\in \Pi^b_{i+1}$, then $\forall^b\phi\in \Pi^b_{i+1}$, and $\neg \phi \in \Sigma^b_{i+1}$.
        \item $\Sigma^b_{i+1}$ and $\Pi^b_{i+1}$ are closed under sharply bounded quantification, disjunction and conjunction.
    \end{enumerate}
    \item For $i\geq 0$, the classes \emph{strict} $\Sigma^b_{i+1}$ ($\mathsf{s}\Sigma^b_{i+1}$) are defined as the formulas in the form
    $$
    \exists^b\:\forall^b\:\exists^b\dots \phi,
    $$
    where there are $i+1$ alternating bounded quantifiers and $\phi$ is a sharply bounded formula. The classes \emph{strict} $\Pi^b_{i+1}$ ($\mathsf{s}\Pi^b_{i+1}$) are defined dually.
    \item The class $\Delta^b_{i+1}$ is defined by the $\Sigma^b_{i+1}$ formulas which are proved to be equivalent with a $\Pi^b_{i+1}$ formula over a base theory specified (if not specified, it is implied we use $\SOT$).
\end{itemize}
\end{definition}

We introduce a generalisation of the sharply bounded quantifiers, when we have different functions of bounds.
\begin{definition}[$\bd$-bounded Formulas]
    For a bound function $\bd$, which is in the specified language, such that $\bd(x)\leq x$, the quantifiers $\exists x\leq \bd(|t|)$ and $\forall y\leq \bd(|t|)$ for some term $t$, are called $\bd$-sharply bounded quantifiers. For $i\geq 0$, the classes $\bd\Sigma^b_{i+1}$ and $\bd\Pi^b_{i+1}$ are the closures of $\mathsf{s}\Sigma^b_{i+1}$ and $\mathsf{s}\Pi^b_{i+1}$ over $\bd$-sharply bounded quantification, disjunction and conjunction.
\end{definition}

\begin{definition}[Axioms]
The theories we are interested in can be formalised by different axiom schemes. The main ones are:
\begin{enumerate}
    \item $\mathsf{IND}(\varphi):= \forall x \Big[\big(\varphi(0) \land \forall n <  x\, (\varphi(n)\rightarrow \varphi(n+1))\big) \rightarrow \varphi(x)\Big]$
    \item $\mathsf{LIND}(\varphi):= \forall x \Big[\big(\varphi(0) \land \forall n <  |x|\, (\varphi(n)\rightarrow \varphi(n+1))\big) \rightarrow \varphi(|x|)\Big]$
    \item $\mathsf{PIND}(\varphi):= \forall x \Big[\big(\varphi(0) \land \forall n \leq x\, (\varphi(\lfloor n/2 \rfloor )\rightarrow \varphi(n))\big) \rightarrow \varphi(x)\Big]$
    \item $\mathsf{MIN}(\varphi):= \forall x \Big[\varphi (x) \rightarrow \exists n\leq x\;\forall m<n\;\varphi(n) \land \neg \varphi(m) \Big]$
    \item $\mathsf{LMIN}(\varphi):= \forall x \Big[\varphi (x) \rightarrow \exists n\leq x\;\forall m\leq x\;\varphi(n) \land \big(\abs{m}<\abs{n} \rightarrow \neg \varphi(m) \big)\Big]$
    \item $\mathsf{MAX}(\varphi):= \forall x \Big[\varphi (0) \rightarrow \exists n\leq x\;\forall m\leq x\;\varphi(n) \land \big(n<m \rightarrow \neg \varphi(m) \big)\Big]$
    \item $\mathsf{LMAX}(\varphi):= \forall x \Big[\varphi (x) \rightarrow \exists n\leq x\;\forall m\leq x\;\varphi(n) \land \big(\abs{n}<\abs{m} \rightarrow \neg \varphi(m) \big)\Big]$
    \item $\mathsf{BB}(\varphi):= \forall x,t \Big[(\forall j \leq |x| \, \exists y \leq t \; \varphi(j,y))\;\rightarrow\;(\exists w \, \forall j \leq |x| \, \varphi(j, w_j)) \Big]$
\end{enumerate}
\end{definition}

The theories $\mathsf{S}^i_2$ and $\mathsf{T}^i_2$ of Buss's hierarchy are defined using $\mathsf{BASIC}$, which is a fixed set of axioms that determine basic operations of the symbols and arithmetic, and extending it by different kinds of induction:
$$
\mathsf{S}^i_2:= \mathsf{BASIC} + \mathsf{LIND}(\Sigma^b_i), \quad \quad \quad \mathsf{T}^i_2 := \mathsf{BASIC} + \mathsf{IND}(\Sigma^b_i).
$$
We will also need the theories $\PV_i$, defined in \cite{KrajicekPT91,Coo75}. The theory $\PV_{i+1}$ inductively defines symbols for all $\Sigma^p_i$ predicates and it is closed under definition by cases (see next subsection) and under limited recursion on notation. In fact, we can regard it as a universal theory that contains symbols for all functions in $\FP^{\Sigma^p_i}$ and also contains induction on open formulas. It is easy to see that $\PV_{i+1}$ can be regarded as a fully conservative extension of $\mathsf{T}^i_2$; that is why we use them interchangeably throughout the paper. The important property of $\PV_{i+1}$ is its universality, which enables us to use them for witnessing theorems. Let us note that for $i\geq 0,j\geq 1$ we always consider the classes of formulas $\Sigma^b_i$ and $\Pi^b_i$ to be over the language of $\PV$ and we do not allow $\PV_j$-symbols unless explicitly stated, this is to keep the complexity of the definable sets by the formulas intact.

With the language of PV, we mean the language of the theory $\PV_1$, which contains symbols for all $\FP$ functions. This is why it is useful to work on this language, and from now on, apart from the case of the theories $\PV_{i+1}$ for $i\geq 1$, this is the language we use.

In the following, we see some equivalences of different axioms.

\begin{proposition}\label{prop:axioms}
    Over the base theory $\SOT$ or $\PV_1$, we have the following equivalences for all $i\in \N$:
    \begin{enumerate}
        \item[(a)] $\mathsf{IND}(\Sigma^b_i) \iff \mathsf{MIN}(\Sigma^b_i) \iff \mathsf{MAX}(\Sigma^b_i)$.
        \item[(b)] $\mathsf{LIND}(\Sigma^b_i) \iff \mathsf{PIND}(\Sigma^b_i) \iff \mathsf{LMIN}(\Sigma^b_i) \iff \mathsf{LMAX}(\Sigma^b_i)$.
    \end{enumerate}
\end{proposition}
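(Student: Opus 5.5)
We prove (a) and (b) separately, each by a cycle of implications over $\PV_1$ (the argument over $\SOT$ is identical, since we only use the basic arithmetic of $\mathsf{BASIC}$ and closure of $\Sigma^b_i$ under substitution of terms and under negation into $\Pi^b_i$). The key fact used throughout is that induction for $\Sigma^b_i$ is equivalent to induction for $\Pi^b_i$. To get $\mathsf{IND}(\Pi^b_i)$ from $\mathsf{IND}(\Sigma^b_i)$, given a $\Pi^b_i$ formula $\psi$ with $\psi(0)$, $\forall n<x(\psi(n)\to\psi(n+1))$ and $\neg\psi(x)$, we apply $\mathsf{IND}$ to the $\Sigma^b_i$ formula $\theta(m):=\neg\psi(x\dot- m)$. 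Then $\theta(0)$ holds, the step follows from the contrapositive of the $\psi$-step, and so $\theta(x)$, i.e.\ $\neg\psi(0)$, a contradiction. The same reversal works for $\mathsf{LIND}$ using $x\dot- m$ replaced by $\lfloor x/2^{m}\rfloor$ (definable by a $\PV$ term), i.e.\ via $\mathsf{PIND}$.

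For (a): $\mathsf{IND}(\Sigma^b_i)\Rightarrow\mathsf{MIN}(\Sigma^b_i)$. Assume $\varphi(x)$ but no least witness, and apply $\mathsf{IND}(\Pi^b_i)$ to $\psi(n):=\forall m\le n\,\neg\varphi(m)$ (a bounded $\forall$ in front of a $\Pi^b_i$ formula, hence still $\Pi^b_i$). We get $\psi(x)$, contradicting $\varphi(x)$. The step is exactly the absence of a least element. For $\mathsf{MIN}\Rightarrow\mathsf{MAX}$ we apply $\mathsf{MIN}$ (for $\Pi^b_i$, obtained as above by the same duality) to $\varphi'(m):=\varphi(x\dot- m)$, which yields the largest element $\le x$. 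For $\mathsf{MAX}\Rightarrow\mathsf{IND}$, given $\varphi(0)$ and the step, take the maximal $n\le x$ with $\varphi(n)$. If $n<x$, the step gives $\varphi(n+1)$, contradicting maximality, so $n=x$.

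For (b) we use the same pattern on lengths. For $\mathsf{LIND}\Leftrightarrow\mathsf{PIND}$ we use the standard Buss argument. For $\mathsf{PIND}\Rightarrow\mathsf{LIND}$, apply $\mathsf{PIND}$ to $\varphi(|n|)$, using $|\lfloor n/2\rfloor|=|n|-1$ for $n>0$. Conversely, apply $\mathsf{LIND}$ to $\psi(k):=\varphi(\lfloor x/2^{|x|\dot- k}\rfloor)$. Then $\mathsf{LMIN}$, $\mathsf{LMAX}$ follow as in (a) by induction on $\psi(k):=\forall m\le x(|m|\le k\to\neg\varphi(m))$; this is again $\Pi^b_i$, with $\mathsf{LIND}(\Pi^b_i)$ obtained by reversal. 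Conversely, $\mathsf{LMAX}$ applied to $\varphi'(m):=\varphi(|m|)$ for $m\le x$ gives the maximal length $k$ with $\varphi(k)$, and the step forces $k=|x|$, which gives $\mathsf{LIND}$.

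The main obstacle is bookkeeping of the quantifier complexity. The auxiliary formulas introduce a bounded universal quantifier, so they land in $\Pi^b_i$ rather than $\Sigma^b_i$, and we must justify $\Sigma^b_i\leftrightarrow\Pi^b_i$ induction, including the case $i=0$. In that case the formulas are sharply bounded, hence decided by $\PV$ functions, and open induction in $\PV_1$ handles everything. All of this is classical (Buss's thesis), and we would cite \cite{buss1985bounded,krajicek1995bounded} for the routine verifications.
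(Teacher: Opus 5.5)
The paper gives no proof here; it recalls the proposition as classical (Buss). For $i\ge 1$ your cycles are the standard argument and are essentially correct. The genuine problem is your treatment of $i=0$. When $\varphi\in\Sigma^b_0$, the auxiliary formulas $\forall m\le n\,\neg\varphi(m)$ and $\forall m\le x\,(|m|\le k\to\neg\varphi(m))$ are \emph{not} sharply bounded, because $\forall m\le n$ is a bounded quantifier. They are therefore $\Pi^b_1$, and open induction in $\PV_1$ does not apply to them. So your steps $\mathsf{IND}\Rightarrow\mathsf{MIN}$ and $\mathsf{LIND}\Rightarrow\mathsf{LMIN}$ have no justification at $i=0$.

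This cannot be patched, because at $i=0$ the claim itself is out of reach. From $\mathsf{MIN}(\Sigma^b_0)$ you get $\mathsf{MIN}$ for every strict $\Sigma^b_1$ formula $\exists y\le t\,\theta(n,y)$. Minimize the open formula expressing $\theta(\lfloor z/2^L\rfloor, z\bmod 2^L)$ (together with the bound on $y$) over codes $z=n\cdot 2^L+y$, where $L$ is a uniform length bound on witnesses; the least code then carries the least $n$. Applying this to $\varphi(x\dot- m)$ gives $\mathsf{IND}(\mathsf{s}\Sigma^b_1)$, hence $\mathsf{LIND}(\mathsf{s}\Sigma^b_1)$. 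By Theorem~\ref{theoremllindprovesbb} this yields $\mathsf{BB}(\Sigma^b_1)$ and so all of $\mathsf{S}^1_2$.

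Since $\PV_1$ proves $\mathsf{IND}(\Sigma^b_0)$ outright, (a) at $i=0$ over $\PV_1$ would yield $\PV_1\vdash\mathsf{S}^1_2$. KPT rules this out under $\NP\not\subseteq\P/\poly$, which is the paper's own standing assumption. A length-monotone version of the same pairing does the same to $\mathsf{LMIN}(\Sigma^b_0)$ in (b) over $\PV_1$. Over $\SOT$, (a) at $i=0$ would amount to $\SOT=\mathsf{T}^1_2$, which is open. So the statement should be restricted to $i\ge 1$, as in Buss, and you should have flagged this instead of calling the case routine.

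Smaller points for $i\ge1$:
\begin{enumerate}
\item In $\mathsf{MIN}\Rightarrow\mathsf{MAX}$ you do not need $\Pi^b_i$-$\mathsf{MIN}$, since $\varphi(x\dot- m)$ is already $\Sigma^b_i$. Also, the $x\dot- m$ reversal turns $\mathsf{MIN}$ into $\mathsf{MAX}$, not $\Sigma^b_i$-$\mathsf{MIN}$ into $\Pi^b_i$-$\mathsf{MIN}$.
\item The reversal for $\mathsf{LIND}$ is simply $|x|\dot- m$; the $\lfloor x/2^m\rfloor$ device belongs to $\mathsf{PIND}$.
\item The paper's $\mathsf{LMAX}$ has premise $\varphi(x)$, which makes it trivial (take $n=x$). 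Your $\mathsf{LMAX}\Rightarrow\mathsf{LIND}$ step silently uses the intended premise $\varphi(0)$, and you should say so.
\item Your cycle in (b) never explicitly leaves $\mathsf{LMIN}$. The cleanest closure is to apply $\mathsf{LMIN}$ to the $\Sigma^b_i$ formula $\varphi(|x|\dot-|m|)$, which holds at $m=x$. By the $\mathsf{LIND}$ step its minimal-length witness must have length $0$, which gives $\varphi(|x|)$.
\end{enumerate}
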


This means that we can axiomatise the different theories in Buss's hierarchy in the following way:
$$
\mathsf{S}^i_2 := \mathsf{BASIC} + \mathsf{LIND}(\Sigma^b_i) = \mathsf{BASIC} + \mathsf{PIND}(\Sigma^b_i) = \mathsf{BASIC} + \mathsf{LMIN}(\Sigma^b_i) = \mathsf{BASIC} + \mathsf{LMAX}(\Sigma^b_i).
$$
$$
\mathsf{T}^i_2 := \mathsf{BASIC} + \mathsf{IND}(\Sigma^b_i) = \mathsf{BASIC} + \mathsf{MIN}(\Sigma^b_i) = \mathsf{BASIC} + \mathsf{MAX}(\Sigma^b_i).
$$
$\mathsf{BASIC}$ here is a fixed set of axioms that determine basic operations of the symbols and arithmetic.

In this work, we also consider the axioms with some specific bound, $\bd$ on the length of the variables. This function $\bd$ must be a symbol of $\PV$, increasing and  $\PV_1$ must prove that $\bd(x)\leq x$.

\begin{enumerate}
    \item $\mathsf{LIND}(\varphi,\bd):= \forall x \Big[\big(\varphi(0) \land \forall n < \bd( |x|)\, (\varphi(n)\rightarrow \varphi(n+1))\big) \rightarrow \varphi(\bd(|x|))\Big]$
    \item $\mathsf{LMIN}(\varphi,\bd):= \forall x \Big[\varphi (x) \rightarrow \exists n\leq x\;\forall m\leq x\;\varphi(n) \land \big(\bd(\abs{m})<\bd(\abs{n}) \rightarrow \neg \varphi(m) \big)\Big]$
    \item $\mathsf{LMAX}(\varphi,\bd):=\forall x \Big[\varphi (x) \rightarrow \exists n\leq x\;\forall m\leq x\;\varphi(n) \land \big(\bd(\abs{n})<\bd(\abs{m} )\rightarrow \neg \varphi(m) \big)\Big]$
    \item $\mathsf{BB}(\varphi,\bd):=  \forall x,t \Big[(\forall j \leq \bd(|x|) \, \exists y \leq t \; \varphi(j,y))\;\rightarrow\;(\exists w \, \forall j \leq \bd(|x| )\, \varphi(j, w_j)) \Big]$
\end{enumerate}

In a similar way with \Cref{prop:axioms}, we can show the following.

\begin{proposition}\label{prop:lind}
    Over the base theory $\SOT$ or $\PV_1$, we have the following equivalences for all $i\in \N$:
$$\mathsf{LIND}(\Sigma^b_i,\bd)  \iff \mathsf{LMIN}(\Sigma^b_i,\bd) \iff \mathsf{LMAX}(\Sigma^b_i,\bd).$$

\end{proposition}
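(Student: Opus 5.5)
We prove the cycle of implications $\mathsf{LIND}(\Sigma^b_i,\bd)\Rightarrow\mathsf{LMIN}(\Sigma^b_i,\bd)\Rightarrow\mathsf{LMAX}(\Sigma^b_i,\bd)\Rightarrow\mathsf{LIND}(\Sigma^b_i,\bd)$ over $\PV_1$. We follow the standard proof of \Cref{prop:axioms}(b), replacing the length function $\abs{\cdot}$ everywhere by $\bd(\abs{\cdot})$. Two facts about $\bd$ are used throughout: it is increasing, and $\PV_1\vdash \bd(x)\le x$. Together these make $\bd(\abs{x})\le\abs{x}$ small, so every induction below runs over a sharply bounded range. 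Since $\Sigma^b_i$ is closed under sharply bounded quantification and under substitution of $\PV$-terms, all auxiliary formulas stay in $\Sigma^b_i$.

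For $\mathsf{LIND}\Rightarrow\mathsf{LMIN}$, assume $\varphi(x)$ holds and no $n\le x$ is $\bd$-length-minimal. Consider
\[
\psi(k):=\forall m\le x\,\bigl(\bd(\abs{m})< k\rightarrow\neg\varphi(m)\bigr).
\]
The universal quantifier here is not sharply bounded. So instead we use the dual
\[
\psi'(k):=\neg\exists m\le x\,\bigl(\bd(\abs{m})\le k\wedge\varphi(m)\bigr),
\]
or rather its contrapositive form. This is $\Pi^b_i$, and $\mathsf{LIND}$ for $\Pi^b_i$ formulas follows from $\mathsf{LIND}$ for $\Sigma^b_i$ formulas by the usual reversal trick: run the induction on $\neg\psi'(\bd(\abs{x})-k)$. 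The induction itself is routine. If some witness $m$ has $\bd(\abs{m})=k+1$, then since it is not minimal there is an $m'$ with $\bd(\abs{m'})<k+1$ and $\varphi(m')$, which propagates $\psi'$ from $k$ to $k+1$. The case $k=\bd(\abs{x})$ then contradicts $\varphi(x)$, because monotonicity of $\bd$ gives $\bd(\abs{x})$ as an upper bound for every $m\le x$.

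For $\mathsf{LMIN}\Leftrightarrow\mathsf{LMAX}$, apply each principle to a transformed formula. Given $\varphi$, take
\[
\chi(y):=\exists m\le x\,\bigl(\varphi(m)\wedge \bd(\abs{m})\ge \bd(\abs{x})-\bd(\abs{y})\bigr),
\]
which is still $\Sigma^b_i$. One must find a $y\le x$ realising each prescribed value of $\bd(\abs{y})$. Because $\bd$ need not be surjective, we handle this by choosing $y=2^{j}-1$ for a suitable $j\le\abs{x}$, found by bounded search over $j\le\abs{x}$.

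For $\mathsf{LMAX}\Rightarrow\mathsf{LIND}$, suppose $\varphi(0)$ holds, $\varphi(n)\rightarrow\varphi(n+1)$ for all $n<\bd(\abs{x})$, and $\neg\varphi(\bd(\abs{x}))$. Apply $\mathsf{LMAX}$ to
\[
\theta(y):=\exists j\le\abs{x}\,\bigl(\bd(j)=\bd(\abs{y})\wedge \varphi(\bd(\abs{y}))\bigr),
\]
or more simply to $\varphi(\bd(\abs{y}))$ itself, with $y$ ranging below $x$. A $\bd$-length-maximal $y$ with $\varphi(\bd(\abs{y}))$ yields a value $n=\bd(\abs{y})<\bd(\abs{x})$ for which $\varphi(n+1)$ holds by the inductive hypothesis. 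We then need some $y'$ with $\bd(\abs{y'})=n+1$ to contradict maximality.

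\textbf{Main obstacle.} This last step is the one I expect to be hardest: $\bd\circ\abs{\cdot}$ may skip the value $n+1$. My first remedy is to reformulate the induction as ranging over values $k\le\bd(\abs{x})$ directly. To do so, apply $\mathsf{LMAX}$ with the parameter $x$ replaced by $2^{\bd(\abs{x})}$, which is available as a $\PV$ term since $\bd(\abs{x})\le\abs{x}$, and then use plain $\mathsf{LMAX}(\Sigma^b_i)$ on lengths at most $\bd(\abs{x})$. If that substitution does not fit the exact syntactic form of the scheme, I will instead prove each scheme equivalent to the uniform statement ``$\mathsf{LIND}$ on $k\le \bd(\abs{x})$'' by these same term substitutions, and close the cycle through that intermediate form.
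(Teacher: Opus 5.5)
\textbf{There is a genuine gap, and it cannot be closed without an extra assumption on $\bd$.}

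Your step $\mathsf{LIND}(\Sigma^b_i,\bd)\Rightarrow\mathsf{LMIN}(\Sigma^b_i,\bd)$ is fine, and it works for every non-decreasing $\bd$. The switch from $\psi$ to $\psi'$ is cosmetic, since both are $\Pi^b_i$, and the reversal trick handles it. The gap is in the other two steps. It is exactly the obstacle you flagged, and none of your remedies removes it.

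Taking $y=2^j-1$ gives $\bd(\abs{y})=\bd(j)$. So a bounded search over $j\le\abs{x}$ only produces values in the range of $\bd$ on $[0,\abs{x}]$. It is useless when the value you need is skipped: $\bd(\abs{x})-\bd(\abs{m'})$ in the $\chi$-argument, or $\bd(\abs{y})+1$ in the $\mathsf{LMAX}\Rightarrow\mathsf{LIND}$ step.

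Replacing $x$ by $2^{\bd(\abs{x})}$ does not help either. The scheme still compares values $\bd(\abs{m})$, now only for $\abs{m}\le\bd(\abs{x})+1$. ``Plain $\mathsf{LMAX}(\Sigma^b_i)$ on lengths at most $\bd(\abs{x})$'' is not an instance of any scheme you have. Your fallback just restates the goal.

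\textbf{Why no argument works for arbitrary $\bd$.} Let $\bd(j)$ be the largest power of two $\le j$, with $\bd(0)=0$. This is non-decreasing and satisfies $\bd(x)\le x$.
\begin{itemize}
\item For every $m,n$, $\bd(\abs{m})<\bd(\abs{n})$ iff $\abs{\abs{m}}<\abs{\abs{n}}$. Hence $\mathsf{LMIN}(\varphi,\bd)$ is provably equivalent to $\mathsf{LMIN}(\varphi,\abs{\cdot})$, which follows from $\mathsf{LLIND}(\varphi)$ by your own first step.
\item Let $x'$ be the self-concatenation of $x$, so $\bd(\abs{x'})=\bd(2\abs{x})>\abs{x}$. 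Applying $\mathsf{LIND}(\varphi,\bd)$ at $x'$ to the formula $\varphi(\min(n,\abs{x}))$ yields ordinary $\mathsf{LIND}(\varphi)$.
\end{itemize}
So the claimed equivalence would let double-length induction prove length induction. For strict formulas, the form used in Proposition~\ref{prop:phi}, this gives $\mathsf{T}^i_2+\mathsf{LIND}(\mathsf{s}\Sigma^b_{i+1},\abs{\cdot})\vdash\mathsf{LIND}(\mathsf{s}\Sigma^b_{i+1})$. That contradicts Theorem~\ref{thm:lindsep} with $\bd_1(x)=x$ and $\bd_2(x)=\abs{x}$, under its hypothesis.

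\textbf{What is missing.} You need an assumption making $\bd\circ\abs{\cdot}$ hit every value in $[0,\bd(\abs{x})]$, for example $\PV_1\vdash\bd(j+1)\le\bd(j)+1$. This holds for $x$, $\abs{x}$ and $\abs{\abs{x}}$. The paper itself only says the proof is similar to Proposition~\ref{prop:axioms}, so it does not address this point either. Under this assumption, the least $j\le\abs{x}$ with $\bd(j)\ge k$ has $\bd(j)=k$. Then $y=2^j-1$ realises every $k\le\bd(\abs{x})$, and both of your remaining steps go through as sketched.

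\textbf{The hypothesis of $\mathsf{LMAX}$.} State explicitly that you read $\mathsf{LMAX}$ with hypothesis $\varphi(0)$, as in $\mathsf{MAX}$. With the hypothesis $\varphi(x)$ as printed, $n=x$ always witnesses it, so that scheme is trivially provable and cannot imply $\mathsf{LIND}$.
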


The bounded replacement axiom scheme for $\Sigma^b_{i+1}$ formulas is used in a theory to convert a $\Sigma^b_{i+1}$ formula with arbitrary sharply bounded quantifiers into an equivalent one which is strict. This is sometimes called ``sharply bounded quantifier exchange property'', analogously with the property in unbounded arithmetic. The following is similar as Corollary 15 in \cite{buss1985bounded}.

\begin{proposition}[Sharply bounded quantifier exchange property]
    The theory $\PV_1+\mathsf{BB}(\mathsf{s}\Sigma^b_{i+1})$ proves that for every $\Sigma^b_{i+1}$ formula, $\varphi$, there is an equivalent $\mathsf{s}\Sigma^b_{i+1}$ formula, $\varphi'$.
\end{proposition}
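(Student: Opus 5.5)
The plan is to argue by structural induction on the construction of the $\Sigma^b_{i+1}$ formula $\varphi$, working inside $\PV_1+\mathsf{BB}(\mathsf{s}\Sigma^b_{i+1})$. We show that every $\Sigma^b_{i+1}$ formula, and dually every $\Pi^b_{i+1}$ formula, is provably equivalent to an $\mathsf{s}\Sigma^b_{i+1}$ (respectively $\mathsf{s}\Pi^b_{i+1}$) formula. We do this simultaneously for all levels $j\le i+1$, using that the dual class at level $j\le i$ is handled by negation.

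Several cases are routine.
\begin{itemize}
\item Sharply bounded formulas are already strict.
\item Conjunctions and disjunctions of strict formulas become strict by renaming variables and pulling the bounded quantifiers to the front, using pairing to merge adjacent like quantifiers. This prenexing is valid in $\PV_1$.
\item Bounded existential quantification $\exists y\le t$ in front of an $\mathsf{s}\Sigma^b_{i+1}$ formula merges with its leading block by pairing.
\item Negation swaps $\mathsf{s}\Sigma$ and $\mathsf{s}\Pi$.
\item At lower levels, a $\Sigma^b_j$ or $\Pi^b_j$ subformula with $j\le i$ sits inside an $\mathsf{s}\Sigma^b_{i+1}$ formula after padding with dummy quantifiers.
\end{itemize}

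The essential case is sharply bounded quantification. Take $\forall j\le|s|\,\psi(j)$ with $\psi\equiv\exists y\le t\,\theta(j,y)$, where $\theta$ is strict $\Pi^b_i$ by the induction hypothesis. Then $\psi$ is $\mathsf{s}\Sigma^b_{i+1}$, and $\mathsf{BB}(\psi)$ gives
\[\forall j\le|s|\,\exists y\le t\,\theta(j,y)\;\rightarrow\;\exists w\,\forall j\le|s|\,\theta(j,w_j).\]
The converse implication is trivial. The witness $w$ is bounded by a term, since it codes a sequence of $|s|+1$ numbers each at most $t$. So $\forall j\le|s|\,\psi(j)$ is equivalent to $\exists w\le B\,\forall j\le|s|\,\theta(j,(w)_j)$.

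This is not yet strict, because a sharply bounded $\forall j$ now sits in front of the strict $\Pi^b_i$ formula $\theta$. We therefore apply the induction hypothesis one level down to push $\forall j\le|s|$ inside $\theta$. Write $\theta=\forall z\le u\,\chi$ with $\chi$ in $\mathsf{s}\Sigma^b_{i-1}$.
\begin{itemize}
\item A sharply bounded $\forall$ commutes with a bounded $\forall$ by pairing: $\forall j\,\forall z\,\chi\equiv\forall\langle j,z\rangle\,\chi$.
\item The remaining sharply bounded $\forall j$ in front of the $\exists$ inside $\chi$ is handled by the level-$i$ instance of the same argument.
\end{itemize}
This step needs $\mathsf{BB}(\mathsf{s}\Sigma^b_{i})$, and that follows from $\mathsf{BB}(\mathsf{s}\Sigma^b_{i+1})$ because $\mathsf{s}\Sigma^b_i\subseteq \mathsf{s}\Sigma^b_{i+1}$ after padding. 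The sharply bounded existential case $\exists j\le|s|\,\psi$ is simpler: it merges into the leading bounded $\exists$, giving $\exists j\le|s|\,\exists y\le t\,\theta\equiv\exists\langle j,y\rangle\,\theta$. Dually, for $\Pi$ formulas, a sharply bounded $\exists$ in front of a leading $\forall$ is handled by the contrapositive of $\mathsf{BB}$ applied to the negation.

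The main obstacle is bookkeeping this nested induction correctly. One must ensure that pushing a sharply bounded quantifier through all $i+1$ alternations only uses $\mathsf{BB}$ instances for strict formulas of level at most $i+1$. One must also ensure that the bases for the $\mathsf{BB}$ instances are $\PV$ terms, so that the sequence coding and its bound $B$ are available in $\PV_1$. We organise this as a lemma: for each $k\le i+1$, the theory proves that $\forall j\le|s|$ and $\exists j\le|s|$ applied to $\mathsf{s}\Sigma^b_k$ and $\mathsf{s}\Pi^b_k$ formulas yield formulas equivalent to strict ones of the same class. The lemma is proved by induction on $k$. The base case $k=0$ is trivial, and the case $\mathsf{s}\Pi^b_k$ follows from $\mathsf{s}\Sigma^b_k$ by negation. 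The proposition then follows from the lemma by the structural induction above.
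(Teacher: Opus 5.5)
Your proof is correct and follows essentially the same route as the paper: induction on the structure of the formula, where the only nontrivial case, a sharply bounded $\forall j\le|s|$ in front of $\exists y\le t\,\theta$, is handled by a single $\mathsf{BB}$ instance for the strict $\Pi^b_i$ matrix $\theta$ (padded to $\mathsf{s}\Sigma^b_{i+1}$). One correction: for $i\ge 1$, once $\forall j\le|s|$ is paired with the leading bounded $\forall z$ of $\theta$, the formula $\exists w\le B\,\forall\langle j,z\rangle\,\chi$ is already strict (for $i=0$, $\forall j\le|s|\,\theta$ is simply sharply bounded), so there is no ``remaining'' $\forall j$ to push further and the nested pushing you describe is superfluous; lower-level $\mathsf{BB}$ instances are needed only to strictify the $\Sigma^b_i$ and $\Pi^b_i$ subformulas that form the base of the inductive definition of $\Sigma^b_{i+1}$.
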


We generalise this property to $\bd$-sharply bounded quantifiers, which can be considered as the goal of defining $\mathsf{BB}(\mathsf{s}\Sigma^b_{i+1},\bd)$ in the first place.

\begin{proposition}[$\bd$-sharply bounded quantifier exchange property]\label{prop:bd-exchange}
    The theory $\PV_1+\mathsf{BB}(\mathsf{s}\Sigma^b_{i+1},\bd)$ proves that for every $\bd\Sigma^b_{i+1}$ formula, $\varphi$, there is an equivalent $s\Sigma^b_{i+1}$ formula, $\varphi'$.
\end{proposition}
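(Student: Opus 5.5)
We prove the statement by induction on the construction of a $\bd\Sigma^b_{i+1}$ formula $\varphi$, working inside $\PV_1+\mathsf{BB}(\mathsf{s}\Sigma^b_{i+1},\bd)$. The invariant is that every subformula is provably equivalent to an $\mathsf{s}\Sigma^b_{i+1}$ formula, and that the same holds for the dual $\bd\Pi^b_{i+1}$ subformulas with $\mathsf{s}\Pi^b_{i+1}$ in place of $\mathsf{s}\Sigma^b_{i+1}$.

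The base case, $\mathsf{s}\Sigma^b_{i+1}$ formulas, is trivial. Closure under $\wedge$ and $\vee$ is handled by prenexing in $\PV_1$. Given $\exists y_1\le t_1\,\psi_1 \wedge \exists y_2\le t_2\,\psi_2$, we pair the witnesses into a single bounded existential $\exists u\le \langle t_1,t_2\rangle$ and repeat at each alternation level. This is possible because the quantifier strings have matching shape and variables can be renamed. The innermost sharply bounded matrices stay $\Sigma^b_0$, and these are equivalent to open $\PV$ formulas. A $\bd$-sharply bounded universal $\forall j\le \bd(|s|)$ placed in front of an $\mathsf{s}\Sigma^b_{i+1}$ formula ending in another strict block is likewise absorbed, by the same argument one level down.

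The key case is a $\bd$-sharply bounded quantifier in front of a strict formula. An existential $\exists j\le \bd(|s|)\,\exists y\le t\,\psi$ is merged into a single bounded existential over the pair $(j,y)$. For the universal case $\forall j\le \bd(|s|)\,\exists y\le t\,\psi(j,y)$ with $\psi\in\mathsf{s}\Pi^b_i$, we apply $\mathsf{BB}(\mathsf{s}\Sigma^b_{i+1},\bd)$ with $x:=s$ to the $\mathsf{s}\Sigma^b_{i+1}$ formula $\exists y'\le t\,(y'=y\wedge\psi(j,y'))$, or simply to $\psi$ viewed as a strict formula with parameter $y$. This gives $\exists w\le t'\,\forall j\le \bd(|s|)\,\psi(j,w_j)$, where $t'$ is a $\PV$ term bounding codes of sequences of length $\bd(|s|)+1$ with entries $\le t$; here we use $\bd(|s|)\le |s|$. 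The converse implication is provable in $\PV_1$ by taking $y:=w_j$. It then remains to push $\forall j\le\bd(|s|)$ through the remaining strict $\mathsf{s}\Pi^b_i$ formula $\psi(j,w_j)$. We handle this by the dual induction hypothesis at level $i$: the formula $\forall j\,\forall z\,\ldots$ is merged, and the subsequent $\forall j\,\exists$ exchanges again use $\mathsf{BB}$ applied to lower-level strict formulas. These are in particular $\mathsf{s}\Sigma^b_{i+1}$, so the scheme covers them. We therefore run the whole induction simultaneously for all levels $\le i+1$.

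The main obstacle is making this bookkeeping rigorous. Each application of $\mathsf{BB}$ must be to a formula genuinely of class $\mathsf{s}\Sigma^b_{i+1}$, possibly with parameters, and the index range must be exactly of the form $\bd(|x|)$ for some term $x$. The natural route is to take $x$ to be the term appearing in the quantifier bound $\bd(|s|)$ and to rely on parameters being allowed in the scheme. A secondary point is that the universal case at the innermost level, $\forall j\le\bd(|s|)$ in front of a $\Sigma^b_0$ matrix, is already sharply bounded because $\bd(|s|)\le|s|$. That case therefore needs no axiom, which closes the induction.
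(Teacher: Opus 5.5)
Your proposal is correct and follows essentially the same route as the paper. It uses induction on the construction of the formula, and the only nontrivial case, $\forall j\le\bd(|s|)\,\exists^b y\,\psi(j,y)$, is handled by applying $\mathsf{BB}(\mathsf{s}\Sigma^b_{i+1},\bd)$ to the $\mathsf{s}\Pi^b_i$ matrix $\psi$ and then absorbing $\forall j$ into the leading bounded universal quantifier of $\psi(j,w_j)$; you are more careful than the paper about bounding $w$ and about the converse direction. One small remark: once $\forall j$ is merged with that leading $\forall z$, the formula is already strict, so the ``subsequent $\forall j\,\exists$ exchanges'' and the simultaneous induction over all levels are unnecessary, though harmless.
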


\begin{proof}
    We prove this inductively, and we only consider the case of $\bd$-sharply bounded quantification, the cases of disjunction and conjunction are trivial. Assume that $\varphi = (\forall x\leq \bd(t))\, \varphi_0$, where $\varphi_0\in \bd\Sigma^b_{i+1}$ is equivalent over the theory with the $\mathsf{s}\Sigma^b_{i+1}$ formula $\varphi_0'$. We can write $\varphi_0'=\exists^by \:\varphi_1'$, so from $\mathsf{BB}(\varphi_1',\bd)$, we get
    $$
    \forall x\leq \bd(t) \;\exists^b y\;\varphi_1'(x,y) \rightarrow \exists^b w\;\forall x\leq \bd(t)\;\varphi_1'(x,w_x)
    $$
    This is actually an equivalence and the left-hand side is equivalent with $\varphi$, while the right-hand side is strict, which gives us the desired result.
\end{proof}

An important fact for the theories we use, is that length induction can prove the bounded replacement scheme for the same class of formulas. The original result is the following.

\begin{theorem}[\cite{buss1985bounded},Theorem 14]\label{theoremsotprovesbb1}
    For $i\geq 0$, $\mathsf{S}^{i+1}_2 \vdash \mathsf{BB}(\Sigma^b_{i+1})$.
\end{theorem}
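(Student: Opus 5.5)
We prove $\mathsf{BB}(\varphi)$ for $\varphi(j,y)\in\Sigma^b_{i+1}$ inside $\mathsf{S}^{i+1}_2$, reasoning with one of the equivalent forms of length induction from \Cref{prop:axioms}(b). Fix $x,t$ and assume $\forall j\le|x|\,\exists y\le t\,\varphi(j,y)$. The idea is to build the sequence $w$ one coordinate at a time. Length induction runs over $k\le|x|$, which is only logarithmically many steps, so sequences of length $k+1$ with entries $\le t$ stay polynomially bounded in $x,t$. This is what keeps the induction formula bounded.

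\textbf{Step 1 (induction formula).} Let
\[
\psi(k) := \exists w\le s(x,t)\;\bigl(\Len(w)=k+1 \land \forall j\le k\;(w_j\le t\land \varphi(j,w_j))\bigr),
\]
where $s(x,t)$ is a term bounding the codes of all sequences of length at most $|x|+1$ with entries at most $t$. Such a term exists, with $|s|$ roughly $(|x|+1)(|t|+1)$, by the standard sequence coding available in $\SOT$.

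\textbf{Step 2 (complexity of $\psi$).} Since $\varphi\in\Sigma^b_{i+1}$ and the quantifier $\forall j\le k$ is sharply bounded (as $k\le|x|$), the formula $\psi$ is $\Sigma^b_{i+1}$ by closure of $\Sigma^b_{i+1}$ under sharply bounded quantification and bounded existential quantification. Note that $\psi$ is not strict, which is why we need $\mathsf{LIND}$ for all of $\Sigma^b_{i+1}$, and this is exactly what $\mathsf{S}^{i+1}_2$ provides.

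\textbf{Step 3 (base case).} $\psi(0)$ follows from the hypothesis at $j=0$: take $y$ with $\varphi(0,y)$ and let $w=\langle y\rangle$.

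\textbf{Step 4 (induction step).} Given $\psi(k)$ with $k<|x|$, take its witness $w$. The hypothesis at $j=k+1$ gives $y\le t$ with $\varphi(k+1,y)$. Let $w'$ be $w$ with $y$ appended. The $\PV$ functions for appending and projecting verify $w'_j=w_j$ for $j\le k$ and $w'_{k+1}=y$, and that $w'\le s(x,t)$ by monotonicity of the bound. Hence $\psi(k+1)$ holds.

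\textbf{Step 5 (conclusion).} $\mathsf{LIND}(\psi)$ then yields $\psi(|x|)$, which is exactly the conclusion of $\mathsf{BB}(\varphi)$. Parameters $x,t$ and any side parameters of $\varphi$ are carried through; $\mathsf{LIND}$ with parameters is available in $\mathsf{S}^{i+1}_2$.

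The main obstacle is the bookkeeping in Steps 1 and 4. We must choose a coding of sequences and a bounding term $s$ for which $\mathsf{S}^{i+1}_2$, or already $\SOT$, proves the needed properties: that appending preserves earlier entries, and that the code of a sequence of length at most $|x|+1$ with entries at most $t$ lies below $s(x,t)$. We should also confirm that $\psi$ lands in $\Sigma^b_{i+1}$ syntactically and not merely up to equivalence. Both points are routine but must be arranged so that $k$ only ranges up to $|x|$; this range is what makes the bound $s$ polynomial in the lengths of $x$ and $t$.
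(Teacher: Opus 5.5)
Your proof is correct. It is essentially Buss's original argument. One syntactic point needs tidying: $\forall j\le k$ is not literally a sharply bounded quantifier, so write it as $\forall j\le |x|\,(j\le k\to\dots)$. With that change $\psi\in\Sigma^b_{i+1}$ on the nose, not just up to equivalence.

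The paper does not reprove this cited theorem. Its proof of the generalisation \Cref{theoremllindprovesbb} takes a different route:
\begin{itemize}
\item For strict $\varphi=\exists^b z\,\varphi'$, it records the witnesses $y$ and also the inner witnesses $z$ in two sequences $w,z'$.
\item The induction formula then has the form $\exists^b w,z'\,\forall^b x\,(\dots\to\varphi'(x,w_x,z'_x))$. This is strict $\Sigma^b_{i+1}$, because the universal quantifier merges with the leading quantifier of $\varphi'$.
\item Non-strict formulas are then recovered through the exchange property \Cref{prop:bd-exchange}.
\end{itemize}

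Your induction formula puts a sharply bounded $\forall j$ in front of a possibly non-strict $\varphi$. So it is not strict, and you genuinely use $\mathsf{LIND}$ for all of $\Sigma^b_{i+1}$. That is available in $\mathsf{S}^{i+1}_2$ by definition, so for this statement your route is shorter and cleaner.

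The paper's extra bookkeeping buys a stronger conclusion: only strict induction is needed, over $\mathsf{T}^i_2$ and with a general bound $\bd$. This matters in the paper. With $\bd=\log$, your version would need double-length induction on non-strict formulas. The paper, however, uses that $\mathsf{PV}_1+\mathsf{LLIND}(\mathsf{s}\Sigma^b_1)$ already proves $\mathsf{BB}(\mathsf{s}\Sigma^b_1,\log)$, for instance in \Cref{corollarybothunprovabilities}.
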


In the generalised form, we have:

\begin{theorem}[Length induction proves bounded replacement]\label{theoremllindprovesbb}
    For $i\geq 0$, and a function $\bd$, as above $\mathsf{T}^{i}_2 + \mathsf{LIND}(\mathsf{s}\Sigma^b_{i+1},\bd) \vdash \mathsf{BB}(\bd\Sigma^b_{i+1},\bd)$.
\end{theorem}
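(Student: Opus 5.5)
The plan is to adapt Buss's proof of $\mathsf{S}^{i+1}_2\vdash\mathsf{BB}(\Sigma^b_{i+1})$ (\Cref{theoremsotprovesbb1}), replacing length induction up to $|x|$ with length induction up to $\bd(|x|)$. The first step handles the strict case. Fix $\varphi(j,y)\in\mathsf{s}\Sigma^b_{i+1}$, parameters $x,t$, and assume $\forall j\le \bd(|x|)\,\exists y\le t\,\varphi(j,y)$. Consider the formula
\[
\psi(n) := \exists w\le s(x,t)\;\forall j< n\;\big(j\le \bd(|x|)\rightarrow \varphi(j,w_j)\big),
\]
where $s$ is a $\PV$ term bounding codes of sequences of length $\bd(|x|)+1$ with entries $\le t$.

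As written, $\psi$ is not strict, because of the sharply bounded $\forall j<n$ in front of $\varphi$. The standard trick is to run the induction on a strict formula anyway. Write $\varphi(j,y)=\exists u\le r\,\theta(j,y,u)$ with $\theta\in\Pi^b_i$ in strict form. Then take
\[
\psi'(n):=\exists w\le s\;\exists v\le s'\;\forall j<n\;\big(j\le\bd(|x|)\rightarrow\theta(j,w_j,v_j)\big).
\]
The witnesses $u$ are packed into a second sequence $v$, so the remaining sharply bounded universal quantifier can be absorbed into the $\Pi^b_i$ part, using that $\Pi^b_i$ over $\mathsf{T}^i_2$ is closed under sharply bounded quantification and can be put in strict form. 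That closure holds for $i=0$ trivially and for $i\ge1$ via $\mathsf{BB}(\Sigma^b_i)$ available in $\mathsf{T}^i_2\supseteq\mathsf{S}^i_2$ by \Cref{theoremsotprovesbb1}. So $\psi'\in\mathsf{s}\Sigma^b_{i+1}$.

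Next, verify the instance of $\mathsf{LIND}(\psi',\bd)$. The base case $\psi'(0)$ is trivial. For the step $\psi'(n)\to\psi'(n+1)$ with $n<\bd(|x|)$, use the hypothesis at $j=n$ to get $y\le t$ and $u$. Then extend $w$ and $v$ by one entry; sequence extension is a $\PV$ function and its correctness is provable in $\PV_1$. Applying $\mathsf{LIND}(\psi',\bd)$ gives $\psi'(\bd(|x|))$, which covers all $j\le\bd(|x|)$ after one more extension step (or after running the induction to $\bd(|x|)+1$ by shifting indices). Dropping $v$ yields the conclusion of $\mathsf{BB}(\varphi,\bd)$.

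Finally, extend from $\mathsf{s}\Sigma^b_{i+1}$ to $\bd\Sigma^b_{i+1}$. Having $\mathsf{BB}(\mathsf{s}\Sigma^b_{i+1},\bd)$, \Cref{prop:bd-exchange} gives that every $\bd\Sigma^b_{i+1}$ formula is provably equivalent to an $\mathsf{s}\Sigma^b_{i+1}$ formula. So $\mathsf{BB}(\varphi,\bd)$ for $\varphi\in\bd\Sigma^b_{i+1}$ follows by replacing $\varphi$ with its strict equivalent, since the equivalence holds pointwise in $j,y$.

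The main obstacle is the strictness of the induction formula. The sharply bounded universal quantifier $\forall j<n$ must be pushed past the inner bounded quantifiers. For $i\ge1$ this needs care: the inner $\Pi^b_i$ matrix must be put in strict form, using sharply bounded collection at level $i$ in $\mathsf{T}^i_2$, iterated through the alternations. I would handle it by induction on $i$, packing witnesses for each inner existential layer into sequences indexed by $j$.
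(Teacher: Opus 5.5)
Your proposal is correct and follows essentially the paper's route. Both arguments pack the replacement witnesses $w$ and the outermost existential witnesses of $\varphi$ into two sequences, run $\mathsf{LIND}(\cdot,\bd)$ up to $\bd(|x|)$ on the resulting strict formula, and then pass from $\mathsf{s}\Sigma^b_{i+1}$ to $\bd\Sigma^b_{i+1}$ via \Cref{prop:bd-exchange}. The one difference is that the step you flag as the main obstacle is unnecessary. Since $\theta$ is strict $\Pi^b_i$ and, for $i\ge 1$, begins with a bounded universal quantifier, the sharply bounded $\forall j$ merges directly with that leading quantifier, so $\psi'$ is already strict without $\mathsf{BB}(\Sigma^b_i)$ or any induction on $i$; this is exactly the observation the paper makes.
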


\begin{proof}
    Let $\varphi\in \mathsf{s}\Sigma^b_{i+1}$. We want to show that $\mathsf{LIND}(\mathsf{s}\Sigma^b_{i+1},\bd) \vdash \mathsf{BB}(\varphi,\bd)$. Assume that $\forall x\leq \bd(\abs{t}) \;\exists y\leq s \; \varphi(x,y)$.

We write $\varphi = \exists^b z \varphi'(x,y,z_1,z_2)$, where $\varphi'$ is strict $\Pi^b_i$ and we consider the $\mathsf{s}\Sigma^b_{i+1}$ formula:
$$
\psi(u):= \exists^b w, z' \;\forall^b x \; \Big(\mathsf{Seq}(w) \land \mathsf{Seq}(z') \land \mathsf{Len}(w)= \mathsf{Len}(z')=u \land (x\leq u \rightarrow \varphi'(x,w_x,z'_x))\Big).
$$
Notice that this is truly strict, since the universal quantifier $\forall^b x $ can be merged with the first quantifier of $\varphi'$.

We want to show that $\forall u< \bd(\abs{t}) \;\psi(u)\rightarrow \psi(u+1)$. Assume $\psi(u)$, this means that we have two sequence of length $u$, let $w_u$, $z'_u$, where for each $x\leq u$, $\varphi'(x,w_x,z_x)$. By the assumption, we know that $\exists y\leq s\; \exists^b z \;\varphi(u+1,y,z)$, since $u+1\leq \bd(\abs{t})$. We name those $y_{u+1}$ and $z_{u+1}$ and we consider the sequences $\mathsf{Ext}(w_u,y_{u+1})$ and $\mathsf{Ext}(z'_u,z_{u+1})$, where the function $\mathsf{Ext}(w,x)$ extends the sequence $w$ by putting the element $x$ in the end. It is easy to see that these satisfy the existential quantifiers of $\psi(u+1)$.

By $\mathsf{LIND}(\psi,\bd)$, we get that 
$$
\exists^b w, z' \;\forall^b x\leq \bd(\abs{t}) \; \varphi'(x,w_x,z'_x).
$$
By changing the two last quantifiers, we prove that the replacement holds for $\varphi$. Hence, $\mathsf{T}^{i}_2 + \mathsf{LIND}(\mathsf{s}\Sigma^b_{i+1},\bd) \vdash \mathsf{BB}(\mathsf{s}\Sigma^b_{i+1},\bd)$ However, by \Cref{prop:bd-exchange}, $\mathsf{T}^{i}_2 + \mathsf{BB}(\mathsf{s}\Sigma^b_{i+1},\bd) \vdash \mathsf{BB}(\bd\Sigma^b_{i+1},\bd)$, thus we get the desired result.
\end{proof}

The last result we recall, shows that the replacement axiom on $\Sigma^b_{i+1}$ formulas can prove length induction on $\Sigma^b_i$ formulas. Later, we show that it cannot prove length induction on $\Sigma^b_{i+1}$, even if it is weakened by some sublinear bound $\bd$. It is still open whether, it also proves induction on $\Sigma^b_i$ formulas.

\begin{theorem}[\cite{buss1985bounded},Theorem 16]\label{theorembbproveslind}
    For $i\geq 1$, $\SOT + \mathsf{BB}(\Sigma^b_{i+1}) \vdash \mathsf{S}^i_2$.
\end{theorem}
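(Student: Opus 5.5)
The statement is Buss's Theorem 16: for $i\geq 1$, $\SOT+\mathsf{BB}(\Sigma^b_{i+1})\vdash \mathsf{LIND}(\Sigma^b_i)$. Work in $\SOT+\mathsf{BB}(\Sigma^b_{i+1})$, fix a $\Sigma^b_i$ formula $\varphi(n)$, and assume $\varphi(0)$ and $\forall n<|x|\,(\varphi(n)\to\varphi(n+1))$. The goal is $\varphi(|x|)$.

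By \Cref{prop:axioms}(b) it suffices to derive $\mathsf{LMIN}(\Sigma^b_i)$, or equivalently $\mathsf{LIND}$ in the form "no least failure point below $|x|$". The idea is to reduce induction on the $\Sigma^b_i$ predicate to induction on a predicate of complexity $\Sigma^b_{i-1}$ or lower that $\SOT$ can already handle. The set $\{n\le |x| : \varphi(n)\}$ is a subset of a length-sized interval, so it can be coded by a single number $w$ of length about $|x|$.

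\textbf{Step 1 (coding the set).} Use the trivial tautology
\[
\forall j\le |x|\;\exists y\le 1\;\big[(y=1\wedge \varphi(j))\vee(y=0\wedge\neg\varphi(j))\big].
\]
The matrix is a Boolean combination of $\Sigma^b_i$ formulas, hence in $\Sigma^b_{i+1}$, so $\mathsf{BB}(\Sigma^b_{i+1})$ applies. It yields a string $w$ with $w_j=1\leftrightarrow\varphi(j)$ for all $j\le|x|$.

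\textbf{Step 2 (induction on the code).} The hypotheses now say $w_0=1$ and $\forall n<|x|\,(w_n=1\to w_{n+1}=1)$. This is a sharply bounded ($\Sigma^b_0$, even open in $\PV$) statement about $w$. $\SOT$ proves $\mathsf{LIND}$ for open/$\Sigma^b_0$ formulas with parameter $w$: apply $\mathsf{LIND}$ to $\psi(n):= w_n=1$, or to "$\forall m\le n\; w_m=1$". Hence $w_{|x|}=1$, and therefore $\varphi(|x|)$.

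\textbf{Main obstacle.} The delicate point is the quantifier count in Step 1. Since $\varphi$ occurs negatively, the matrix is only $\Delta$-type over $\Sigma^b_i\cup\Pi^b_i$. I must check that the paper's $\mathsf{BB}(\Sigma^b_{i+1})$ allows arbitrary $\Sigma^b_{i+1}$ matrices, including Boolean combinations of $\Sigma^b_i$. By the closure clauses in the formula definition it does, since $\Sigma^b_i\cup\Pi^b_i\subseteq\Sigma^b_{i+1}$ and the class is closed under $\wedge,\vee$.

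A second check is that the code $w$ is bounded. It has $|x|+1$ entries each $\le 1$, which is within the sequence bound built into $\mathsf{BB}$.

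If the base theory's $\mathsf{LIND}$ on sharply bounded formulas were in doubt, I would fall back to $\mathsf{LMIN}$ on the $\PV$-term "least $n$ with $w_n=0$". That value is computable by a polynomial-time function in $\PV_1\subseteq\SOT$, and its existence contradicts the inductive step directly.
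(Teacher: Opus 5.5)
Your argument is correct and complete. By the closure clauses, $(y=1\wedge\varphi(j))\vee(y=0\wedge\neg\varphi(j))$ is a $\Sigma^b_{i+1}$ formula, so $\mathsf{BB}(\Sigma^b_{i+1})$ with $t=1$ gives a $w$ with $(w)_j=1\leftrightarrow\varphi(j)$ for all $j\le|x|$. Open length induction on $(w)_n=1$ then yields $\varphi(|x|)$.

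The paper gives no proof of this theorem; it only cites Buss's thesis. Your argument (replacement used to code the truth values, then open induction) is the standard one. Since Step 2 uses only open induction, the same proof in fact works over $\PV_1$ in place of $\SOT$, and the $\mathsf{LMIN}$ fallback is never needed.
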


%% file: student-teacher.tex
\section{Student-Teacher Games}\label{sec:stgames}

In this section, we prove the two main theorems about Student-Teacher Games, where the former shows that adaptivity gives more computational power to the Student (\Cref{thm:st1}), and the latter states the same for parallelism (\Cref{thm:st2}).

As a note, for the proof of the theorems, we will need to make a conversion between different input sizes; the first input is for the Student-Teacher Games search problem, and the other is the input for a decision problem about a language $L$. We use $m$ for the input size of the former, and $n$ for the input size of the latter. In the two formulas that we use for the separations taking place in \Cref{thm:st1,thm:st2}, we consider $p(n)$ different instances of the decision problem, $x_i$, each of size $n$. This means that the total size of all these strings, which is actually the input for the Student-Teacher Game search problem, is $m=n\cdot p(n)$. We want the function $p(n)$ to be determined according to our needs for the necessary separation. This separation will be achieved with the difference of $p(n)$ and $p(n)-1$, but we want to convert this result as some function of $m$. Suppose we are given the function $g(m)$ with $g(m)\leq m$, which may correspond to the number of rounds, or queries, etc., and we want the separation between $g(m)$ and $g(m)-1$. This means that we should choose the appropriate $p(n)$, such that $p(n)=g(m)$ (which equivalently means that $p(n)=g(n\cdot p(n))$. This task is non-trivial, but we show that the function $p(n)$ exists in \Cref{sec:input}. From now on, in the proofs, we use a simpler form of the solution, given by the relations $n=\frac{m}{g(m)}$ and $p(n)=g(m)$, but assuming the process we described.

\subsection{Adaptivity}\label{sec:st1}
       \begin{proof}[Proof of \Cref{thm:st1}]
        This theorem shows the power of adaptivity in Student-Teacher Games. We will show that for any sublinear, unbounded, increasing, polynomial-time function $r(m)$ and any unbounded, increasing, polynomial-time function $1\leq q(m)\leq \poly(m)$,
        $$
        \mathcal{ST}^{\Sigma^p_i}[r(m)+1,1]\subseteq \mathcal{ST}^{\Sigma^p_i}[r(m),q(m)] \implies \Sigma^p_{i+1}\subseteq \Delta^p_{i+1}/\poly.
        $$
        
        Fix $n=\frac{m}{r(m)}$ and $p(n)=r(m)$, as discussed above, and consider an arbitrary $\Sigma^p_{i+1}$ language, $L$, which has the form $\exists \,\varphi $, where $\varphi$ is a $\Delta^p_{i+1}$ predicate ($x\in L\iff \exists y\;\varphi(x,y)$). We define the new predicate $\varphi'(x,y,z):= \varphi(x,z)\rightarrow \varphi(x,y)$, also in $\Delta^p_{i+1}$, and take the sentence
    \begin{equation}
        \Phi:=\forall x_1,\dots,x_{p(n)}\in \{0,1\}^n\; \exists y_1,\dots,y_{p(n)} \;\forall i,z_1,\dots,z_{p(n)}\; \varphi'(x_i,y_i,z_i),
        \nonumber
    \end{equation}
    which defines the following $\mathsf{TF\Sigma^p_{i+2}}$ search problem:

    \emph{``Given $p(n)$ $n$-bit strings, find witnesses (of the first existential quantifier) for all those of them that belong to the language $L$.''}
   
    The problem is characterized by the relation $R(x,y):=\forall i,z_1,\dots,z_{p(n)}\; \varphi'(x_i,y_i,z_i)\in \Pi^p_{i+1}$, where $x=(x_1,\dots,x_{p(n)})$ is the input of size $n\cdot p(n)=m$ and $y=(y_1,\dots,y_{p(n)})$ is the sequence of witnesses.
    
    It is easy to see that the problem can be solved with a Student-Teacher Game of $p(n)+1=r(m)+1$ rounds with one query per round: The Student at every round outputs the witnesses they know along with $0$'s for the witnesses of the instances that they do not have knowledge of, while the Teacher must give at least one correct witness for one of the instances at every round as a counterexample. After at most $p(n)$ rounds, the Student knows all the witnesses, which they output at the last round.

    Assume now that $R(x,y)\in \mathcal{ST}^{\Sigma^p_i}[r(m),q(m)]$, which means that it is solvable by a Student in $p(n)$ rounds with polynomially many queries at each round. Let $f\in \FP^{\Sigma^p_i}$ be the function of the Student, which takes as input the sequence $x_1,\dots,x_{p(n)}$, the number of round $j$ and depending on $j$, $j-1$ counterexamples in the form of $i,z_1,\dots,z_{p(n)}$, and outputs at most $\poly(n)$ different sequences of candidate witnesses. If the number of the round is greater than $p(n)$, then the function rejects. The goal is to find a $\poly(n)$-time algorithm that uses this function and $\poly(n)$-size advice in order to decide the $n$-size instances of $L$. This means that $L\in \Delta^p_{i+1}/\poly$.

    Before we continue, we fix the Teacher's strategy. At every round, the Teacher receives $\poly(n)$ number of queries in the form $y^{j}=(y^{j}_{1},\dots,y^{j}_{p(n)})$, where $j$ indices over the queries. Then, the Teacher outputs as counterexample for all the queries, the minimum index $i$, such that $x_i\in L$, but this is not witnessed by any $y^{j}_i$ for all $j$, along with the least lexicographically true witness $z$ of $x_i$. Note that if there is no such minimum, then the Student could have already won, because they could output the tuple with all the correct instances, which can be validated by their computational power. Hence, we disregard these cases.
    
    In order to construct the advice, we start by the set $U$ of all $x\in L$ of size $n$. Formally, $$U:=\{x\;:\; \abs{x}=n \;\land \; \exists y \;\varphi(x,y)\}$$ 
    Let $S$ be a subset of $U$ with $p(n)-1$ elements and $x\in U\setminus S$. We can order the set $S\cup \{x\}$ (e.g. in increasing order) and make the tuple $X_{(S,x)}= (x_1,\dots,x_{l-1}, x, x_l,\dots,x_{p(n)-1})$, where $x_i\in S$ for all $i\in [p(n)-1]$.\footnote{These should not be confused with the $x_i$'s of the formula $\Phi$.} We say that the pair $(S,x)$ is good if providing as input $X_{(S,x)}$ to the Student-Teacher game with the given Student, the fixed Teacher never outputs the index $l$ and a witness for $x$. 

    \begin{claim}
        For every $p(n)$-element subset of $U$, there exists at least a good pair.
    \end{claim}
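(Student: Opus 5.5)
The plan is a pigeonhole argument on the rounds of a single play. Fix an arbitrary $p(n)$-element subset $T\subseteq U$ and list its elements in increasing order as $X_T=(x_1,\dots,x_{p(n)})$. For any $x\in T$, writing $S=T\setminus\{x\}$, the tuple $X_{(S,x)}$ is exactly $X_T$, with $x$ at its position $l$ in the sorted order. So it suffices to find one index $l\in[p(n)]$ such that, in the play of the given Student $f$ against the fixed Teacher on input $X_T$, the Teacher never outputs $l$ together with a witness for $x_l$. The pair $(T\setminus\{x_l\},x_l)$ is then good in exactly the sense of the definition, with the final round included.

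First I will show that the play on $X_T$ ends after at most $p(n)$ rounds, and that the Teacher produces no counterexample in the last round. By assumption $f$ solves $R$ in $r(m)=p(n)$ rounds, so the disjunction in the definition of a Student-Teacher Game holds for every sequence of counterexamples, in particular for the Teacher's. Suppose that in rounds $1,\dots,k-1$ the Teacher's answers $(i,z)$ make every query of that round fail. Then in some round $k\le p(n)$ some query $y^{j}$ satisfies $R(X_T,y^j)$, i.e.\ every $x_i$ with $x_i\in L$ is witnessed by $y^j_i$. Since all $x_i\in T\subseteq U$ lie in $L$, there is then no index $i$ with $x_i\in L$ that is unwitnessed by every query of round $k$. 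Hence the Teacher's minimum does not exist, and it outputs no index in round $k$. This matches the convention already stated: the Student has won, and there is nothing to refute.

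Counting then finishes the argument. The Teacher gives exactly one pair (index, lexicographically least witness) per round, shared by all parallel queries, and only in rounds $1,\dots,k-1$. So over the whole play at most $k-1\le p(n)-1$ indices are ever named. Since $T$ has $p(n)$ positions, by pigeonhole some $l\in[p(n)]$ is never named, and $(T\setminus\{x_l\},x_l)$ is a good pair.

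The only delicate step is making sure the final round cannot contribute a counterexample; otherwise $p(n)$ rounds could name all $p(n)$ indices. This follows from the Teacher's rule, which names only an index in $L$ not witnessed by any query of that round. In the winning round such an index cannot exist, because every element of $T$ is in $L$. I will also note that the Teacher's answer is determined by the current queries, so the play on $X_T$ is a single well-defined run, and "never names $l$" refers to that run.
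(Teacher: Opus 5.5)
Your proposal is correct and follows the paper's argument. You order the $p(n)$-element set, run the single play against the fixed Teacher, observe that at most $p(n)-1$ rounds yield a (single, shared) counterexample, and conclude by pigeonhole that some index $l$ is never named. Your extra justification that the winning round contributes no counterexample rests on the same convention the paper adopts (disregarding rounds with no common unwitnessed index), which is exactly what the paper's terse ``the Teacher provides $p(n)-1$ counterexamples'' presupposes.
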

    To prove the claim, we can order the elements of the set and provide the ordered tuple as input to the Student-Teacher computation. Since the Teacher strategy is to output a single index with the corresponding witness at every round, and the Teacher provides $p(n)-1$ counterexamples, there is at least an index $l$ and a witness $z$, which the Teacher never outputs. The $l$th element of the tuple with the set of the remaining elements form the good pair.
    
    As a result, if the set $U$ has $N$ elements, then there are at least as many good pairs in $U$ as $p(n)$-element sets, which is $\binom{N}{p(n)}$.
    On the other hand there are $\binom{N}{p(n)-1}$-many $(p(n)-1)$-element sets, hence there exists such a set $S$ contained in at least 
    \[
    \binom{N}{p(n)}\binom{N}{p(n)-1}^{-1} = \frac{N-p(n)+1}{p(n)}
    \]
    good pairs.

    We recursively construct the sets $U_i,S_i$ starting with $U_0=U$, as follows:
    If the set $U_i$ has $N_i$ elements, then we take a $(p(n)-1)$-subset $S_i\subseteq U_i$ which is contained in at least $\frac{N_i-p(n)+1}{p(n)}$ good pairs. Afterwards, we construct $U_{i+1}:=U_i\setminus\{x\,:\,(S_i,x) \text{ is a good pair}\}$.

    This means that for $\lambda=\frac{p(n)-1}{p(n)}$, $N_{i+1}=\lambda N_i+\lambda<\lambda N_i+1$, thus $N_k<\lambda^k N_0+k$. For $k=\frac{\log(N)}{\log (\lambda^{-1})}=O(n)$, the last relation implies that $N_k<k$. Therefore, the set $Q=S_0\cup\dots\cup S_{k-1}\cup U_k\subseteq U$ has size $O(n\cdot p(n))$.

    Finally, we describe the non-uniform algorithm that decides $L$ on inputs of size $n$. The correctness of the algorithm comes from the construction of the set $Q$, where all $x\in U$ form a good pair with some $S_i$ or belong to $Q$.

\begin{algorithm}[H]
     \SetKwInOut{Input}{Input}
     \SetKwInOut{Advice}{Advice}
     \caption{The pseudocode of the $\Delta^p_{i+1}/\poly$-algorithm that decides $L$ on input size $n$.}
     \Input{A string $x\in \{0,1\}^{n}$.}
     \Advice{The set $Q$ and the set of the least lexicographic witness for each element of $Q$.}
     \vspace{0.3em}
     If $x\in Q$, then check the corresponding witness and Accept\;
     Otherwise, for all $S_i$, simulate the Student-Teacher Game with input $X_{(S_i,x)}$ using the function $f$ and the witnesses of $S_i$\;
     If any of these Student-Teacher Games generates a valid witness, then Accept\;
     If none of these Games generates a valid witness, then Reject.\\
\end{algorithm}
\end{proof}

\subsection{Parallelism}\label{sec:st2}
     \begin{proof}[Proof of \Cref{thm:st2}]
     This theorem shows the power of the number of parallel queries per round in Student-Teacher Games. We will show that for any sublinear, unbounded, increasing, polynomial-time functions $r_1(m),q_1(m),r_2(m),q_2(m)$ with $r_1(m)q_1(m)<r_2(m)q_2(m)$,
        $$
        \mathcal{ST}^{\Sigma^p_i}[1+r_2(m),q_2(m)]\subseteq \mathcal{ST}^{\Sigma^p_i}[1+r_1(m),q_1(m)] \implies \Sigma^p_{i+1}\subseteq \Delta^p_{i+1}/\poly.
        $$
        
        We fix $n=\frac{m}{r_2(m)q_2(m)}$ and $p(n)=r_2(m)q_2(m)$, as discussed above, and we consider an arbitrary $\Sigma^p_{i+1}$ language, $L$, which has the form $\exists \,\varphi $, where $\varphi$ is a $\Delta^p_{i+1}$ predicate. Now, we take the sentence
    \begin{equation}
        \Psi:=\forall x_1,\dots,x_{p(n)}\in \{0,1\}^n\; \left( \exists i\leq p(n) \; \forall y \;\neg \varphi(x_i,y)\right)\; \lor \; \left( \exists y_1,\dots,y_{p(n)} \; \forall j\leq p(n)\; \varphi(x_j,y_j) \right)
        \nonumber
    \end{equation}
    which defines the following $\mathsf{TF\Sigma^p_{i+2}}$ search problem:

    \emph{``Given $p(n)$ $n$-bit strings, find an index of an element which is not in the language $L$ or find a $p(n)$-long sequence that includes witnesses (of the first existential quantifier) for all of the elements in the input.''}
   
    The problem is characterized by the relation $R(x,y):=\forall i,z_1,\dots,z_{p(n)}\; \varphi'(x_i,y_i,z_i)\in \Pi^p_{i+1}$, where $x=(x_1,\dots,x_{p(n)})$ is the input of size $n\cdot p(n)=m$ and the search terms are $i$ and the sequence $(y_1,\dots,y_{p(n)})$.
    
    It is easy to see that the problem can be solved with a Student-Teacher Game of $r_2(m)+1$ rounds with $q_2(m)$ queries per round: The Student splits the numbers of $[p(n)]$ into $r_2(m)$ sets of $q_2(m)$ elements, and at each round guesses that the index where there is no witness, is one of the elements of the set. They achieve that by the $q_2(m)$ parallel queries. After $r_2(m)$ rounds, the Student has either found an element without a witness or has been provided witnesses for all the elements, which they combine in a sequence and output in the last round.

    Assume now that $R(x,y)\in \mathcal{ST}^{\Sigma^p_i}[1+r_1(m),q_1(m)]$, which means that it is solvable by a Student that has received at most $r_1(m)q_1(m)$ counterexamples from the Teacher (at the last round the Student should be successful). From the given inequality, this means that the Student has received less than $p(n)$ counterexamples. Let $f\in \FP^{\Sigma^p_i}$ be the function of the Student, which takes as input the sequence $x_1,\dots,x_{p(n)}$, the number of round $j$ and depending on $j$, $j-1$ counterexamples in the form of $y,j$. If the number of the round is greater than $p(n)$, then the function rejects. The goal is, as before, to find a $\poly(n)$-time algorithm that uses this function and $\poly(n)$-size advice in order to decide the $n$-size instances of $L$, so that we have $L\in \Delta^p_{i+1}/\poly$.

    Assume now that $R(x,y)\in \mathcal{ST}^{\Sigma^p_i}[r(m),q(m)]$, which means that it is solvable by a Student in $p(n)$ rounds with polynomially many queries at each round. Let $f\in \FP^{\Sigma^p_i}$ be the function of the Student, which takes as input the sequence $x_1,\dots,x_{p(n)}$, the number of round $j$ and depending on $j$, $j-1$ counterexamples in the form of $i,z_1,\dots,z_{p(n)}$, and outputs at most $\poly(n)$ different sequences of candidate witnesses. If the number of the round is greater than $p(n)$, then the function rejects. The goal is to find a $\poly(n)$-time algorithm that uses this function and $\poly(n)$-size advice in order to decide the $n$-size instances of $L$. This means that $L\in \Delta^p_{i+1}/\poly$.

    Again, we fix the Teacher's strategy. For the first part of the disjunction, if the Student has failed, which means that $x_i\in L$, the Teacher provides the least lexicographically witness for $x_i$. We can actually disregard the second disjunct, since the universal quantifier, which the Teacher is supposed to give counterexamples for, is sharply bounded, thus it can be computed efficiently by the Student. This means that in this case the Teacher provides no new information to the Student.

    As in the previous proof, we start by the set $U$ of all $x\in L$ of size $n$, and for a subset $S\subseteq U$ with $p(n)-1$ elements and an element $x\in U\setminus S$, we form the ordered $p(n)$-tuple $X_{(S,x)}$, where $x$ is in the $l$th position. Then, the pair $(S,x)$ is good if providing $X_{(S,x)}$ as input to the Student-Teacher game with the given Student, the Teacher is never queried and never provides witness for the index $l$.

    It is easy to see that every $p(n)$-element subset of $U$ has at least a good pair, because if we order its elements and give it as input to the Student-Teacher game with $1+r_1(m)$ rounds and $q_1(m)$ queries per round, then there must be an index which is never queried, since the number of queries is $r_1(m)q_1(m)<p(n)$.

    The remaining proof follows exactly the proof of \Cref{thm:st1}, in order to construct a $\Delta^p_{i+1}/\poly$ algorithm for deciding $L$.
\end{proof}

%% file: witnessing.tex
\section{Witnessing Theorems}\label{sec:witnessing}

In this section, we prove the witnessing theorems needed for our separations. We start by giving a general witnessing theorem, which relates a theory $S$ with a suitable relativized universal theory $T$ such that the terms of $T$ describe the computational model for the student in the related Student-Teacher game.

\subsection{General Witnessing Theorem}

\begin{definition}
    Assume there are $L$-structures $M,N\models T$ such that $M\subseteq N$. We say $N$ is an $\exists$-elementary extension if for every open formula $\varphi(\overline x,\overline y)$ it holds that for every $\overline m\in M$ we have
    \[M\models (\exists \overline y)\varphi(\overline m,\overline y)\iff N\models (\exists \overline y)\varphi(\overline m,\overline y).\]
\end{definition}

\begin{theorem}\cite[Theorems 3.2 and 3.3]{AVIGAD2002219}\label{theoremherbrandsaturated}
    Let $L$ be a language, $T$ be a universal $L$-theory and let $M\models T$. Then $M$ has an $\exists$-elementary extension $\hat M$ such that for every open $L$-formula if
    \[\hat M\models (\forall \overline x)(\exists y)(\varphi(x,y)),\]
    then there are $L$-terms $t_1,\dots,t_k$ such that
    \[\hat M\models (\forall \overline x)((\varphi(x,t_1)\lor \dots \lor \varphi(x,t_k)).\]
\end{theorem}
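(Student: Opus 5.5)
The plan is a union-of-chains construction in the style of Avigad, built so that every failure of Herbrand witnessing is eventually ``realised'' by a new element that has no witness at all. I will use two standard facts.

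\begin{itemize}
\item[(i)] For $M\subseteq N$ with $N\models T$, $N$ is an $\exists$-elementary extension of $M$ iff $N$, with the elements of $M$ named by constants, satisfies $\mathrm{Th}_\forall(M_M)$, the universal sentences with parameters from $M$ that are true in $M$.
\item[(ii)] Since $T$ is universal, any substructure of a model of $T$ is again a model of $T$. Moreover, $\exists$-elementary extensions compose along chains and pass to unions of chains.
\end{itemize}

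\textbf{One-step lemma.} Let $M\models T$ and let $\varphi(\overline x,y)$ be open, possibly with parameters $\overline a\in M$. Suppose that for no finite list of terms $t_1,\dots,t_k$ (which may use parameters from $M$) do we have $M\models(\forall\overline x)\bigvee_j\varphi(\overline x,t_j(\overline x))$. Then $M$ has an $\exists$-elementary extension $N\models T$ containing a tuple $\overline c$ such that $N\models\neg\varphi(\overline c,e)$ for every $e\in N$.

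To prove it, add new constants $\overline c$ and consider
\[\mathrm{Th}_\forall(M_M)\cup\{\neg\varphi(\overline c,t(\overline c)) : t \text{ a term with parameters from } M\}.\]
This set is consistent. Otherwise compactness gives finitely many terms with $\mathrm{Th}_\forall(M_M)\vdash\bigvee_j\varphi(\overline c,t_j(\overline c))$. Because $\overline c$ are fresh constants, this yields $M\models(\forall\overline x)\bigvee_j\varphi(\overline x,t_j(\overline x))$, contradicting the hypothesis. Take a model of this set and let $N$ be the substructure generated by $M\cup\{\overline c\}$. By (ii), $N\models T$; by (i), $N$ is $\exists$-elementary over $M$. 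Every element of $N$ is the value of a term in $\overline c$ with parameters from $M$, so $\overline c$ has no witness in $N$.

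\textbf{The chain.} Set $M_0=M$. From $M_i$, build $M_{i+1}$ by transfinitely iterating the one-step lemma, taking unions at limit stages, once for each pair consisting of an open formula with parameters from $M_i$ and a stage at which its Herbrand property fails. Let $\hat M=\bigcup_i M_i$; by (ii), $\hat M$ is an $\exists$-elementary extension of $M$.

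The central point is that a $\overline c$ with no witness in $N$ keeps having no witness in every later $\exists$-elementary extension $N'$. The statement ``$\overline c$ has no witness'' is not itself an existential sentence over $N$, so this does not follow directly from $\exists$-elementarity. I would handle it as follows. When $\overline c$ is added for $\varphi$, also record the failure of $(\forall\overline x)(\exists y)\varphi$ at $\overline c$ as part of the universal theory carried along the chain: at later stages, only adjoin elements realising a consistent extension of $\mathrm{Th}_\forall$ of the current structure together with the constraints $\neg\varphi(\overline c,t)$ for all new terms $t$. Checking that this is consistent is the same compactness argument as in the one-step lemma, with the accumulated constants treated as fresh variables.

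With this in place, suppose $\hat M\models(\forall\overline x)(\exists y)\varphi(\overline x,y)$. The finitely many parameters of $\varphi$ lie in some $M_i$. If the Herbrand property had failed at that stage, a counterexample $\overline c$ would have been planted, contradicting the truth of the $\forall\exists$ sentence in $\hat M$. So it holds, giving finitely many terms with parameters from $M_i$ that witness $\varphi$ in $M_i$. That universal statement then transfers up to $\hat M$ by (i).

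\textbf{Main obstacle.} The delicate step is removing parameters from the witnessing terms, since the theorem asks for pure $L$-terms. I would first try to fold the parameters into the universally quantified tuple. For parameter-free $\varphi$, consider the open formula $\varphi(\overline x,y)$ and its Herbrand failures over $\mathrm{Th}_\forall(T)$ itself rather than over $\mathrm{Th}_\forall(M_M)$. When planting $\overline c$, demand $\neg\varphi(\overline c,t(\overline c,\overline m))$ only for pure terms $t$ and parameters $\overline m$ that are generated from $\overline c$. The compactness step then produces pure terms, which is what the statement asks for. Making this bookkeeping consistent with $\exists$-elementarity over $M$ is where I expect the real work of the argument to lie.
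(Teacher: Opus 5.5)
Your one-step lemma and chain construction are sound. They amount to the argument behind the cited result: roughly, realise along a chain of $\exists$-elementary extensions every universal formula with parameters that is consistent with the universal diagram, then read off the terms via Herbrand's theorem. Passing to the substructure generated by $M\cup\{\overline c\}$ just inlines the Herbrand-theorem step.

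\textbf{The main obstacle is unfixable.} The step you call the main obstacle cannot be carried out, because the reading of the statement with parameter-free terms is false. Take $L=\{f,P\}$ with $f$ a unary function symbol and $P$ a unary relation symbol, $T=\emptyset$, and $M=\{a,b\}$ with $f$ the identity and $P^M=\{a\}$. Any $\exists$-elementary extension $\hat M$ satisfies $(\forall z)(f(z)=z)$, $P(a)$ and $\lnot P(b)$, so $\hat M\models(\forall x)(\exists y)P(y)$. Yet every parameter-free term $t(x)$ evaluates to $x$, so no disjunction $\bigvee_j P(t_j(x))$ holds at $x=b$, while the parameter $a$ does serve as a witnessing term. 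Hence the theorem must be read with $\varphi$ and the $t_j$ allowed to contain parameters from $\hat M$. This is also how the paper uses it: in Lemma~\ref{lemmapvbbconservativity} the formula has the parameter $x$, and the resulting term $s$ is fed into Lemma~\ref{lemmaopenbb} with its parameters treated as free variables. Under this reading your argument is complete without its last paragraph, which should be dropped. In particular, tracking Herbrand failures over $T$ instead of $\mathrm{Th}_\forall(M_M)$ is incompatible with $\exists$-elementarity over $M$, since witnesses from $M$ such as $a$ survive into every extension.

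\textbf{Persistence is automatic.} Your other declared difficulty is not one either. For $\overline c\in N$, ``$\overline c$ has no witness'' is $N\models\lnot(\exists y)\varphi(\overline c,y)$, the negation of an existential formula with parameters from $N$. The definition of an $\exists$-elementary extension is an equivalence for precisely such formulas. Equivalently, $(\forall y)\lnot\varphi(\overline c,y)\in\mathrm{Th}_\forall(N_N)$, and it transfers to every later stage and to $\hat M$ by your facts (i) and (ii), exactly as the Herbrand disjunction does at the end. So the extra constraints you propose to carry along the chain are redundant.

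\textbf{Where to test.} The Herbrand property of $\varphi$ must be tested in the intermediate structure at the stage where $\varphi$ is handled, since that is what the one-step lemma needs, not in $M_i$. If it holds there, the witnessing terms have parameters from that stage, which is harmless.
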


\begin{definition}
    Let $L$ be a language, $T$ an $L$-theory. We say a theory $T$ has terms closed under definition by cases by open formulas~\footnote{The name `closed under definition by cases by open formulas' comes from~\cite{forcing}.}, if for every open $L$-formula $\varphi(\overline x)$ and $L$-terms $t_1(\overline x),t_2(\overline x)$ there is an $L$-term $t(\overline x)$ such that
    \[T\vdash (\forall \overline x)((\varphi(\overline x)\land t(\overline x)=t_1(\overline x))\lor (\lnot \varphi(\overline x)\land t(\overline x)=t_2(\overline x))).\]
\end{definition}

We use that the languages we will work in are rich enough to combine the different terms in the disjunction in the previous theorem into a single term. The following is a sufficient condition to do this.

\begin{lemma}\label{lemmapvifthenelse}
    Let $L$ be a language extending $\PV$ and $T$ be an $L$-theory extending $\PV_1$. Then $T$ has terms closed by definition by cases by open formulas.
\end{lemma}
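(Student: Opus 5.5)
The plan is to build the required term directly from a polynomial-time function symbol of $\PV$ and then verify its defining property inside $\PV_1$. Let $\varphi(\overline x)$ be an open $L$-formula and $t_1,t_2$ be $L$-terms. Since $L$ extends $\PV$, the language contains symbols for the basic $\PV$ functions. In particular it contains a selector function $\mathsf{Cond}(b,u,v)$ whose $\PV_1$-defining axioms are $\mathsf{Cond}(0,u,v)=v$ and $\mathsf{Cond}(b+1,u,v)=u$; this is definable by cases on the last bit or directly from Cobham's schemes. The paper also fixes that $\PV_1$ contains symbols for all $\FP$ functions, so we may use such a symbol freely.

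The key step is to turn the open formula $\varphi$ into a term $\chi_\varphi(\overline x)$ with values in $\{0,1\}$ such that
\[T\vdash \chi_\varphi(\overline x)\le 1 \land \big(\varphi(\overline x)\leftrightarrow \chi_\varphi(\overline x)=1\big).\]
We do this by induction on the construction of $\varphi$.

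For an atomic formula $s_1=s_2$, take $\chi=\mathsf{Eq}(s_1,s_2)$, and for $s_1\le s_2$ take $\chi=\mathsf{Leq}(s_1,s_2)$. Here $\mathsf{Eq}$ and $\mathsf{Leq}$ are the $\PV$ characteristic functions, whose correctness is a standard $\PV_1$ theorem (in $\PV_1$ equality and order of terms are decided by open induction). If $L$ contains further relation symbols beyond $\PV$, we would either need that they come with characteristic-function terms, or we restrict to the intended languages $\PV_{i+1}$, which have this property because every predicate there is represented by its characteristic function.

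For the connectives, use $\chi_{\lnot\psi}=1\dot-\chi_\psi$, $\chi_{\psi\land\theta}=\chi_\psi\cdot\chi_\theta$ and $\chi_{\psi\lor\theta}=\mathsf{Cond}(\chi_\psi,1,\chi_\theta)$. Each inductive step is a quantifier-free consequence of the basic $\PV_1$ axioms, using case distinction on whether each $\chi\in\{0,1\}$.

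Finally, set $t(\overline x)=\mathsf{Cond}(\chi_\varphi(\overline x),t_1(\overline x),t_2(\overline x))$. Reasoning in $T$, split into the cases $\chi_\varphi=1$ and $\chi_\varphi=0$. In the first case $\varphi$ holds and $t=t_1$; in the second $\lnot\varphi$ holds and $t=t_2$. This yields the required disjunction.

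I expect the main obstacle to be the atomic case: we must make sure that every atomic formula of $L$ has a $\PV$ characteristic term provably correct in $\PV_1$, and this is exactly where the hypothesis that $L$ extends $\PV$ is used. Everything after that is routine open reasoning in $\PV_1$.
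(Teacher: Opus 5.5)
Your proposal is correct and follows the paper's proof: build a characteristic term for the open formula by induction on its construction, using $\PV$-symbols for the atoms and the connectives, and then apply the $\PV$ if-then-else symbol to it together with $t_1$ and $t_2$. Your explicit invariant $\chi_\varphi\le 1$ and your caveat about relation symbols in $L$ beyond $\PV$ are points the paper leaves implicit (in its applications $L$ only adds function symbols such as $\alpha$ and $\Read_c$), so they refine rather than change the argument.
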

\begin{proof}
    Assume that we have an open $L$-formula $\varphi(x,\overline y)$ and $L$-terms $t_1$ and $t_2$. It is not hard to construct a term $f(x,\overline y)$ with the property
    \[\PV_1\vdash (\forall x)(\forall \overline y)(f(x,\overline y)=1 \leftrightarrow \varphi(x,\overline y))\tag{*}\]
    starting with atomic formulas, taking the $\PV$-symbol which checks for equality of two inputs, we can create a term which can check whether an equality of $\PV$-terms is valid. Using $\PV$-symbols for disjunction and negation, we can obtain (*) for arbitrary open formulas by induction on the complexity of $\varphi$.

    The term $t$ can be constructed as 
    \[\text{IfThenElse}(f(x,\overline y),t_1(x,\overline y),t_2(x, \overline y)),\]
    where $\text{IfThenElse}(x,y,z)$ is a $\PV$-symbol which returns $y$ if $x=1$ otherwise it outputs $z$.
\end{proof}

The following theorem is a general theorem for witnessing of $\forall\exists \forall$ consequences of a theory. If one starts with a theory $S$, extends it with a new binary function symbol $\alpha(x,y)$, which represents the counterexample function of the Teacher in the Student-Teacher protocol, to obtain a theory $T'$, and finds a universal sub-theory $T\subseteq T'$ such that $T'$ is $\forall\exists$-conservative over it and $T$ has terms closed by definitions by open formulas, then there will be a single term in the language of $T$ such that it $T$-provably computes the witness for the existentially quantified variable. In our applications of this theorem, the theory $S$ we does not pose any restriction on the concrete behaviour of $\alpha$ and as such we may assume that $\alpha$ behaves as an appropriate teacher for the given Student-Teacher game. 

\begin{definition}
    Let $L_1$,$L_2$ be languages, $T_1$ be an $L_1$-theory and let $T_2$ be an $L_2$-theory, let $\Gamma$ be a class of $L_1$-formulas, we say that $T_2$ is $\Gamma$-conservative over $T_1$ if for every $\varphi\in\Gamma$ which is also an $L_2$-formula we have that $T_1\vdash \varphi$ if and only if $T_2\vdash \varphi$.
\end{definition}

\begin{theorem}\label{theoremwitnessing}
    Let $\alpha$ be a binary function symbol, let $L_w$ be a language containing $\alpha$ and let $L\subseteq L_w\setminus \{\alpha\}$ be another language. Consider a universal $L_w$-theory $T$ with terms closed under definition by cases by open formulas, its $\forall\exists$-conservative extension $T'$ and an $L$-theory $S\subseteq T'$. Let $\psi$ be an open $L$-formula. If 
    \[S\vdash (\forall x)(\exists y)(\forall z)(\psi(x,y,z))\]
    then there is an $L_w$-term $t$ such that
    \[T\vdash (\forall x)(\psi(x,t(x),\alpha(x,t(x)))).\]
\end{theorem}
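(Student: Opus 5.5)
We argue by contradiction, combining compactness with \Cref{theoremherbrandsaturated}. Suppose no $L_w$-term $t$ satisfies $T\vdash (\forall x)\psi(x,t(x),\alpha(x,t(x)))$. Introduce a fresh constant $c$ and consider the theory
\[
T \cup \{\lnot\psi(c,t(c),\alpha(c,t(c))) : t \text{ an } L_w\text{-term in the variable } x\}.
\]
We first show this theory is consistent. By compactness it suffices to handle finitely many terms $t_1,\dots,t_k$. If $T\vdash (\forall x)\bigvee_{i\le k}\psi(x,t_i(x),\alpha(x,t_i(x)))$, we fold the disjunction into a single term. Since $T$ has terms closed under definition by cases by open formulas, we define $t$ inductively: $t$ equals $t_1$ if $\psi(x,t_1,\alpha(x,t_1))$ holds, and otherwise $t$ is the term built for $t_2,\dots,t_k$. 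The condition is open because $\psi$ is open and $\alpha$ is in the language. Then $T\vdash\psi(x,t(x),\alpha(x,t(x)))$, contradicting our assumption. Hence there is a model $M\models T$ with an element $a$ (the interpretation of $c$) such that $M\models\lnot\psi(a,t(a),\alpha(a,t(a)))$ for every term $t$.

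Next we pass to a model of $T'$ and ultimately of $S$. Take the substructure $M_0\subseteq M$ generated by $a$, i.e.\ the values of all $L_w$-terms at $a$. Since $T$ is universal, $M_0\models T$. Apply \Cref{theoremherbrandsaturated} to $M_0$ to get an $\exists$-elementary extension $\hat M\models T$ with the Herbrand property. The key claim is that $\hat M$ satisfies every $\forall\exists$ consequence of $T$. Granting this, we need $\hat M\models T'$, and then $\hat M\models S$. This step is the main obstacle. Conservativity of $T'$ over $T$ for $\forall\exists$ sentences does not directly make $\hat M$ a model of $T'$.

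The route I would try first is the following. Since $S\subseteq T'$, we have $T'\vdash(\forall x)(\exists y)(\forall z)\psi$. Suppose toward contradiction that we can show that no element $y$ of $M_0$ works for $a$ in a suitable sense. Each element of $M_0$ is $t(a)$ for some term $t$, and $\alpha(a,t(a))$ refutes it. So $M_0\models(\forall y)\lnot\psi(a,y,\alpha(a,y))$ restricted to term values, which means $M_0\models(\forall y)\lnot\psi(a,y,\alpha(a,y))$ outright, since every element is a term value. Now $(\exists x)(\forall y)\lnot\psi(x,y,\alpha(x,y))$ is an $\exists\forall$ sentence true in a model of $T$. Therefore its negation $(\forall x)(\exists y)\psi(x,y,\alpha(x,y))$ is not provable in $T$.

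By $\forall\exists$-conservativity, that negation is then not provable in $T'$ either. However, $T'\vdash(\forall x)(\exists y)(\forall z)\psi(x,y,z)$, and instantiating $z:=\alpha(x,y)$ gives $T'\vdash(\forall x)(\exists y)\psi(x,y,\alpha(x,y))$. This is a contradiction. The detour through $\hat M$ and \Cref{theoremherbrandsaturated} may in fact be unnecessary: the direct version is to apply conservativity to the $\forall\exists$ sentence $(\forall x)(\exists y)\psi(x,y,\alpha(x,y))$ to get $T\vdash$ it, then use Herbrand's theorem for the universal theory $T$ (or \Cref{theoremherbrandsaturated} applied to the term model) to obtain finitely many terms, and finally merge them by cases as in the first paragraph. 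I would write the proof in this streamlined form, using the model-theoretic argument only to justify Herbrand's theorem for $T$.
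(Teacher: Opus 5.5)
Your streamlined argument is correct and is exactly the paper's proof: instantiate $z:=\alpha(x,y)$ in $T'$, pull $(\forall x)(\exists y)\psi(x,y,\alpha(x,y))$ down to $T$ by $\forall\exists$-conservativity, apply Herbrand's theorem for the universal theory $T$, and merge the finitely many terms by definition by cases. Your earlier contradiction argument via the substructure $M_0$ generated by $a$ is also valid, but it only inlines a proof of Herbrand's theorem, and the detour through $\hat M$ and \Cref{theoremherbrandsaturated} is, as you suspected, unnecessary.
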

\begin{proof}
    Assume that $S\vdash (\forall x)(\exists y)(\forall z)(\psi(x,y,z))$. Then we have by Herbrandization on the variable $z$ that
    \[S\vdash (\forall x)(\exists y)(\psi(x,y,\alpha(x,y))),\]
    by $S\subseteq T'$ we also have
    \[T'\vdash (\forall x)(\exists y)(\psi(x,y,\alpha(x,y))),\]
    by the $\forall\exists$-conservativity of $T'$ over $T$ we have
    \[T\vdash (\forall x)(\exists y)(\psi(x,y,\alpha(x,y))),\]
    and since $T$ is a universal theory we obtain by Herband's theorem $L_w$-terms $t_1,\dots,t_k$ such that
    \[T\vdash (\forall x)(\bigvee_{1\leq i\leq k}(\psi(x,t_i(x),\alpha(x,t_i(x)))),\]
    and by the closedness by definition by cases there is a single $L_w$-term $t$ such that 
    \[T\vdash (\forall x)(\psi(x,t(x),\alpha(x,t(x)))).\qedhere\]
\end{proof}

\subsection{Sharply Bounded Replacement Scheme \texorpdfstring{$\mathsf{BB}(\Sigma^b_j,\bd)$}{BB}}

We will fix a $\PV$-symbol $\bd$ such that $\PV_1$ proves it is upper bounded by the identity, that is $\PV_1\vdash \bd(x)\leq x$. We will now introduce a theory which axiomatizes parallel acces to the oracle $\alpha$.

\begin{definition}
    We define the theory $\PV_j'(\parallel, \alpha,\bd)$ as the theory extending $\PV_j$ by a new function symbol $\alpha(x,y)$ which may have more parameters which we will not display, and for every $c\geq 1$, we also add a function $\Read_c$ with new axioms:
    \begin{align*}
        &(\forall l)(\forall x)((\Seq(l) \land \Len(l)\neq 0 \land \Len(l)\leq \bd(\abs{x})^c) \to (\forall i\leq \Len(l))((\Read_c(l,x))_i=\alpha((l)_i^1,(l)_i^2))),\\
        &(\forall l)(\forall x)((\lnot \Seq(l) \lor \Len(l) = 0\lor \Len(l) > \bd(\abs{x})^c)\to \Read_c(l, x)= 0),
    \end{align*}
    where $\Seq(l)$ is the open $\PV$-formula stating that $l$ codes a sequence of numbers, $\Len$ is the function which gives the number of elements in a sequence and $(-)_i$ is the $\PV$-function symbol which returns the $i$-th element of a sequence, and the superscripts $\phantom{}^1$ and $\phantom{}^2$ are the $\PV$-symbols which return the first and the second component of a pair. We will sometimes omit the subscript and the second parameter of $\Read$ if we want to keep them implicit.

    There are no other axioms about $\alpha$, but in the cases we care about it will represent a function oracle whose output length is bounded by a polynomial in the input length, once a specific bound is fixed we will denote by $\PV_j(\parallel, \alpha,\bd)$ a theory $\PV_j'(\parallel, \alpha,\bd)$ with the extra axiom
    \[\alpha(x,y)\leq t(x,y)\]
    where $t$ is any $\PV_j$-symbol. The function $\Read$ then allows one to query polynomially many values of $\alpha$ at once.
\end{definition}

Note that $\PV_j(\parallel,\alpha,\bd)$ is a sub-theory of $\PV_j(\alpha)$, the relativized $\PV_j$, as $\Read_c$ can be identified with a specific $\PV_j(\alpha)$-symbol with unrestricted oracle access. The theory $\PV_j(\parallel,\alpha,\bd)$ only reasons with functions which can access oracle $\alpha$ for constantly many times and each time ask for $\poly(\bd)$-many queries, as the function $\alpha$ cannot be iterated inside a limited recursion on notation rule.

\begin{lemma}\label{lemmaopenbb}
    Let $\alpha$ be a function satisfying $\abs{\alpha(x,y)}\leq \abs{t(x,y)}$ for some $\PV_j$-term $t$. Then for every $c\geq 1$, every term $s$ in the language of~$\PV_j(\parallel, \alpha,\bd)$ and every open formula $\varphi$ we have that \[\PV_j(\parallel,\alpha,\bd)\vdash (\forall i\leq \bd(\abs{x})^c)(\varphi(i,s(i)))\to (\exists w)(\forall i\leq \bd(\abs{x})^c)(\varphi(i,(w)_i).\]
\end{lemma}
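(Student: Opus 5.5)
The plan is to build the sequence $w$ explicitly as a term. The hypothesis already supplies a term $s$ with $\varphi(i,s(i))$ for every $i\leq \bd(\abs{x})^c$, so it suffices to exhibit a single term $w(x)$ whose $i$-th entry equals $s(i)$ for all such $i$. Then $(w)_i = s(i)$, and hence $\varphi(i,(w)_i)$ follows from the hypothesis by substitution of equals, since $\varphi$ is open. The only difficulty is that $\PV_j(\parallel,\alpha,\bd)$ has no limited recursion on notation over $\alpha$. The sequence $\langle s(0),\dots,s(\bd(\abs{x})^c)\rangle$ therefore has to be produced through the $\Read$ functions, not by iterating $s$.

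I will prove, by induction on the construction of the term $s(i,\overline y)$ (with parameters), the following statement. For every $c$ there is a term $\hat s(x,\overline y)$ such that
\[\PV_j(\parallel,\alpha,\bd)\vdash \Seq(\hat s)\land \Len(\hat s)=\bd(\abs{x})^c+1 \land (\forall i\leq \bd(\abs{x})^c)((\hat s)_i = s(i,\overline y)).\]
The cases are as follows.
\begin{enumerate}
\item \emph{Variables and $\PV_j$-symbols applied to subterms.} If $s=g(s_1,\dots,s_k)$ with $g$ a $\PV_j$-symbol, I take the vectorized version of $g$: a $\PV_j$-function that maps the sequences $\hat s_1,\dots,\hat s_k$ componentwise through $g$. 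It exists because $\PV_j$ is closed under limited recursion on notation. Its correctness is proved by open induction, which is available in $\PV_j$. The polynomial bound needed for the limited recursion comes from the length of the $\hat s_l$ and the boundedness of $g$.
\item \emph{$\alpha$ applied to subterms.} If $s=\alpha(s_1,s_2)$, I pair the vectors $\hat s_1,\hat s_2$ componentwise into a sequence $l$ of length $\bd(\abs{x})^c+1$ and set $\hat s:=\Read_{c'}(l,x)$ for a $c'$ large enough that $\bd(\abs{x})^c+1\leq \bd(\abs{x})^{c'}$. The $\Read$ axiom then gives $(\hat s)_i=\alpha((l)^1_i,(l)^2_i)=\alpha(s_1(i),s_2(i))$.
\item \emph{$\Read_{c''}$ applied to subterms.} This is handled like the previous case, except that it needs a flattened concatenation of the query lists. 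That concatenation has length at most $\bd(\abs{x})^{c+c''}$, so a single call to $\Read_{c+c''}$ suffices, after which I split the answer back into blocks with a $\PV_j$-function.
\end{enumerate}

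Finally I put $w:=\hat s$ and conclude as described at the start.

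The main obstacle is the degenerate behaviour of $\bd$. When $\bd(\abs{x})\leq 1$, the power $\bd(\abs{x})^{c'}$ no longer dominates $\bd(\abs{x})^c+1$. The $\Read$ axioms also need $\Len(l)\neq 0$ and have an off-by-one issue with indices running from $0$ or $1$. I would handle this by definition by cases (\Cref{lemmapvifthenelse}): when $\bd(\abs{x})$ is below a fixed constant, only constantly many values are needed, and these can be computed directly by $\alpha$-terms.

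A second subtle point is that the componentwise $\PV_j$-maps must be bounded. This is where the assumption $\abs{\alpha(x,y)}\leq \abs{t(x,y)}$ is used. It bounds the entries of $\Read(l,x)$ by a $\PV_j$-term, so the $\PV_j$ vectorization can be applied with explicit bounds and the required sequence codings provably exist.
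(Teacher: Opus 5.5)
Your proposal is correct and is essentially the paper's own proof. Both argue by induction on the structure of $s$, attaching to each subterm a term that lists its values for $i\leq \bd(\abs{x})^c$: componentwise application handles $\PV_j$-symbols (the paper's $\text{Apply}_g$), a paired query list with one $\Read$ call handles $\alpha$ (the paper's $\Read_c(\text{PairUp}(w^u,w^v))$), and a concatenated query list, a single larger $\Read$ call and regrouping handle a nested $\Read$.

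Your remarks on small values of $\bd(\abs{x})$ and on indexing are extra care the paper omits. Note only that, like the paper, you tacitly use that $\Read_c(l,x)$ is itself a sequence of length $\Len(l)$, whereas the stated axioms fix it only entrywise. You need this both for the $\Seq$/$\Len$ clause of your induction hypothesis and for the regrouped blocks to equal the nested $\Read$ values.
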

\begin{proof}
    We will proceed by induction on the complexity of $s$, where we will recursively assign to every $\PV_j(\parallel,\alpha,\bd)$-term $u$ another term $w^u$ which computes the witness $w$. If $s$ is a $\PV$-symbol, then the theorem is clear.

    Let $s$ be of the form $\alpha(u(i),v(i))$, for a simpler $\PV_j(\parallel,\alpha,\bd)$-term $u$. It is then possible to compute the value for $w$ by the following term:
    \[\Read_c(\text{PairUp}(w^u,w^v)),\]
    where $w^u$ and $w^v$ are the terms obtained from the induction hypothesis, and $\text{PairUp}$ is $\PV$-symbol which obtains two sequences of the same length on the input and outputs a sequence of pairs where the $i$-th component is the pair of the $i$-th components of the input sequenes.

    Let $s$ be of the form $\Read(u(i))$, for a simpler $\PV_j(\parallel,\alpha,\bd)$-term $u$. It is then possible to compute the value for $w$ by the following term:
    \[\text{Regroup}(\Read_{c+d}(\text{Concat}(w^u)),w^u),\]
    where $w^u$ is the term obtained from the induction hypothesis of $i$ being bounded by $\bd(\abs{x})^d$, $\text{Concat(-)}$ concatenates a sequence of sequences into a single sequence and $\text{Regroup}$ is a function which groups subsequences into elements coding the subsequences so that the resulting sequence of sequences is of the same type as the second argument of $\text{Regroup}$.

    It remains to consider $s$ of the form  $g(u(i))$, for a simpler $\PV_j(\parallel,\alpha)$-term $u$ and a $\PV$-symbol $g$. It is then possible to compute the value for $w$ by the following term:
    \[\text{Apply}_g(w^u),\]
    where $\text{Apply}_g$ is a $\PV$-symbol which receives as an input a sequence and outputs a sequence which is obtained by applying $g$ to every element of the input.
\end{proof}

\begin{definition}
    The axiom scheme $\mathsf{BB}(\Sigma^b_{j}(\alpha),\bd)$ consists of the set $\{\mathsf{BB}(\varphi,\bd);\:\varphi\in\Sigma^b_j(\alpha)\},$
    where $\Sigma^b_j(\alpha)$ consists of all formulas which can be obtained by starting with a $\Sigma^b_j$-formula and substituting $\PV_j\cup\{\alpha\}$-terms for any of the free variables.
\end{definition}

\begin{lemma}\label{lemmapvbbconservativity}
    Let $\alpha$ be a function satisfying $\abs{\alpha(x)}\leq k\abs{x}^k$ for some $k$. Then the theory \[\PV_j(\parallel,\alpha,\bd)+\mathsf{BB}(\Sigma^b_{j}(\alpha),\bd)\] is $\forall\exists$-conservative over $\PV_j(\parallel,\alpha,\bd)$ possibly extended by any universal sentences.
\end{lemma}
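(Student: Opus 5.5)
We prove the conservativity model-theoretically, following the strategy of Avigad, Thapen and Cook--Thapen. Let $T_0$ denote $\PV_j(\parallel,\alpha,\bd)$ together with any fixed set of universal sentences; $T_0$ is still universal. Suppose $T_0+\mathsf{BB}(\Sigma^b_j(\alpha),\bd)\vdash(\forall x)(\exists y)\theta(x,y)$ with $\theta$ open, but $T_0\nvdash(\forall x)(\exists y)\theta(x,y)$. By compactness and Herbrand's theorem for the universal theory $T_0$, there is a model $M\models T_0$ and an element $a\in M$ such that $M\models\lnot\theta(a,s(a))$ for every term $s$. Passing to the substructure of $M$ generated by $a$ under all terms keeps $M\models T_0$ and keeps the failure of $\exists y\,\theta(a,y)$. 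We then apply Theorem~\ref{theoremherbrandsaturated} to obtain an $\exists$-elementary extension $\hat M$, and take $K$ to be the substructure of $\hat M$ generated by $a$, that is, the set of values $s(a)$ for terms $s$. Since $\hat M$ extends $M$ $\exists$-elementarily, $\exists y\,\theta(a,y)$ still fails in $\hat M$, and therefore it fails in $K$. The contradiction will come from showing $K\models\mathsf{BB}(\Sigma^b_j(\alpha),\bd)$.

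The key step is to verify $\mathsf{BB}(\varphi,\bd)$ in $K$ for $\varphi\in\Sigma^b_j(\alpha)$. First we use that $\PV_j$ terms decide $\Sigma^b_{j-1}$ predicates. Hence, in the language of $\PV_j(\parallel,\alpha,\bd)$, every $\Sigma^b_j(\alpha)$ formula $\exists y\le t\,\varphi(i,y)$ is $T_0$-equivalent to $\exists y\le t\,\varphi_0(i,y,\bar u)$ with $\varphi_0$ open, possibly after replacing sharply bounded quantifiers by their $\PV_j$-computed witnesses. Now assume $K\models(\forall i\le \bd(\abs{x}))(\exists y\le t)\varphi_0(i,y)$. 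Because $K$ is generated by $a$, the parameters $x,t$ are values of terms at $a$. The hypothesis yields an open formula in which $i$ is universally quantified and $y$ existentially, but the quantifier on $i$ is bounded, so it cannot be handed to Theorem~\ref{theoremherbrandsaturated} directly. We therefore pull $i$ out as a free variable: set $\chi(i,y):= i\le \bd(\abs{x})\to(y\le t\land\varphi_0(i,y))$. Then $\hat M\models(\forall i)(\exists y)\chi$ should follow, because $\hat M$ and $K$ agree on bounded existentials with parameters from $K$; this needs a little care, using $\exists$-elementarity and the fact that the elements below $\bd(\abs{x})$ in $\hat M$ are still handled by the universal statement. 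Theorem~\ref{theoremherbrandsaturated} then supplies terms $t_1,\dots,t_k$, and Lemma~\ref{lemmapvifthenelse} merges them into a single term $s$ with $\hat M\models(\forall i\le\bd(\abs{x}))\varphi_0(i,s(i))$. Finally, Lemma~\ref{lemmaopenbb} with $c=1$ produces a term $w$ with $\forall i\le \bd(\abs{x})\,\varphi_0(i,(w)_i)$. Its value at $a$ lies in $K$, and universal statements pass down to the substructure $K$, so the conclusion of $\mathsf{BB}$ holds in $K$.

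The main obstacle is this passage through Theorem~\ref{theoremherbrandsaturated}. The issue is that the totality must hold in $\hat M$ for \emph{all} $i$, whereas we only know it for $i\le\bd(\abs{x})$ and only inside $K$. I would first try to absorb the parameters into the Herbrand saturation by treating them as terms in $a$, and rely on the fact that the $\hat M$ of Avigad's construction is saturated for formulas with parameters. If that fails, the fallback is to redo Avigad's chain construction directly, Herbrand-saturating with respect to formulas over $M$.

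A second point to watch is the bound $\abs{\alpha(x)}\le k\abs{x}^k$. It is needed so that $w$ is bounded and so that the pairing and sequence manipulations used in Lemma~\ref{lemmaopenbb} stay within $\PV$.
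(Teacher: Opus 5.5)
There is a genuine gap, and it comes from the detour through $K$. You first replaced $M$ by the substructure generated by $a$, and then took $\hat M\supseteq M$ as an extension. The values $s(a)$ computed in $\hat M$ therefore coincide with those computed in $M$, so $K=M$. Your plan thus amounts to showing that an \emph{arbitrary} term-generated model of $T_0$ satisfies $\mathsf{BB}(\Sigma^b_j(\alpha),\bd)$. The Herbrand saturation of $\hat M$ does not give you that.

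Concretely, the step ``$\hat M\models(\forall i)(\exists y)\chi$'' fails. From $K\models(\forall i\le\bd(\abs{x}))(\exists y\le t)\varphi_0(i,y)$, $\exists$-elementarity transfers $\exists y$ only for $i\in K$. In general $\hat M$ has elements $i\le\bd(\abs{x})$ that are not in $K$, since $K$ is not an initial segment, and $\forall\exists$ statements do not transfer upward to extensions. There is a second leak as well. The terms supplied by Theorem~\ref{theoremherbrandsaturated} must be taken in the version with parameters from $\hat M$ (as in Avigad's formulation), and they may involve parameters outside $K$. So the value of $w$ need not lie in $K$.

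The fix is to drop $K$ altogether. You already established that $\exists y\,\theta(a,y)$ fails in $\hat M$. It therefore suffices to show that $\hat M$ itself satisfies $\mathsf{BB}(\Sigma^b_j(\alpha),\bd)$, with the hypothesis $(\forall i\le\bd(\abs{x}))(\exists y\le t)\varphi_0(i,y)$ assumed in $\hat M$ for arbitrary parameters from $\hat M$. Your $\chi$-trick then goes through verbatim inside $\hat M$: moving the bound into the open matrix is no obstacle once saturation is applied with parameters. The same holds for Lemma~\ref{lemmapvifthenelse} and Lemma~\ref{lemmaopenbb}. So $\hat M$ is a model of $T_0+\mathsf{BB}(\Sigma^b_j(\alpha),\bd)$, hence $\hat M\models\exists y\,\theta(a,y)$, a contradiction. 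This is exactly the paper's argument: for an arbitrary $M\models T_0$, the extension $\hat M$ satisfies $\mathsf{BB}$, and $\forall\exists$ consequences then descend to $M$ by $\exists$-elementarity.

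Separately, your reduction of $\Sigma^b_j(\alpha)$ formulas to the form $\exists y\le t\,\varphi_0$ with $\varphi_0$ open only handles sharply bounded quantifiers inside the $\Pi^b_{j-1}$ part, i.e.\ strict formulas. A sharply bounded $\forall$ preceding the leading $\exists^b$ cannot be removed by $\PV_j$-computed witnesses, since doing so is itself an instance of bounded replacement.
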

\begin{proof}
    Let $M$ be an arbitrary model of $\PV_j(\parallel,\alpha,\bd)$. Let $\hat M$ be the extension of $M$ from Theorem~\ref{theoremherbrandsaturated}. We will show that $\hat M\models \mathsf{BB}(\Sigma^b_j(\alpha),\bd)$.

    Assume that for some $\varphi\in \Sigma^b_j$, $c\geq 1$ and a $\PV$-term $\bd$ we have
    \[\hat M \models (\forall i\leq \bd(\abs{x})^c)(\exists w)(\varphi(i,w)),\]
    by Theorem~\ref{theoremherbrandsaturated} there are $\PV_j(\parallel,\alpha,\bd)$-terms $t_1,\dots,t_k$ such that
    \[\hat M \models (\forall i\leq \bd(\abs{x})^c)(\varphi(i,t_1)\lor \varphi(i,t_2)\lor \dots \lor \varphi(i,t_k)),\]
    by $\PV_1\subseteq \PV_j(\parallel,\alpha,\bd)$ and Lemma~\ref{lemmapvifthenelse} we obtain a single $\PV_j(\parallel,\alpha,\bd)$-term $s$ such that \[\hat M \models (\forall i\leq \bd(\abs{x})^c)(\varphi(i,s)).\] 

    By Lemma~\ref{lemmaopenbb} we get that $\hat M \models (\exists w)(\forall i\leq \bd(\abs{x})^{c})(\varphi(x,(w)_i))$. Therefore, $\hat M\models \mathsf{BB}(\Sigma^b_j,\alpha,b)$, and since it is $\exists$-elementary over $M$, which was chosen as an arbitrary model of $\PV_j(\parallel,\alpha)$, we obtain the theorem.
\end{proof}

\begin{theorem} \label{thm:bbwitness}
    Let $\PV_j+\mathsf{BB}(\Sigma^b_{j},\bd)\vdash (\forall x)(\exists y)(\forall z\leq t)(\varphi(x,y,z))$, where $\varphi$ is open and $t$ is a $\PV_j$-term, then the $\mathbf{TF\Sigma^p_{j+1}}$ problem corresponding to $\varphi$ is in $\mathcal{ST}^{\Sigma^p_{j-1}}[O(1),\poly(\bd)]$.
\end{theorem}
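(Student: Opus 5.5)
We apply Theorem~\ref{theoremwitnessing} with $S:=\PV_j+\mathsf{BB}(\Sigma^b_j,\bd)$, $T:=\PV_j(\parallel,\alpha,\bd)$ (in the language $L_w=\PV_j\cup\{\alpha\}\cup\{\Read_c\}_c$), and $T':=\PV_j(\parallel,\alpha,\bd)+\mathsf{BB}(\Sigma^b_j(\alpha),\bd)$. The symbol $\alpha(x,y)$ is the Teacher: the intended reading is that $\alpha(x,y)$ returns some $z\le t(x,y)$ with $\lnot\varphi(x,y,z)$ if one exists, and $0$ otherwise. We first replace the formula by $\psi(x,y,z):= z\le t(x,y)\to\varphi(x,y,z)$, which is still open. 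This gives $S\vdash(\forall x)(\exists y)(\forall z)\psi$ with an unbounded $z$, as Theorem~\ref{theoremwitnessing} requires. We add the bounding axiom $\alpha(x,y)\le t(x,y)$, so that Lemma~\ref{lemmapvbbconservativity} applies with a polynomial length bound on $\alpha$.

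The hypotheses are then checked in order. $T$ is universal: $\PV_j$ is universal, and the $\Read_c$ axioms and the bounding axiom are universal. $T$ has terms closed under definition by cases by Lemma~\ref{lemmapvifthenelse}. $S\subseteq T'$ holds because every $\Sigma^b_j$-formula is in particular a $\Sigma^b_j(\alpha)$-formula. $T'$ is $\forall\exists$-conservative over $T$ by Lemma~\ref{lemmapvbbconservativity}. Theorem~\ref{theoremwitnessing} then yields an $L_w$-term $s$ with
\[T\vdash(\forall x)\,\psi(x,s(x),\alpha(x,s(x))).\]
This sentence holds in the standard model for every interpretation of $\alpha$ obeying the bound. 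In particular it holds for the true Teacher, and for that Teacher it says exactly that $R(x,s(x))$ is true.

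It remains to read $s$ as a Student, and this is the step I expect to be the main obstacle. Induction on the structure of $s$ shows that its oracle calls form a tree of constant depth, where nesting depth counts occurrences of $\alpha$ or $\Read$ inside one another. Each node issues at most $\bd(|x|)^c$ queries in parallel: a single $\alpha$-call is one query, and a $\Read_c$-call is $\le\bd(|x|)^c$ queries, whose argument is computed by an $\FP^{\Sigma^p_{j-1}}$ function of previously obtained answers. Grouping all calls at the same nesting level into one round gives $O(1)$ rounds with $\poly(\bd)$ queries per round. Between rounds the Student evaluates the $\PV_j$-symbols, which are $\FP^{\Sigma^p_{j-1}}$-functions.

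The delicate point is that these oracle queries are $\alpha$-calls on arbitrary pairs $(u,v)$, not candidate answers to the game itself. I would handle this by having the Student submit each queried $v$ as a candidate solution for the input $u$. This works because the game definition is uniform in the instance: in the setting of Theorem~\ref{theoremwitnessing} the relevant calls are on the current $x$, since the extra parameters of $\alpha$ are fixed to $x$. The Teacher's reply is then precisely a counterexample $z$, or an indication that $v$ is already correct, in which case the Student simply outputs it. Because the Student has a $\Sigma^p_{j-1}$ oracle, it can verify each counterexample and recognize when a candidate is correct.

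A final round outputs $s(x)$, and correctness follows from the displayed provable identity. This places the problem in $\mathcal{ST}^{\Sigma^p_{j-1}}[O(1),\poly(\bd)]$.
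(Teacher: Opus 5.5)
Your proposal is correct and follows the paper's own argument. It applies Theorem~\ref{theoremwitnessing} with $T=\PV_j(\parallel,\alpha,\bd)$ via Lemmas~\ref{lemmapvifthenelse} and~\ref{lemmapvbbconservativity}, reads the resulting term as a Student with $O(1)$ rounds of $\poly(\bd)$ parallel queries, and interprets $\alpha$ as the Teacher, and your write-up supplies details the paper leaves implicit. The one step to fix is your ``delicate point'': nothing forces $s$ to call $\alpha(u,v)$ only with $u=x$, but you do not need that. As you yourself note, the provable sentence holds for every bounded interpretation of $\alpha$, so for a fixed input $x_0$ you may set $\alpha(u,v)=0$ whenever $u\neq x_0$ and let $\alpha(x_0,v)$ be the (cached) Teacher reply to the candidate $v$.
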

\begin{proof}
     We will obtain $\PV_1(\parallel, \alpha,\bd)$ by accepting the axiom $\alpha(x,y)\leq t(x,y)$. By Lemma~\ref{lemmapvifthenelse}, Lemma~\ref{lemmapvbbconservativity} and Theorem~\ref{theoremwitnessing} we obtain that there is a single $\PV_j(\parallel,\alpha,\bd)$ term $s$ such that
    \[\PV_j(\parallel,\alpha,\bd)\vdash (\forall x)(\varphi(x,s(x),\alpha(x,s(x)))),\]
    the value of the term $s$ can be computed by a student which sends constantly many times $\poly(\bd(\abs{x}))$-many answers to the teacher and obtains a counterexample whenever one exists for each of the answers. The student runs in polynomial-time with $\Sigma^p_{j-1}$ oracle access, as all of the values are computed by $\PV_j$-terms and $\PV_j(\parallel,\alpha,\bd)$ terms where the length of $\alpha$ is polynomially bounded.

    Since there is no fixed interpretation for $\alpha$ except for a bound on the length of its outputs, we can interpret $\alpha$ as any function satisfying
    \[(\exists z)\lnot \varphi(x,s(x),z) \to \lnot \varphi(x,s(x),\alpha(x,s(x))),\]
    that is $\alpha$ is a function which is a teacher correcting every wrong answer. Now any oracle call (or parallel oracle call using $\Read_{d}$ for some $d\geq 1$) can either by answered with a counter-example, or $\alpha$ `admits' that it is correct.
\end{proof}

\subsection{Length \texorpdfstring{$\bd$}{b}-induction \texorpdfstring{$\mathsf{LIND}(\Sigma^b_j,\bd)$}{LIND}}

We will fix a $\PV$-symbol $\bd$ such that $\PV_1$ it is upper bounded by the identity,, that is $\PV_1\vdash \bd(x)\leq x$. 

\begin{definition}
    We define the theory $\PV_j'(\to, \alpha,\bd)$ as the theory extending $\PV_j$ by a new function symbol $\alpha(x,y)$ which may have more parameters we will not display and then proceeding in countably many steps, as described in the following. The construction is analogous to the definition of the equational theory $\PV$.

    The theory $\PV_j'(\to,\alpha,\bd)_0$ is in the language $\PV_j\cup\{\alpha\}$ and contains only the axioms of $\PV_j$. Assume, we have defined the theory $\PV_j'(\to,\alpha,\bd)_k$ for some $k$, we will define $\PV_j'(\to,\alpha,\bd)_{k+1}$ as follows. Consider function symbols $g,h_0,h_1,l_0,l_1$ from the language of $\PV_j'(\to,\alpha,\bd)_{k}$ for which we have that for $i\in\{0,1\}$:
    \[\PV_j'(\to,\alpha,\bd)_k\vdash \abs{h_i(\overline x,y,z)}'\leq\abs{z}+\abs{l_i(\overline x,y)},\]
    then for every $c\geq 1$ a new function symbol $f_c$ is added to the language of $\PV_j(\to,\alpha,\bd)_k$ and along with it a new axiom 
    \begin{align*}
        (\forall x)(\forall y)((\abs{y}\leq \bd(\abs{x})^c\to &(f_c(x,0)=g(x,y))\land \\
        &(f_c(x,2\cdot y)=h_0(x,y,f_c(x,y)))\land\\
        &(f_c(x,2\cdot y+1)=h_1( x,y,f_c(x,y))))\land\\
        \abs{y}>\bd(\abs{x})^c\to &(f_c(x,y)=0)),
    \end{align*}
    we say such an $f_c$ is defined using limited recursion on notation for $\poly(\bd)$-many steps. For each formula in the new language $\varphi(x)$ we also include the length $\bd$-induction axiom, that is for every $c$ we accept
    \[\lnot \varphi(0)\lor \varphi(\bd(\abs{x})^c) \lor (\varphi(h(\overline x))\land \lnot \varphi(h(\overline x)+1)\land h(\overline x)+1\leq \bd(\abs{x})^c),\]
    where $h$ is defined depending on $\varphi$ and $c$ using limited recursion on notation for $\poly(\bd)$-many steps, which emulates searching on the interval from $0$ to $\bd(\abs{x})^c$, going from one end step by step to the other, until one of the  disjuncts is satisfied.
    We define $\PV_j'(\to,\alpha,\bd)=\bigcup_{k\geq 0}\PV_j'(\to,\alpha,\bd)_k.$

    There are no other axioms about $\alpha$, but in the cases we care about it will represent a function oracle whose output length is bounded by a polynomial in the input length, once a specific bound is fixed we will denote by $\PV_1(\to, \alpha,\bd)$ a theory $\PV_1'(\to, \alpha,\bd)$ with the extra axiom
    \[\alpha(x,y)\leq t(x,y)\]
    where $t$ is a $\PV_j$-term which is implicitly chosen to reflect the upper bound on the growth-rate of $\alpha$.
\end{definition}

Note that $\PV_j(\to,\alpha,\bd)$ is a sub-theory of $\PV_j(\alpha)$, which contains only those symbols with at most $\poly(\bd)$-many queries to $\alpha$, as the function $\alpha$ cannot be iterated inside a limited recursion on notation rule more than $\poly(\bd)$-many times.

\begin{definition}
    The axiom scheme $\mathsf{LIND}(\mathsf{s}\Sigma^b_{j}(\alpha),\bd)$ consists of the set $\{\mathsf{LIND}(\varphi,\bd);\:\varphi\in\mathsf{s}\Sigma^b_j(\alpha)\},$
    where $\mathsf{s}\Sigma^b_j(\alpha)$ consists of all formulas which can be obtained by starting with a $\mathsf{s}\Sigma^b_j$-formula and substituting some $\PV_j\cup\{\alpha\}$-terms for the free variables.
\end{definition}

\begin{lemma}\label{lemmapvllind}
    Let $\alpha$ be a function satisfying $\abs{\alpha(x,y)}\leq \abs{t(x,y)}$ for some $\PV_{j+1}$-term $t$. Then the theory \[\PV_j(\to,\alpha,\bd)+\mathsf{LIND}(\mathsf{s}\Sigma^b_{j+1},\alpha,\bd)\] is $\forall\exists$-conservative over $\PV_j(\to,\alpha,\bd)$ possibly extended by any set of true universal sentences.
\end{lemma}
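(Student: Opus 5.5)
We follow the pattern of Lemma~\ref{lemmapvbbconservativity}. Let $M$ be any model of $\PV_j(\to,\alpha,\bd)$, together with the extra true universal sentences, and let $\hat M$ be the $\exists$-elementary extension from Theorem~\ref{theoremherbrandsaturated}. The theory is universal, so this theorem applies. It then suffices to show $\hat M\models \mathsf{LIND}(\varphi,\bd)$ for every $\varphi\in\mathsf{s}\Sigma^b_{j+1}(\alpha)$. Indeed, any $\forall\exists$-sentence true in all such $\hat M$ is, by $\exists$-elementarity, true in every $M$, and conservativity follows by completeness.

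Fix $\varphi(u)=\exists y\le s\;\theta(u,y)$, where $\theta$ is $\mathsf{s}\Pi^b_j$ (with parameters $x$ and $\alpha$-terms). Suppose $\hat M\models\varphi(0)$ and $\hat M\models \forall u<\bd(\abs{x})^c\,(\varphi(u)\to\varphi(u+1))$. The first step removes the $\Pi^b_j$ matrix: in $\PV_j$ every $\Pi^b_j$ formula is equivalent to an open formula, namely a $\PV_j$-symbol computing its truth value with a $\Sigma^p_{j-1}$ oracle. So we may take $\theta$ to be open, and this is provable in $\PV_j\subseteq \PV_j(\to,\alpha,\bd)$, since the $\alpha$-terms enter only as substituted parameters. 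The induction step is then a $\forall\exists$-statement $\forall u\,\forall y\,\exists y'\,(u<\bd(\abs x)^c\land y\le s\land\theta(u,y)\to y'\le s\land\theta(u+1,y'))$ true in $\hat M$. By Theorem~\ref{theoremherbrandsaturated} and Lemma~\ref{lemmapvifthenelse} we obtain a single term $g(x,u,y)$ witnessing $y'$. Likewise we get a term $y_0(x)$ witnessing $\varphi(0)$, or we simply take the witness as a parameter.

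Next we iterate $g$. Using limited recursion on notation for $\poly(\bd)$-many steps, which the language of $\PV_j(\to,\alpha,\bd)$ allows, we define $F(x,u)$ with $F(x,0)=y_0$ and $F(x,u+1)=g(x,u,F(x,u))$. The recursion runs on notation, so we recurse over a counter of length $\bd(\abs x)^c$ (unary in $u$) and read off the value at $u$. The length condition $\abs{h_i}\le\abs z+\abs{l_i}$ holds because we truncate outputs to $\le s$, using the bound $\alpha\le t$. Now apply the length $\bd$-induction axiom of $\PV_j(\to,\alpha,\bd)$ to the open formula $\psi(u):= F(x,u)\le s\land\theta(u,F(x,u))$. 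This gives $\psi(\bd(\abs x)^c)$, or a point $h$ with $\psi(h)\land\lnot\psi(h+1)$. The second option contradicts the defining property of $g$, and $\psi(0)$ holds. Hence $\varphi(\bd(\abs x)^c)$ holds in $\hat M$. The exponent $c$ covers the $\bd$ versus $\bd^c$ discrepancy; the scheme only needs $c=1$.

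The main obstacle is the first step. We must check that, for strict formulas, the matrix really becomes open in the language, so that Herbrand saturation yields a term, and that the iterated term stays inside the $\poly(\bd)$-round fragment. Strictness is exactly what makes this work. For non-strict $\varphi$ the sharply bounded quantifiers in front would force $\poly(\abs x)$ parallel witnesses per step, which is not available here. If the open-equivalence of $\Pi^b_j$ fails to go through uniformly in the presence of $\alpha$, we would fall back on treating $\theta$'s leading universal quantifier via a second Herbrandization.
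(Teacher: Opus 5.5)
Your overall route is the paper's. You pass to the $\exists$-elementary Herbrand-saturated extension $\hat M$, extract one step term, iterate it by $\poly(\bd)$-step limited recursion on notation, and close with the open length $\bd$-induction axiom of $\PV_j(\to,\alpha,\bd)$. The gap is the opening step, which you yourself flag. It is not true that $\PV_j$ makes every $\Pi^b_j$ formula equivalent to an open one. The symbols of $\PV_j$ are the $\FP^{\Sigma^p_{j-1}}$ functions, so open formulas define only $\Delta^p_j$ predicates. A $\Sigma^p_{j-1}$ oracle deciding $\Pi^b_j$ formulas would give $\Pi^p_j\subseteq\Delta^p_j$; for $j=1$ you would be claiming that $\PV_1$ turns $\Pi^b_1$ formulas into polynomial-time predicates. $\PV_j$ only provides open equivalents for $\Sigma^b_{j-1}$ and $\Pi^b_{j-1}$ formulas, i.e.\ for the matrix of an $\mathsf{s}\Sigma^b_j$ formula, one level below what you use.

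Your fallback does not help either. Theorem~\ref{theoremherbrandsaturated} yields terms only from $\forall\exists$-open truths of $\hat M$. With a genuine matrix $\theta=\forall z\,\theta'$, the step $\varphi(u)\to\varphi(u+1)$ becomes a statement of the form $\forall u\,\forall y\,\exists z\,\exists y'\,\forall z'$. The inner $\forall z'$ cannot be removed inside $\hat M$ without a new Skolem (teacher) function, so there is no single term to iterate. In fact the literal $\mathsf{s}\Sigma^b_{j+1}$ version is not expected to hold. For $j=1$ and $\bd$ the identity, it would make $\PV_1+\mathsf{LIND}(\mathsf{s}\Sigma^b_2)\supseteq\mathsf{T}^1_2$ $\forall\exists$-conservative over $\PV_1$: apply Herbrand's theorem, then interpret $\alpha$ as $0$. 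That would put every $\mathsf{PLS}$ problem in $\FP$.

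The paper's proof avoids this because it actually proves the level-$j$ version. It assumes from the start that the matrix $\varphi(i,w)$ is an open $\PV_j$-formula, and it concludes $\hat M\models\mathsf{LIND}(\mathsf{s}\Sigma^b_j,\alpha,\bd)$. This is exactly what Theorem~\ref{thm:lindwitness} needs ($\PV_j+\mathsf{LIND}(\Sigma^b_j,\bd)$ with a $\Sigma^p_{j-1}$-oracle student), so the $j+1$ in the statement is an index slip. If you restrict to $\varphi\in\mathsf{s}\Sigma^b_j$, so that $\theta\in\mathsf{s}\Pi^b_{j-1}$, your opening step becomes correct. The rest of your argument then goes through verbatim and coincides with the paper's. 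Your unary counter and truncation to $\le s$ make the iteration cleaner than the paper's version.
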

\begin{proof}
    Let $M$ be an arbitrary model of $\PV_j(\to,\alpha,\bd)$. Let  $\hat M$ be the extension of $M$ from Theorem~\ref{theoremherbrandsaturated}. We will show that $\hat M\models \mathsf{LIND}(\mathsf{s}\Sigma^b_j,\alpha,\bd)$.

    Assume, that for some open $\PV_j$-formula $\varphi(i,w)$ possibly with extra parameters and a $\PV$-term $t$ we have
    \[\hat M \models (\exists w\leq t)\varphi(0,w) \land (\forall i\leq \bd(\abs{x})^c)((\exists w)(\varphi(i,w))\to (\exists w)(\varphi(i+1,w))),\]
    by Theorem~\ref{theoremherbrandsaturated} there are $\PV_1(\to,\alpha,\bd)$-terms $t_1,\dots,t_k$ such that
    \[\hat M \models (\forall i\leq \bd(\abs{x})^c)(\forall w\leq t)(\bigvee_{1\leq i\leq k}((\varphi(i,w))\to (\varphi(i+1,t_i)))),\]
    and it is possible to only obtain a single $\PV_j(\to,\alpha,\bd)$-term $s$ such that 
    \[\hat M \models (\forall i\leq \bd(\abs{x})^c)(\forall w\leq t)((\varphi(i,w))\to (\varphi(i+1,s))),\]
    by limited recursion on notation on $s$ for $\poly(\bd)$-many steps it is possible to obtain a function $s'$ such that $\PV_j(\to,\alpha,\bd)$ proves
    \[(\forall \overline x)(\forall w)(\forall y, \abs{y}\leq \bd(\abs{x})^c)(s'(w,0) =s(w,0)\land (\forall r\leq 1)(s'(w,2\cdot y+r)=s(s'(w,y),y))),\]
    and thus by length $\bd$-induction on $\abs{y}$ inside $\hat M$ we obtain
    \[\hat M \models (\forall w\leq t)(\varphi(0,w)\to \varphi(\bd(\abs{x})^c,s'(w,\bd(\abs{x})^c)),\]
    and thus
    \[\hat M \models (\exists w)\varphi(\bd(\abs{x})^c,w).\]

    Therefore, $\hat M\models \mathsf{LIND}(s\Sigma^b_j,\alpha,\bd)$, and since it is $\exists$-elementary over $M$, which was chosen as an arbitrary model of $\PV_j(\to,\alpha,\bd)$, we obtain the theorem.
\end{proof}

\begin{theorem}\label{thm:lindwitness}
    Let $\PV_j+\mathsf{LIND}(\Sigma^b_j,\bd)\vdash (\forall x)(\exists y)(\forall z\leq t)(\varphi(x,y,z))$, where $\varphi$ is open and $t$ is an $\PV_j$-term. Then the $\mathsf{TF\Sigma^p_{j+1}}$ problem corresponding to $\varphi$ is in $\mathcal{ST}^{\Sigma^b_{j-1}}[\poly(\bd),1]$. 
\end{theorem}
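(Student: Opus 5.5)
The proof will mirror the proof of Theorem~\ref{thm:bbwitness}, with the parallel theory $\PV_j(\parallel,\alpha,\bd)$ replaced by the sequential theory $\PV_j(\to,\alpha,\bd)$. Lemma~\ref{lemmapvllind} supplies the $\forall\exists$-conservativity, and Theorem~\ref{theoremwitnessing} turns this into a single witnessing term.

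\textbf{Step 1: set up the theories.} Add the axiom $\alpha(x,y)\leq t(x,y)$, where $t$ is the bound on $\forall z$, to obtain $\PV_j(\to,\alpha,\bd)$. This theory is universal, since every axiom added in the stages $\PV_j'(\to,\alpha,\bd)_k$ is universal, including the length $\bd$-induction axioms stated with the search function $h$. By Lemma~\ref{lemmapvifthenelse}, its terms are closed under definition by cases by open formulas.

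\textbf{Step 2: invoke the earlier results.} Take $S=\PV_j+\mathsf{LIND}(\Sigma^b_j,\bd)$ and $T'=\PV_j(\to,\alpha,\bd)+\mathsf{LIND}(\mathsf{s}\Sigma^b_{j}(\alpha),\bd)$. To check $S\subseteq T'$ I need two facts.
\begin{itemize}
\item The instances of $\mathsf{LIND}(\Sigma^b_j,\bd)$ over $\PV_j$ reduce to strict ones, or even to open ones. In $\PV_j$ every $\Sigma^b_{j-1}$ predicate has a symbol, so a $\Sigma^b_j$-formula is equivalent to $\exists^b$ followed by an open $\PV_j$-formula. This is exactly the form handled in the proof of Lemma~\ref{lemmapvllind}.
\item The ``$\poly(\bd)$'' exponent $c$ is absorbed: the scheme with $\bd(|x|)$ implies the scheme with $\bd(|x|)^c$ by iterating or nesting induction a constant number of times.
\end{itemize}
Lemma~\ref{lemmapvllind} then gives that $T'$ is $\forall\exists$-conservative over $T$. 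Theorem~\ref{theoremwitnessing} yields a single $\PV_j(\to,\alpha,\bd)$-term $s$ with
\[T\vdash(\forall x)\,\varphi\bigl(x,s(x),\alpha(x,s(x))\bigr).\]

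\textbf{Step 3: read off the Student.} I argue by induction on the stage $k$ at which a symbol is introduced. A symbol of $\PV_j(\to,\alpha,\bd)$ is computed by a polynomial-time machine with a $\Sigma^p_{j-1}$ oracle that makes at most $\poly(\bd(|x|))$ sequential calls to $\alpha$.
\begin{itemize}
\item Composition adds the query counts of its parts.
\item Limited recursion on notation for $\poly(\bd)$ steps multiplies the query count of the step functions by $\bd(|x|)^c$. Since each $h_i$ was introduced at an earlier stage, the total stays $\poly(\bd)$.
\end{itemize}
Interpret $\alpha$ as a Teacher that returns a counterexample whenever one exists, exactly as in Theorem~\ref{thm:bbwitness}. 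The computation of $s$ then gives a Student that makes one query per round over $\poly(\bd)$ rounds. Each query $\alpha(x',y')$ that arises inside $s$ is an attempted witness; if it concerns an auxiliary statement rather than $\varphi$ itself, the Student answers it with its own oracle or by encoding it into a candidate. Hence the problem lies in $\mathcal{ST}^{\Sigma^p_{j-1}}[\poly(\bd),1]$.

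\textbf{Main obstacle.} The hardest step is this last translation. Intermediate $\alpha$-queries inside $s$ may be asked at arguments unrelated to the Teacher's relation $\varphi$. The fix I would try first is the standard one: since $\alpha$ is uninterpreted apart from its length bound, fix its interpretation to be $z\mapsto$ a counterexample for $\varphi(x,y,\cdot)$ whenever $(x,y)$ is the queried pair, and $0$ otherwise. The proved identity holds under every interpretation, so each call becomes a genuine query to the Teacher. A secondary difficulty is making the query-counting induction rigorous across the countably many stages; I expect it to be routine but bookkeeping-heavy.
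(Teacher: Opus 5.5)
Your architecture is the paper's. You bound $\alpha$ to get the universal theory $\PV_j(\to,\alpha,\bd)$, combine Lemma~\ref{lemmapvifthenelse}, Lemma~\ref{lemmapvllind} and Theorem~\ref{theoremwitnessing} into a single term $s$, and read off a $\poly(\bd)$-round, one-query Student by interpreting $\alpha$ as a Teacher.

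\textbf{The gap is the first bullet of your Step~2.} Having $\PV_j$-symbols for $\Sigma^b_{j-1}$ predicates does not make a non-strict $\Sigma^b_j$-formula equivalent to $\exists^b$ followed by an open formula. In $(\forall i\le|x|)(\exists y\le t)\,\theta(i,y)$ you cannot move $\exists y$ in front of the sharply bounded $\forall i$ without collecting all the witnesses into one sequence. That is exactly the exchange property supplied by $\mathsf{BB}$ (Proposition~\ref{prop:bd-exchange}), not something $\PV_j$ gives for free.

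Your reduction would in fact yield $\PV_1+\mathsf{LIND}(\mathsf{s}\Sigma^b_1,\log)\vdash\mathsf{LIND}(\Sigma^b_1,\log)$. The latter proves $\mathsf{BB}(\Sigma^b_1)$ \cite{ALLEN19911}. This contradicts Theorem~\ref{thm:lindbbsep} (with $\bd_1$ the identity and $\bd_2=\log$) under $\NP\not\subseteq\P/\poly$, and it is precisely how the paper answers Pollett's question.

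For the same reason, the conclusion itself is false for non-strict formulas. $\mathsf{BB}(\Sigma^b_1)$ proves $\Psi_p$ with $p(n)=n$ (Proposition~\ref{prop:psi}). By the proof of Theorem~\ref{thm:st2}, a Student for $\Psi_p$ must receive at least $p(n)$ counterexamples, far more than $\poly(\log m)$ single-query rounds supply.

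\textbf{How to repair it.} The theorem has to be read with $\mathsf{s}\Sigma^b_j$, as in Lemma~\ref{lemmapvllind} and in its use for Theorem~\ref{thm:lindsep}. Then both bullets of Step~2 become unnecessary:
\begin{itemize}
\item The first, because a strict $\Sigma^b_j$-formula is already $\exists^b$ followed by a strict $\Pi^b_{j-1}$ matrix, which has a $\PV_j$ characteristic function.
\item The second, because the $\bd(|x|)$-instances are the case $c=1$ of Lemma~\ref{lemmapvllind}.
\end{itemize}
With that correction the argument is the paper's.

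\textbf{On Step~3.} Your final fix is correct and more careful than the paper's one-line treatment. It works because the interpretation of $\alpha$ may depend on the fixed input $x$: the Teacher's counterexample when the first argument is $x$, and $0$ otherwise. The proved identity holds under every interpretation respecting the bound, so it holds under this one. Drop the alternative of answering auxiliary queries ``with its own oracle'': a polynomial-time machine with a $\Sigma^p_{j-1}$ oracle cannot in general find counterexamples to a $\Pi^p_j$ relation.
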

\begin{proof}
    We use the bound $\alpha(x,y)\leq t(x,y)$ to obtain the theory $\PV_j(\to,\alpha,\bd)$. Now, by Lemma~\ref{lemmapvifthenelse}, Lemma~\ref{lemmapvllind} and Theorem~\ref{theoremwitnessing} we obtain that there is a $\PV_j(\to,\alpha,b)$-term $s$ such that
    \[\PV_j(\to,\alpha,b)\vdash (\forall x)(\varphi(x,s(x),\alpha(x,s(x)))),\]
    the value of the term $s$ can be computed by a student which sends $\poly(\bd)$-many times a single answer to the teacher and obtains a counterexample whenever one exists. Since all of the terms are either $\PV_j$-terms or $\PV_j(\to,\alpha,\bd)$ terms which are bounded by $\PV_j$-terms the student runs in polynomial-time in the length of $x$ with $\Sigma^p_{j-1}$-oracle access.

    As in the previous section, there is no fixed interpretation for $\alpha$ except for a bound on the length of its outputs and we can interpret $\alpha$ as any function satisfying
    \[(\exists z)\lnot \varphi(x,s(x),z) \to \lnot \varphi(x,s(x),\alpha(x,s(x))),\]
    that is $\alpha$ is a function which is a teacher correcting every wrong answer, then any oracle call can either by answered with a counter-example, or $\alpha$ `admits' that it is correct.
\end{proof}

%% file: separations.tex
\section{Separations of Theories}\label{sec:separations}

We will now use the \Cref{thm:st1,thm:st2} and their proofs to show the separations in \Cref{thm:bbsep,thm:lindsep,thm:lindbbsep}.

First, we revisit the sentences $\Phi$ and $\Psi$ from the proofs of the \Cref{thm:st1,thm:st2}. We will parametrise them by the number of elements in the sequence of the input.
$$
\Phi_{p}:= \forall n\in \mathsf{Log}\; \forall x_1,\dots,x_{p(n)}\in \{0,1\}^n\; \exists y_1,\dots,y_{p(n)} \;\forall i,z_1,\dots,z_{p(n)}\; \varphi'(x_i,y_i,z_i)
$$
\begin{align*}
    \Psi_{p}:= \forall n\in \mathsf{Log}\; \forall x_1,\dots,x_{p(n)}\in \{0,1\}^n\; &\left( \exists i\leq p(n) \; \forall y \;\neg \varphi(x_i,y)\right)\; 
\lor\\ &\left( \exists y_1,\dots,y_{p(n)} \; \forall j\leq p(n)\; \varphi(x_j,y_j) \right)
\end{align*}

To write these in the language of $\PV$, we can change the $n\in \mathsf{Log}$ with $\forall N\,\forall n=\abs{N}$, and the sequences, instead of for example $\forall x_1,\dots,x_{p(n)}\in \{0,1\}^n$, having $$\forall X\: \left(\mathsf{Seq}(X)\land \mathsf{Len}(X)=p(n)\land \mathsf{Size}(X)=n\right),$$
where the functions $\mathsf{Seq}$, $\mathsf{Len}$ and $\mathsf{Size}$ are the same as in \Cref{sec:witnessing}.

\begin{proposition}\label{prop:phi}
    For $\varphi\in \Sigma^b_{i+1}$  and $\Phi_{p}$ defined as above with $p(n)=\bd(n)$ where $\bd$ is a symbol of the language, $\mathsf{T}^i_2+\mathsf{LIND}(\mathsf{s}\Sigma^b_{i+1},\bd)\vdash \Phi_{\bd}$.
\end{proposition}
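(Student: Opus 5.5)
We prove $\Phi_{\bd}$ by a $\bd$-length induction that formalises the Student strategy from the proof of \Cref{thm:st1}: the Student collects witnesses one at a time, and the Teacher's counterexamples supply them. Fix $N$, $n=\abs{N}$, and a sequence $X$ with $\mathsf{Seq}(X)$, $\mathsf{Len}(X)=\bd(n)$ and $\mathsf{Size}(X)=n$. Write $\varphi(x,y)$ for the matrix defining $L$ (so $x\in L$ iff $\exists y\,\varphi(x,y)$), and $\varphi'(x,y,z)=\varphi(x,z)\to\varphi(x,y)$. Since the induction is over $\bd(\abs{N})$, we apply $\mathsf{LIND}(\cdot,\bd)$ with the variable $N$ (or a term of length $n$), so that the number of steps is exactly $\bd(n)=\mathsf{Len}(X)$.

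The induction formula $\theta(u)$ says: there is a set of at least $u$ indices, each paired with a witness. Concretely,
\[\theta(u):= \exists Y\,\exists A\;\bigl(\mathsf{Seq}(Y)\land \mathsf{Len}(Y)=\bd(n)\land \abs{A}\geq u \land \forall i\in A\;\varphi(x_i,Y_i)\bigr) \;\lor\; \Phi\text{-conclusion holds for }X.\]
Here $A$ is coded as a $\bd(n)$-bit string, and the conclusion is $\exists Y\,\forall i,z\;\varphi'(x_i,Y_i,z_i)$. It is cleaner to place the disjunct outside and run induction on the first part under the hypothesis that the conclusion fails. That is, we argue by contradiction: assume $\neg\exists Y\,\forall i\,\forall z\,(\varphi(x_i,z)\to\varphi(x_i,Y_i))$ and prove $\forall u\leq\bd(n)\,\theta_0(u)$, where $\theta_0$ is the first disjunct.

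For the induction step, suppose $Y$ and $A$ witness $\theta_0(u)$. Because the conclusion fails for $Y$, there exist $i$ and $z$ with $\varphi(x_i,z)$ and $\neg\varphi(x_i,Y_i)$. Then $i\notin A$, so we set $Y'=Y[i\mapsto z]$ and $A'=A\cup\{i\}$, which witness $\theta_0(u+1)$. Now $\theta_0(0)$ holds trivially, so by $\mathsf{LIND}(\theta_0,\bd)$ we get $\theta_0(\bd(n))$. This gives $A=[\bd(n)]$ and a $Y$ with $\varphi(x_i,Y_i)$ for every $i$. That $Y$ satisfies the conclusion, since each implication $\varphi'$ has a true consequent, which contradicts the assumption. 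Note that the step uses only the failure of the conclusion; the appeal to an instance of the negated conclusion happens in the metatheory, so the argument is valid in the theory.

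The main obstacle is the complexity of $\theta_0$. Since $\varphi\in\Sigma^b_{i+1}$, the part $\forall i\in A\;\varphi(x_i,Y_i)$ has a sharply bounded quantifier in front of a $\Sigma^b_{i+1}$ formula, so it is not strict, while our induction is only for $\mathsf{s}\Sigma^b_{i+1}$. We fix this as in the proof of \Cref{theoremllindprovesbb}, by adding the inner witnesses to the outer sequence. Write $\varphi(x,y)=\exists^b v\,\chi(x,y,v)$ with $\chi\in\mathsf{s}\Pi^b_i$ (after prenexing; in the intended case $\varphi\in\Delta^p_{i+1}$, so it is in fact $\mathsf{s}\Pi^b_i$-like). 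We then quantify $\exists Y,V,A$ and merge $\forall i\leq\bd(n)$ into the leading universal block of $\chi$. The resulting formula is $\mathsf{s}\Sigma^b_{i+1}$, and it is $\PV_1$-provably equivalent to $\theta_0$, so the induction step above carries over, with the witness $v$ for $\varphi(x_i,z)$ stored in $V$.

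If the base $\varphi$ has sharply bounded quantifiers inside, we first use \Cref{theoremllindprovesbb} to make it strict within $\mathsf{T}^i_2+\mathsf{LIND}(\mathsf{s}\Sigma^b_{i+1},\bd)$. The remaining ingredients are routine $\PV_1$ sequence manipulations: update, counting $\abs{A}$, and bounding the sizes of $Y$ and $V$ by a term.
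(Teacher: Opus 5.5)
Your argument is essentially the paper's. The paper applies $\mathsf{LMAX}(\mathsf{s}\Sigma^b_{i+1},\bd)$, which is equivalent to $\mathsf{LIND}(\mathsf{s}\Sigma^b_{i+1},\bd)$ by \Cref{prop:lind}, to the formula ``some list of $t$ indices carries witnesses'' and reads off $\Phi_{\bd}$ from maximality. You run $\mathsf{LIND}(\cdot,\bd)$ directly on the same kind of formula under the hypothesis that the conclusion fails, which is the same mechanism stated contrapositively. Your use of a set $A\subseteq[\bd(n)]$ of \emph{distinct} indices is necessary and is an improvement. The paper's $\psi(t)$ does not forbid repeated indices, so $\psi(\bd(n))$ already holds as soon as a single $x_j$ lies in $L$. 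Maximality then tells you nothing about the other indices.

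One step is incorrect, however. In your last paragraph you claim that \Cref{theoremllindprovesbb} lets you make an arbitrary $\Sigma^b_{i+1}$ formula $\varphi$ strict inside $\mathsf{T}^i_2+\mathsf{LIND}(\mathsf{s}\Sigma^b_{i+1},\bd)$. That theorem only yields $\mathsf{BB}(\bd\Sigma^b_{i+1},\bd)$, so via \Cref{prop:bd-exchange} it only moves $\bd$-sharply bounded quantifiers past bounded ones. A subformula of the form $\forall j\leq\abs{t}\,\exists^b v\,\pi$ with $\pi\in\Pi^b_i$ needs $\mathsf{BB}(\Sigma^b_{i+1})$ at full length. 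By \Cref{thm:lindbbsep} (conditionally), the theory does not prove this when $\bd$ is small.

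Your $V$-trick breaks for the same reason. Knowing that $\varphi(x_i,z)$ holds does not give you a single sequence collecting the inner witnesses for all $j\leq\abs{t}$, so the induction step cannot be carried out in the strict formula.

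What your argument actually establishes is the case $\varphi\in\mathsf{s}\Sigma^b_{i+1}$, in particular a $\Pi^b_i$ matrix. The paper's unexplained remark ``we can rewrite it as strict'' covers no more than this. It is also what the separation theorems need, since a $\Sigma^p_{i+1}$ language can be written with such a matrix. You should state the proposition with that restriction rather than claim the general case.
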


\begin{proof}
Instead of using the length induction axiom, we use the equivalent length maximisation axiom on $\Sigma^b_{i+1}$ formulas, $\mathsf{LMAX}(\mathsf{s}\Sigma^b_{i+1},\bd)$, as seen in \Cref{prop:lind}. We define the following $\Sigma^b_{i+1}$ formula with parameters $x_1,\dots,x_{\bd(n)}$ (we can rewrite it as strict):
$$
\psi(t):=\exists y,i \;(\mathsf{Seq}(y) \land \mathsf{Seq}(i) \land \mathsf{Len}(y)=\mathsf{Len}(i)=t\land\forall j\leq t\; \varphi(x_{i_j},y_j)).
$$
This is that there exist a sequence of correct witnesses $y_j$ of length $t$ for the instances indexed by a sequence of indices $i$. The sentence $\psi(0)$ is trivially true, and since $n$ is a length, by $\mathsf{LMAX}(\mathsf{s}\Sigma^b_{i+1},\bd)$, we get:
\begin{equation}\label{eq:maximisation}
   \exists t \leq \bd(n) \;\big(\psi(t)\;\land\; \forall u\leq \bd(n), \; u>t\rightarrow \neg\psi(u)\big)
\end{equation}

We can argue in $\PV_1$, that there is a function $\mathsf{Compose}(y,i,l)$, where $y$ and $i$ are sequences of equal length and the elements of $i$ are between $1$ and the natural number $l$. This function will construct a sequence $Y$ of length $l$, where $Y_{i_j}=y_j$ for all elements of the sequence $i$, and for $j\in [l]\setminus i$, $Y_j=0$. From the first part of \Cref{eq:maximisation}, there exist $y_t$ and $i_t$, such that if $Y=\mathsf{Compose}(y,i,\bd(n))$,
$$
\forall j\in i_t\;\varphi(x_j,Y_j).
$$
However, the second part gives us that if we extend $y_t$ and $i_t$, the sentence will not be true any more. We use the function $\mathsf{Ext}(w,x)$ which extends the sequence $w$ by putting the element $x$ in the end. Then, by $u>t\rightarrow\neg \psi(u)$, we get
$$
\forall j\not\in i_t,\forall z,\; \mathsf{Len}(\mathsf{Ext}(y_t,z))=\mathsf{Len}(\mathsf{Ext}(i_t,j))>t\implies $$
$$\forall j\not\in i_t,\forall z,\; \exists k\in i_t\cup\{j\}\;  \neg \varphi(x_{\mathsf{Ext}(i_t,j)_k},\mathsf{Ext}(y_t,z)_k) \implies $$
$$\forall j\not\in i_t,\forall z,\; \neg\varphi(x_j,z).
$$
Combining the two results about $j\in i_t$ and $j\not \in i_t$, we conclude that 
$$
\forall j,\forall z\leq \bd(n) \; \neg\varphi(x_j,z) \lor \varphi(x_j,Y_j),
$$
which implies $\Phi_\bd$.
\end{proof}

\begin{proposition}\label{prop:psi}
    For $\varphi\in \Sigma^b_{i+1}$  and $\Psi_{p}$ defined as above with $p(n)=\bd(n)$ where $\bd$ is a symbol of the language, $\mathsf{T}^i_2+\mathsf{BB}(\mathsf{s}\Sigma^b_{i+1},\bd)\vdash \Psi_{\bd}$.
\end{proposition}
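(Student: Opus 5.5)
We reason inside $\mathsf{T}^i_2+\mathsf{BB}(\mathsf{s}\Sigma^b_{i+1},\bd)$. Fix $N$, $n=\abs{N}$ and a sequence $X$ of $\bd(n)$ strings $x_1,\dots,x_{\bd(n)}$ of length $n$. The plan is a case split on whether every $x_j$ lies in $L$, followed by one application of bounded replacement to collect the witnesses into a single sequence.

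Suppose first that the left disjunct of $\Psi_\bd$ fails. Then
\[
\forall j\leq \bd(n)\;\exists y\leq s\;\varphi(x_j,y),
\]
where $s$ is the $\PV$-term bounding the witnesses, so every element is in $L$. Since $\varphi\in\Sigma^b_{i+1}$, we write $\varphi=\exists^b v\,\varphi_0$ with $\varphi_0$ of lower complexity. Merging $y$ and $v$ into a pair turns the matrix $\exists y\leq s\,\varphi(x_j,y)$ into a single bounded existential $\exists u\leq s'\,\theta(j,u)$. Here $\theta$ has the free parameter $X$ and is a $\Pi^b_i$ formula, so $\exists u\,\theta$ is $\Sigma^b_{i+1}$.

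The quantifier over $j$ ranges up to $\bd(n)=\bd(\abs{N})$, which is exactly the shape of the hypothesis of $\mathsf{BB}(\cdot,\bd)$ with $x:=N$. The catch is that the axiom is only assumed for strict formulas. So I first replace $\exists u\leq s'\,\theta$ by an $\mathsf{s}\Sigma^b_{i+1}$ formula $\exists u'\,\theta'(j,u')$ that is provably equivalent. To do this I would use the sharply bounded quantifier exchange available in $\mathsf{T}^i_2$ for $\Sigma^b_i$/$\Pi^b_i$ formulas (for $i\geq1$, via $\mathsf{S}^i_2\vdash\mathsf{BB}(\Sigma^b_i)$ and Theorem~\ref{theoremsotprovesbb1}; for $i=0$ the matrix is open in $\PV$ and there is nothing to do). The extra witness components can be absorbed into $u'$. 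Applying $\mathsf{BB}(\exists u'\theta',\bd)$ gives a sequence $w$ with
\[
\forall j\leq\bd(n)\;\theta'(j,w_j).
\]
Projecting each $w_j$ to its first coordinate with a $\PV$ function yields a sequence $Y=(y_1,\dots,y_{\bd(n)})$ with $\varphi(x_j,y_j)$ for all $j$, which is the right disjunct.

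If the left disjunct holds, $\Psi_\bd$ is immediate. Combining the two cases by excluded middle finishes the proof.

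The main obstacle is the strictness bookkeeping. We must make sure the formula to which $\mathsf{BB}$ is applied is genuinely $\mathsf{s}\Sigma^b_{i+1}$, which requires normalizing $\varphi$ inside $\mathsf{T}^i_2$ and not merely assuming it is strict. If that normalization proves awkward, the fallback is to note that the separations only need $\varphi$ of the form $\exists y\,\varphi_\Delta$ with $\varphi_\Delta\in\Delta^p_{i+1}$. In that case we can take $\varphi$ strict from the start, as in the proof of \Cref{thm:st2}, and the argument becomes a direct instance of the axiom.
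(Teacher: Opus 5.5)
Your strategy is the paper's: a single instance of $\mathsf{BB}(\cdot,\bd)$ applied to $\varphi(x_j,y)$ with $j\le\bd(\abs{N})$. Your case split is just the classical reading of that implication as the disjunction $\Psi_\bd$.

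The step that does not hold as stated is the normalization that makes the instance legal. You write $\varphi=\exists^b v\,\varphi_0$ with $\varphi_0\in\Pi^b_i$. A general (non-strict) $\Sigma^b_{i+1}$ formula need not have this shape, because it may carry sharply bounded quantifiers outside its bounded existentials, e.g.\ $\varphi(x,y)=\forall k\le\abs{t}\,\exists v\le s\,\varphi_0(x,y,k,v)$.
\begin{itemize}
\item The exchange you cite in $\mathsf{T}^i_2$ (via $\mathsf{S}^i_2\vdash\mathsf{BB}(\Sigma^b_i)$) only normalizes the $\Pi^b_i$ part one level down.
\item Pulling $\exists v$ in front of $\forall k\le\abs{t}$ is an instance of bounded replacement at level $i+1$ over the full sharply bounded range.
\item The axioms of $\mathsf{T}^i_2+\mathsf{BB}(\mathsf{s}\Sigma^b_{i+1},\bd)$ provide this only for $\bd$-sharply bounded ranges (\Cref{prop:bd-exchange}).
\item By \Cref{thm:bbsep} with $\bd_1$ the identity and, say, $\bd=\log$, this theory does not prove $\mathsf{BB}(\mathsf{s}\Sigma^b_{i+1})$ under the paper's hypothesis.
\end{itemize}
So ``normalize $\varphi$ to strict form'' cannot be carried out with the tools you invoke for arbitrary $\varphi\in\Sigma^b_{i+1}$, and nothing else in your argument replaces it.

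What your argument does prove is the case where the bounded existential block of $\varphi$ is outermost, up to $\bd$-sharply bounded quantifiers:
\begin{itemize}
\item $\varphi\in\Pi^b_i$: normalize to strict $\Pi^b_i\subseteq\mathsf{s}\Sigma^b_{i+1}$ inside $\mathsf{T}^i_2$, or inside $\PV_1$ when $i=0$.
\item $\varphi\in\bd\Sigma^b_{i+1}$: use \Cref{prop:bd-exchange}.
\end{itemize}
That is your fallback, and it is all the separations need, since the matrix of a $\Sigma^p_{i+1}$ language can always be chosen $\Pi^b_i$. The paper's one-line proof relies on the same tacit restriction: it applies the strict scheme to $\psi(X,i,y):=\varphi(x_i,y)$ after only observing that $\psi\in\Sigma^b_{i+1}$. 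You should make the fallback your main argument, or state the proposition for such $\varphi$, rather than claim the general normalization.
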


\begin{proof}
    This is an immediate application of the bounded replacement axiom. For $X=(x_1,\dots,x_{\bd(n)})$, we define $\psi(X,i,y):=\varphi(x_i,y)$, where $\psi$ is also $\Sigma^b_{i+1}$. Then by the axiom on $\psi$, we get:
    $$
    \mathsf{BB}(\psi,\bd) = \forall n\in \mathsf{Log}\; \forall X\; (\forall i\leq \bd(n)\;\exists y\;\psi(X,i,y))\rightarrow \exists Y\;\forall j\leq \bd(n) \psi(X,j,Y_j),
    $$
    but this is equivalent with $\Psi_{\bd}$.
\end{proof}

We will now go to the proofs \Cref{thm:bbsep,thm:lindsep,thm:lindbbsep}. Assume we have two increasing functions $\bd_1(x), \bd_2(x) \leq x$, such that for any $k\in\N$ and for large enough $x$, $\bd_1(x)> \bd_2(x)^k$. The input size of $\Phi_{\bd_1}$ and $\Psi_{\bd_1}$ is $n\cdot \bd_1(n)$. But we have
\begin{equation}\label{eq:sizeinequality}
\bd_2(n\cdot\bd_1(n))^k<\bd_1(n\cdot\bd_1(n))^{1/k_0}\leq \bd_1(n^2)^{1/k_0} < \bd_1(n)
\end{equation}
The first inequality is from the assumptions, the second from the fact that $\bd_1$ is increasing and sublinear. For the third inequality, we claim that there is some $k_0$ such that it holds, hence we get the desired result. It is easy to check the validity of the inequality $\bd_1(n^2)< \bd_1(n)^{k_0}$ for large enough $n$, if $\bd_1$ is one of the usual sublinear functions, e.g. $n^{1-\varepsilon},\,\log n,$ etc. For a full proof, see \Cref{sec:growth}.

After we have fixed the above, we can continue to the three proofs.

\begin{proof} [Proof of \Cref{thm:bbsep}] Assume that $\mathsf{T}^i_2+\mathsf{BB}(\mathsf{s}\Sigma^b_{i+1},\bd_1)\subseteq \mathsf{T}^i_2+\mathsf{BB}(\mathsf{s}\Sigma^b_{i+1},\bd_2)$. From \Cref{prop:psi}, we know that $\mathsf{T}^i_2+\mathsf{BB}(\mathsf{s}\Sigma^b_{i+1},\bd_1)\vdash \Psi_{\bd_1}$, which implies that $\mathsf{T}^i_2+\mathsf{BB}(\mathsf{s}\Sigma^b_{i+1},\bd_2)\vdash \Psi_{\bd_1}$, as well. This means by the witnessing theorem (\Cref{thm:bbwitness}) that $\Psi_{\bd_1}\in \mathcal{ST}^{\Sigma^b_i}[O(1), \poly (\bd_2(m))]$, where $m=n\cdot \bd_1(n)$ is the input size. However, from (\ref{eq:sizeinequality}), we have that $\poly(\bd_2(m))< \bd_1(n)$, and, from \Cref{thm:st2}, $\Psi_{\bd_1}\not\in \mathcal{ST}^{\Sigma^b_i}[r(m)+1,q(m)]$ if $r(m)q(m)<\bd_1(n)$, which shows that $\Psi_{\bd_1}\not\in \mathcal{ST}^{\Sigma^b_i}[O(1),\poly(\bd_2(m))]$, thus we get a contradiction.
\end{proof}

\begin{proof}[Proof of \Cref{thm:lindsep}] We know from \Cref{prop:phi} that $\mathsf{T}^i_2+\mathsf{LIND}(\mathsf{s}\Sigma^b_{i+1},\bd_1)\vdash \Phi_{\bd_1}$. On the other hand, by the proof of \Cref{thm:st1}, if $r(m)<\bd_1(n)$ for $m=n\cdot\bd_1(n)$, then $\Phi_{\bd_1}\not\in \mathcal{ST}^{\Sigma^b_i}[r(m),\poly(m)]$, which gives us from (\ref{eq:sizeinequality}) that $\Phi_{\bd_1}\not\in \mathcal{ST}^{\Sigma^b_i}[\poly(\bd_2(m)),1]$. This comes in contradiction with $\mathsf{T}^i_2+\mathsf{LIND}(\mathsf{s}\Sigma^b_{i+1}, \bd_2)\vdash \Phi_{\bd_1}$ by the witnessing theorem (\Cref{thm:lindwitness}).
\end{proof}

\begin{proof}[Proof of \Cref{thm:bblindsep}] From \Cref{prop:phi}, we have that $\mathsf{T}^i_2+\mathsf{LIND}(\mathsf{s}\Sigma^b_{i+1},\bd)\vdash \Phi_{\bd}$, while from \Cref{thm:st1}, $\Phi_{\bd}\not\in \mathcal{ST}^{\Sigma^b_i}[O(1),\poly(m)]$. Thus, if we assume that $\mathsf{T}^i_2+\mathsf{BB}(\mathsf{s}\Sigma^b_{i+1},\poly)\vdash \Phi_{\bd}$, we get a contradiction by \Cref{thm:bbwitness}.
\end{proof}

\begin{proof}[Proof of \Cref{thm:lindbbsep}] From \Cref{prop:psi}, we have that  $\mathsf{T}^i_2+\mathsf{BB}(\mathsf{s}\Sigma^b_{i+1},\bd_1)\vdash \Psi_{\bd_1}$, while from \Cref{thm:st2} and from (\ref{eq:sizeinequality}), $\Psi_{\bd_1}\not\in \mathcal{ST}^{\Sigma^b_i}[\poly(\bd_2(m)),1]$. Thus, if we assume that $\mathsf{T}^i_2+\mathsf{LIND}(\mathsf{s}\Sigma^b_{i+1},\bd_2)\vdash \Psi_{\bd_1}$, we get a contradiction by \Cref{thm:lindwitness}.
\end{proof}

%% file: unprovability.tex
\section{Unprovability Results}\label{sec:unprovability}

\subsection{Unprovability of circuit upper bounds}

In this section we will extend the result of Krajíček and Oliveira~\cite{KO17} to the theory $\PV_1+\mathsf{BB}(\Sigma^b_1)$, we will proceed as they do, but with some tweaks to adapt the proof for the stronger theory.\footnote{Let us note, that a more direct proof of this unprovability was developed in the follow-up work of Carmosino, Kabanets, Kolokolova and Oliveira~\cite{CKKO}. The authors were able to adapt the proof from~\cite{KO17}, which involves a greater degree of manipulation of the theories in question. Considering the more direct proof in the context of our witnessing theorems could be of further interest.}

\begin{definition}
Let $c,k\geq 1$ and let $p$ be a unary $\PV$-symbol. We say a $\PV$-symbol $f$ (possibly with some fixed parameters) describes a uniform sequence of $\abs{p(1^{(n)})}$-size families of circuits of size $c\cdot n^k$ if on the input $1^{(n)}$, it produces a sequence of $5$-tuples 
\[(i,1^{(n)},u,v,w),\]
where $i$ is the index of the circuit bounded by $\abs{p(1^{(n)})}$, $u,v$ are the indices of the nodes all of which have length polynomial in $n$, $w$ is the description for the gate type of $v$ (or the description that it is an input node). We assume that the output node is determined by a special tuple. The number of the $5$-tuples which start with any given $i$ is bounded by $c\cdot n^k$. And for each $i\leq \abs{p(1^{(n)})}$ we will use $f_i(1^{(n)})$ to denote the sequence of all tuples starting with $i$.
\end{definition}

The following is straightforward, so we omit the proof.

\begin{lemma}\label{lemmasuccrep}
    Let $c,k\geq 1$, let $p$ be a unary $\PV$-symbol and let $f$ be a $\PV$-symbol that describes a uniform sequence of $\abs{p(1^{(n)})}$-size families of circuits of size $c\cdot n^k$. Then there is a $\PV$-symbol $\tilde f$ that decides the language $L_{succ}'$ defined below.

    Assume that $L_{dc}'$ is the direct connectivity language of the sequence of $\abs{p(1^{(n)})}$-sized families of circuits described by $f$, that is the language of all the $5$-tuples in the description of the family for any input size. We let $L_{succ}'$ be the language that for each $(i,1^{(n)},u,v,w)\in L_{dc}'$ contains the $6$-tuple
    \[(i,Bin(n)01^{(n^{1/(3k)})},u,v,w,1^{(t)}),\]
    where $Bin(n)$ is the dyadic numeral for $n$ which is at most constantly longer than $\abs{n}$ and $t$ is chosen to pad the length of the $6$-tuple to length exactly $\ceil{n^{1/(2k)}}$.
\end{lemma}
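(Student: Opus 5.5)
The plan is to build $\tilde f$ explicitly as a polynomial-time procedure; any such procedure is a $\PV$-symbol, since $\PV$ has symbols for all $\FP$ functions. On input $z$, the procedure first tries to parse $z$ as a $6$-tuple $(i,\beta,u,v,w,1^{(t)})$ under the fixed pairing and tupling functions of $\PV$. It rejects if the parse fails.

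The second step is to parse $\beta$. The procedure reads off the dyadic numeral $Bin(n)$, then the separator $0$, then a block of ones. It checks that the block has length exactly $\lceil n^{1/(3k)}\rceil$; this is computable in time polynomial in $\abs{z}$ because the block is written in unary and $n$ is given in binary. It then checks that the whole tuple has length exactly $\lceil n^{1/(2k)}\rceil$, which also determines whether $t$ is the right amount of padding. The padding-length computation is itself polynomial in $\abs{z}$.

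The third step is the only real membership test. The procedure must decide whether $(i,1^{(n)},u,v,w)\in L'_{dc}$, i.e. whether this tuple appears in $f(1^{(n)})$. It should not compute $1^{(n)}$ or run $f$, since $n$ may be exponential in $\abs{z}$. Instead, following the standard succinctness trick, I would rely on the fact that the padded length forces $\abs{z}\approx n^{1/(2k)}$. So $z$ is a succinct description, and the intended use of $\tilde f$ only needs the relation to be decidable in time polynomial in $n$, which is polynomial in $\abs{z}^{2k}$. Concretely, $\tilde f$ runs $f(1^{(n)})$, which takes time $\mathrm{poly}(n)=\mathrm{poly}(\abs{z}^{2k})$, and scans the output for the $5$-tuple.

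The main obstacle is making this last step genuinely polynomial in $\abs{z}$. That is exactly why the padding pins $\abs{z}$ to $\lceil n^{1/(2k)}\rceil$: it guarantees $n\le \mathrm{poly}(\abs{z})$. Once this length check has passed, all computations, including the unary block of length $n^{1/(3k)}$ and the whole run of $f$, fit in time $\mathrm{poly}(\abs{z})$. Any $z$ that fails the check is rejected immediately, so $\tilde f$ is total and polynomial-time.
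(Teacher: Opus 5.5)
Your proposal is correct and is exactly the routine argument the paper omits as ``straightforward''. You parse $z$ and verify the format and the padded length $\abs{z}=\lceil n^{1/(2k)}\rceil$, which forces $n\le \abs{z}^{2k}$. Only then do you compute $f(1^{(n)})$ and scan it for $(i,1^{(n)},u,v,w)$, all in time polynomial in $\abs{z}$, so Cobham's characterisation yields a $\PV$-symbol.

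The one blemish is your remark that one ``should not'' run $f$ on $1^{(n)}$, followed by doing exactly that. The correct statement, which your last paragraph gives, is that the length check must come first. That ordering is also what keeps $\tilde f$ polynomially bounded on malformed inputs.
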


\begin{definition}\label{definitionup}
    Let $c,k\geq 1$ and let $f$ be a unary $\PV$-symbol. We define the sentence $\UP_{k,c}(f)$ as
    \[(\forall 1^{(n)})(\exists C_n,\text{ a circuit of size at most $c\cdot n^k$})(\forall x,\abs{x}=n)(f(x)=1\leftrightarrow \CircuitVal(C_n,x)=1),\]
    where $\CircuitVal(C,x)$ is a $\PV$-symbol which evaluates the circuit encoded by $C$ on the input $x$.
\end{definition}

The following lemma was already proved by Krajíček and Oliveira in the strength which is sufficient for our purpose, so we omit its proof.

\begin{lemma}[{{\cite[Lemma~3.1]{KO17}}}]\label{lemmatimehierarchy}
    For every $d\geq 1$ there is a $\PV$-symbol $g_{d+1}$ deciding a language in $\DTIME[n^{d+1}]$, such that for every $\PV$-symbol $h$ deciding a language in $\DTIME[n^d]$ with advice of length $n^{2/3}$ there is $c_h\geq 1$, such that 
    \[\PV_1\vdash (\forall 1^{(n)},n\geq c_h)(\forall a,\abs{a}=n^{2/3})(\exists x,\abs{x}=n)(h(x,a)\neq g_{d+1}(x)).\]
\end{lemma}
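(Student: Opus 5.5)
The plan is the standard diagonalisation against advice-bounded time, formalised in $\PV_1$ as in~\cite{KO17}. Fix $d\geq 1$ and let $U(x,\langle h\rangle,a)$ denote the clocked universal machine that runs the machine with code $\langle h\rangle$ (of length about $\log\abs{x}$) on $(x,a)$ for $\abs{x}^{d}\log\abs{x}$ steps. Define $g_{d+1}$ by diagonalising over lexicographically enumerated pairs. Split the $n$-bit inputs into $2^{n^{2/3}+\log n}$ disjoint blocks, all of length $n$, say using a prefix of $x$ of length $n^{2/3}+\log n$ as the block index. On block $(e,a)$, $g_{d+1}$ is defined so that it disagrees with $U(\cdot,e,a)$ on at least the distinguished input of that block, for instance the input $x_{e,a}$ obtained by padding $(e,a)$ with zeros. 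Then $g_{d+1}(x)=1-U(x,e,a)$ when $x=x_{e,a}$, and $0$ otherwise. This runs in time $O(n^{d}\log n)\subseteq \DTIME[n^{d+1}]$ and is given by an explicit $\PV$-symbol.

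Next, for a given $\PV$-symbol $h$ deciding a language in $\DTIME[n^d]$ with advice of length $n^{2/3}$, I would pick its code $e_h$ and a constant $c_h$. The constant is chosen so that for $n\geq c_h$ the clock suffices for $h$'s time bound and $\abs{e_h}\leq \log n$. The witness is the explicit term $x:=x_{e_h,a}$, which has length exactly $n$. The sentence is then universal after substituting this witness, namely $h(x_{e_h,a},a)\neq g_{d+1}(x_{e_h,a})$. It therefore suffices to prove in $\PV_1$ the correctness identity
\[
U(x,e_h,a)=h(x,a)\quad\text{for }\abs{x}=n\geq c_h .
\]
Combining this with the definition of $g_{d+1}$ yields the inequality by open reasoning.

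The main obstacle is proving the simulation identity in $\PV_1$. It requires showing that the universal machine, run for the stated number of steps on the code of a machine computing $h$, returns the value of $h$. This is a statement about the faithfulness of Cobham-style definitions versus machine simulation. I would handle it by induction on the construction of the $\PV$-symbol $h$, with the time bound tracked through limited recursion on notation. Open induction, in its PIND form available in $\PV_1$, would be applied on the number of simulated steps, using the fact that the configuration after $t$ steps is a $\PV$-definable function. Alternatively, one can choose the enumeration to range directly over $\PV$-symbols with explicit polynomial clocks, so that the identity becomes a defining axiom up to a routine verification. All remaining steps are bookkeeping about lengths and padding, so they are provable in $\PV_1$.
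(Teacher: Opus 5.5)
\paragraph{Verdict.} There is a genuine gap, at the simulation identity you yourself flagged, and it cannot be closed at the generality you state. The rest of your argument is fine: the diagonal construction of $g_{d+1}$, the running-time estimate, and the reduction of the displayed sentence to the open identity $U(x,e_h,a)=h(x,a)$ via the explicit witness $x_{e_h,a}$. (The paper imports this lemma from KO17 without proof, so your argument has to stand on its own.)

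\paragraph{Why the identity fails for a general $h$.} Suppose $h$ is an arbitrary $\PV$-symbol whose language merely lies in $\DTIME[n^d]$ with advice. Then $e_h$ can only be the code of some other machine that agrees with $h$ on $\mathbb{N}$. That agreement is a true universal sentence, and $\PV_1$ need not prove it.

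Your route by induction on the construction of $h$ does give a $\PV_1$-provable simulation. But it simulates the machine read off from the Cobham definition of $h$, whose running time is the polynomial dictated by that definition. Nothing forces that polynomial below $\abs{x}^d\log\abs{x}$. Once the clock of $U$ truncates the run, the identity is simply false.

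\paragraph{Counterexample to the semantic reading.} Let $\forall u\,\theta(u)$, with $\theta$ open, be a true sentence that $\PV_1$ does not prove; for example $\mathrm{Con}(\PV_1)$ or a G\"odel sentence. Define $h(x,a)=g_{d+1}(x)$ if some initial segment $u$ of $a$ satisfies $\lnot\theta(u)$, and $h(x,a)=0$ otherwise. In $\mathbb{N}$, $h$ decides the empty language, so it meets the hypothesis. Suppose the displayed sentence were $\PV_1$-provable for this $h$ with some $c_h$. Take any $u$ and pad it with zeros to length $n^{2/3}$, where $n=\max(c_h,\abs{u}^2)$. The sentence then forces $\theta(u)$, so $\PV_1\vdash\forall u\,\theta(u)$, a contradiction.

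\paragraph{How to repair it.} Read the hypothesis intensionally. Take $h$ to be given by a fixed machine $M_h$ with advice and an explicit clock, so that $h(x,a)$ is the output of $M_h$ on $(x,a)$ after $c\abs{x}^d$ steps. It also suffices that $\PV_1$ proves $h$ equal to such a clocked machine. This is how the paper uses the lemma: there $h$ is the explicit algorithm that reconstructs the circuit from the advice $D_m$ and evaluates it.

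Under that reading, let $e_h$ be the code of the clocked machine itself, which halts within $c'\abs{x}^d$ steps. The identity then follows in $\PV_1$ by induction on an open formula over the number $t\le c'\abs{x}^d$ of simulated steps. The invariant is that the configuration of $U$ after simulating $t$ steps encodes the configuration of the clocked machine after $t$ steps; both are $\PV$-definable iterations of step functions. For $n\ge c_h$ the clock $\abs{x}^d\log\abs{x}$ exceeds $c'\abs{x}^d$.

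Your alternative of enumerating $\PV$-symbols with explicit clocks is essentially this fix. It should be stated as the meaning of the hypothesis on $h$, not offered as an optional technique. With that change, the rest of your proof goes through.
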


\begin{lemma}
    Let $c,k\geq 1$, let $p$ be a unary $\PV$-symbol and let $f$ be a $\PV$-symbol that describes a uniform sequence of $\abs{p(1^{(n)})}$-size families of circuits of size $c\cdot n^k$ and let $\tilde f$ be the $\PV$-symbol from the statement of Lemma~\ref{lemmasuccrep}. Assume that \[\PV_1+\mathsf{BB}(\Sigma^b_1)\vdash \UP_{k,c}(\tilde f)\] and let $g$ denote $g_{3k+3}$ from Lemma~\ref{lemmatimehierarchy}. Then there exists $c_f$ such that 
    \[\PV_1+\mathsf{BB}(\Sigma^b_1)\vdash (\forall 1^{(n)},n\geq c_f)(\forall i\leq \abs{p(1^{(n)})})(\exists x,\abs{x}=n)(g(x)\neq \CircuitVal(f_i(1^{(n)}),x)).\]
\end{lemma}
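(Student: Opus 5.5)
This follows the Krajíček--Oliveira argument: each circuit $f_i(1^{(n)})$ has a succinct description via $\tilde f$, and the circuit for $\tilde f$ promised by $\UP_{k,c}(\tilde f)$ serves as advice of length about $n^{2/3}$. With that advice a $\DTIME[n^{3k+2}]$ machine $h$ can evaluate $f_i(1^{(n)})$ on any $x$. Lemma~\ref{lemmatimehierarchy} then says $h$ disagrees with $g=g_{3k+3}$ somewhere. The whole argument can be carried out inside $\PV_1+\mathsf{BB}(\Sigma^b_1)$, so no witnessing is needed at this stage.

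\textbf{Step 1: the machine $h$.} Define a $\PV$-symbol $h(x,a)$ with $\abs{x}=n$. It reads $a$ as a pair consisting of an index $i\leq\abs{p(1^{(n)})}$ (only $O(\log n)$ bits) and a circuit $D$ for $\tilde f$ at input length $m=\ceil{n^{1/(2k)}}$.
\begin{itemize}
\item $D$ has size $c\,m^k\approx c\,n^{1/2}$, so its code has length $O(n^{1/2}\log n)\leq n^{2/3}$ for large $n$; pad $a$ to length exactly $n^{2/3}$.
\item $h$ reconstructs the direct-connectivity description of the circuit $f_i(1^{(n)})$. It queries $D$ on every candidate $6$-tuple $(i,Bin(n)01^{(n^{1/(3k)})},u,v,w,1^{(t)})$, over all node indices $u,v$ among the first $c\,n^k$ and all gate types $w$.
\item It then evaluates the reconstructed circuit on $x$ and outputs the result.
\end{itemize}
The cost is $O(n^{2k})$ queries, each costing $O(n^{1/2})$, plus the evaluation, all within $n^{3k+2}$. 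The constants are chosen so that this matches the index $3k+3$ of $g$.

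\textbf{Step 2: correctness of $h$ inside the theory.} Work in $\PV_1+\mathsf{BB}(\Sigma^b_1)$ and fix $1^{(n)}$ large. Apply $\UP_{k,c}(\tilde f)$ at length $m$ to obtain a circuit $D$ that agrees with $\tilde f$ on all strings of length $m$. By Lemma~\ref{lemmasuccrep} and the definition of $L'_{succ}$, every query that $h$ makes has length exactly $m$. Hence $\PV_1$ proves that on advice $a=(i,D)$ the reconstructed description equals $f_i(1^{(n)})$, and therefore
\[(\forall x,\abs{x}=n)\,h(x,(i,D))=\CircuitVal(f_i(1^{(n)}),x).\]
This is a universal statement about the fixed object $D$, proved by open induction on the enumeration loop of $h$.

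\textbf{Step 3: conclude.} Lemma~\ref{lemmatimehierarchy} gives $c_h$ such that $\PV_1$ proves: for every advice string $a$ of length $n^{2/3}$ there is $x$ of length $n$ with $h(x,a)\neq g(x)$. Instantiate $a=(i,D)$ and combine with Step 2 to get $g(x)\neq\CircuitVal(f_i(1^{(n)}),x)$. Taking $c_f\geq c_h$ large enough for the length bounds in Step 1, this holds for every $i\leq\abs{p(1^{(n)})}$, which is the claimed statement.

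\textbf{Main obstacle.} The delicate point is parameter bookkeeping inside the theory:
\begin{itemize}
\item the padding length $t$ must be definable from $n$ alone and keep every query at exactly length $m$;
\item the advice length must be exactly $n^{2/3}$;
\item the theory must prove that the running time stays within $\DTIME[n^{3k+2}]$.
\end{itemize}
The existential over $D$ is harmless: it appears only as a hypothesis used to instantiate a universal consequence of $\PV_1$, so $\mathsf{BB}$ plays no role beyond carrying the assumption $\UP_{k,c}(\tilde f)$.
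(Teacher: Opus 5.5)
Your proposal is correct and is essentially the paper's argument: the same machine $h$ takes the small circuit $D_m$ for $\tilde f$ (from $\UP_{k,c}(\tilde f)$ at length $m=\ceil{n^{1/(2k)}}$) as advice, queries it on all candidate $6$-tuples, rebuilds $f_i(1^{(n)})$ and evaluates it, and Lemma~\ref{lemmatimehierarchy} then forces a disagreement with $g$. The only difference is presentational: you derive the statement directly inside $\PV_1+\mathsf{BB}(\Sigma^b_1)$ and put $i$ explicitly into the advice, whereas the paper argues by contradiction in a model where the statement fails.
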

\begin{proof}
    Assume that in a model of $\PV_1+\mathsf{BB}(\Sigma^b_1)$ the provability fails, that is 
    \[(\forall c_f)(\exists 1^{(n)},n\geq c_f)(\exists i\leq \abs{p(1^{(n)})})(\forall x,\abs{x}=n)(g(x) =\CircuitVal(f_i(1^{(n)}),x))\tag{*}\]
    we will argue that in such a model there is a polynomial-time algorithm $h$ which will certify that $g$ is infinitely often in $\DTIME[n^{3k+2}]/n^{2/3}$.

    Let $c_f$ be arbitrary, and let $i$ and $n$ be such that (*) is valid. By our assumption $\UP_{k,c}(\tilde f)$, we know that there are circuits recognizing the language of the $6$-tuples
    \[(i,Bin(n)01,u,v,w,1^{(t)}),\]
    decided by circuits $D_m$ on the input size $m=\ceil{n^{1/(2k)}}$. On the input $x$ of length $n$ the algorithm $h$ tries all possible $3$-tuples $(u,v,w)$ and for each of them runs the $D_m$ on the corresponding $6$-tuple where $i$ is taken from~$(*)$, this is at most $O(n^{2k+1})$ possibilities and it takes time $O(n^{2k+2})$ to run $D_m$ on all of them. Thus by $\UP_{k,c}(\tilde f)$, the algorithm $h$ obtains a description of a circuit of size at most $c\cdot n^k$ and uses it to compute a value for $g(x)$ in time $O(n^{2k})$. Thus, $g$ can be computed by $h\in \DTIME[n^{2k+2}]/n^{2/3}$ where the advice contains the description of $D_m$. This is in contradiction with Lemma~\ref{lemmatimehierarchy}.
\end{proof}

\begin{lemma}\label{lemmarefuter}
    Let $f,\tilde f, g$ and $p$ be as in the previous lemma. Then there exists $c_f$ and a $\PV$-symbol $e$ such that 
    \[\PV_1+\mathsf{BB}(\Sigma^b_1)\vdash (\forall 1^{(n)},n\geq c_f)(\forall i\leq \abs{p(1^{(n)})})(g(e(i,1^{(n)}))\neq \CircuitVal(f_i(1^{(n)}),e(i,1^{(n)}))).\]
\end{lemma}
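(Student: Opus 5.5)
The plan is to apply the witnessing machinery of \Cref{sec:witnessing} to the $\forall\exists\forall$-sentence given by the previous lemma, and then to use the special shape of that sentence to flatten the resulting Student-Teacher protocol into a single $\PV$-symbol. The previous lemma gives
\[\PV_1+\mathsf{BB}(\Sigma^b_1)\vdash (\forall 1^{(n)},n\geq c_f)(\forall i\leq \abs{p(1^{(n)})})(\exists x,\abs{x}=n)(\forall z)\big(g(x)\neq \CircuitVal(f_i(1^{(n)}),x)\big).\]
Here the universal quantifier is vacuous and the matrix is open in the language of $\PV$. So the sentence is a $\forall\exists$-sentence with open matrix, and a fortiori a $\forall\exists\forall^b$-sentence with a dummy bounded $z$.

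First I apply \Cref{theoremwitnessing} with $S=\PV_1+\mathsf{BB}(\Sigma^b_1)$. For the universal theory $T$ I take $\PV_1(\parallel,\alpha,\bd)$ with $\bd$ the identity. Its $\forall\exists$-conservative extension $T'=T+\mathsf{BB}(\Sigma^b_1(\alpha),\bd)$ contains $S$, and conservativity is exactly \Cref{lemmapvbbconservativity}. By \Cref{lemmapvifthenelse} the terms of $T$ are closed under definition by cases. This yields an $L_w$-term $t(i,1^{(n)})$ with
\[T\vdash (n\geq c_f\wedge i\leq\abs{p(1^{(n)})})\to \big(\abs{t}=n \wedge g(t)\neq\CircuitVal(f_i(1^{(n)}),t)\big).\]

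The key step is removing $\alpha$. The statement places no axioms on $\alpha$ beyond the length bound, so I interpret $\alpha$ as the constant zero function. The defining axioms of $\Read_c$ then interpret $\Read_c$ by a $\PV$-symbol: the one outputting the all-zero sequence of length $\Len(l)$ when the guard holds, and $0$ otherwise. Every model of $\PV_1$ therefore expands to a model of $T$ under this interpretation. Substituting these $\PV$-symbols into $t$, by induction on the structure of $t$, gives a $\PV$-term; I take $e$ to be the corresponding $\PV$-symbol. The implication above then holds in every model of $\PV_1$. Hence it is provable in $\PV_1$, and so in $\PV_1+\mathsf{BB}(\Sigma^b_1)$, which is the claim.

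The main obstacle I expect is bookkeeping in \Cref{theoremwitnessing}. That theorem is stated for a single variable $x$ and a genuine $\forall z$, so I would pair $(i,1^{(n)})$ into one variable and fold the hypotheses $n\geq c_f$, $i\leq\abs{p}$ into the open matrix as a disjunction. I must also check that \Cref{lemmapvbbconservativity} covers $S$ even though $S$ is in the language without $\alpha$. That is fine, since $\mathsf{BB}(\Sigma^b_1)\subseteq\mathsf{BB}(\Sigma^b_1(\alpha))$. If the interpretation-of-$\alpha$ step looked delicate, my alternative would be to note directly that $\alpha$ is never needed for a $\forall\exists$-sentence with open matrix: Herbrand's theorem in the $\exists$-elementary extension of \Cref{theoremherbrandsaturated}, applied to a model of $\PV_1$, already yields $\PV$-terms.
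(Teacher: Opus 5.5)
Your proposal is correct and is essentially the paper's argument. The paper's proof is a one-line appeal to Herbrand's theorem, i.e.\ $\forall\exists$-conservativity of $\PV_1+\mathsf{BB}(\Sigma^b_1)$ over a universal base theory, then Herbrand's theorem there, then definition by cases (Lemma~\ref{lemmapvifthenelse}) to merge the resulting finitely many terms into one $\PV$-symbol $e$; your route through $\PV_1(\parallel,\alpha,\bd)$ followed by interpreting $\alpha$ as the zero function (so each $\Read_c$ becomes a $\PV$-symbol) is a valid but unnecessary detour since the matrix has no genuine universally quantified $z$, and your fallback of applying the Herbrand-saturated extension directly to models of $\PV_1$ is exactly that direct route.
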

\begin{proof}
    Follows straightforwardly from Herbrand's theorem.
\end{proof}

\begin{theorem}\label{theoremkobb}
    For every $k\geq 1$ there is a unary $\PV$-symbol $h$ such that for every constant $c\geq 1$
    \[\PV_1+\mathsf{BB}(\Sigma^b_1)\nvdash \UP_{k,c}(h).\]
\end{theorem}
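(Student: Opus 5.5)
We argue by contradiction, following the Krajíček–Oliveira strategy but replacing the KPT witnessing for $\PV_1$ with the witnessing for $\PV_1+\mathsf{BB}(\Sigma^b_1)$ from Theorem~\ref{thm:bbwitness}. Fix $k$, and choose $h:=\tilde f$ from Lemma~\ref{lemmasuccrep}. Here $f$ will be a $\PV$-symbol describing a uniform sequence of families of circuits, constructed from the hypothetical proof itself. So suppose that for some $c$ and every such $h$ of this form we had $\PV_1+\mathsf{BB}(\Sigma^b_1)\vdash \UP_{k,c}(h)$. The real target is $g=g_{3k+3}$, and the aim is to contradict the conclusion of Lemma~\ref{lemmarefuter}.

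\textbf{Step 1.} Apply the witnessing theorem to $\UP_{k,c}(g)$, viewed as a $\forall\exists\forall$ sentence: $\forall 1^{(n)}\,\exists C_n\,\forall x\,(g(x)=\CircuitVal(C_n,x))$. By Theorem~\ref{thm:bbwitness}, a provable $\UP_{k,c}(g)$ yields a $\PV_1(\parallel,\alpha,\mathrm{id})$-term $s$ computing a candidate circuit. Here $\alpha$ is the teacher returning a counterexample $x$ with $g(x)\neq\CircuitVal(C,x)$. The term accesses $\alpha$ a constant number $r$ of times, each time with $\poly(n)$ parallel candidates. Unwinding the constantly many rounds, the candidates asked in round $j$ are polynomial-time computable from $1^{(n)}$ and the previous counterexamples.

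\textbf{Step 2.} Build the uniform family $f$ by induction on rounds. In round $1$, the candidates $C^1_1,\dots,C^1_{q}$ are a $\PV$-computable sequence of circuits of size $c n^k$, so they form a uniform $\abs{p(1^{(n)})}$-size family $f^{(1)}$ with $q\le\abs{p(1^{(n)})}$ for a suitable $p$. Lemma~\ref{lemmarefuter}, applied to $f^{(1)}$, then gives a \emph{$\PV$-symbol} $e^{(1)}$ that produces a counterexample to every round-$1$ candidate simultaneously. This parallel refutation is exactly what the plain Krajíček–Oliveira argument handles one query at a time. Substituting $e^{(1)}$ for $\alpha$ makes the round-$2$ candidates $\PV$-computable, and they again form a uniform family $f^{(2)}$. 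We repeat this for $r$ rounds. Since $r$ is constant, each $f^{(j)}$ and $e^{(j)}$ is a genuine $\PV$-symbol; the composition is only constant depth, which is why constant rounds are essential and polynomially many parallel queries are harmless. The hypothesis $\UP_{k,c}(\tilde f^{(j)})$ needed in the previous lemma is our contradiction assumption applied to $h=\tilde f^{(j)}$.

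\textbf{Step 3.} After the last round, the witnessing term's output (the final candidate $s$) is correct for every teacher. But the $\PV$-computable refuter $e^{(r)}$ also refutes it, which is provably a contradiction in $\PV_1+\mathsf{BB}(\Sigma^b_1)$, hence true in the standard model. This shows that some $\tilde f^{(j)}$ is not provably in $\mathsf{SIZE}[c n^k]$. Because $j$ and the family depend on $c$, we make $h$ independent of $c$ with a standard trick: take $h$ to be a single $\PV$-symbol encoding $g$ together with all successor languages $\tilde f^{(j)}$ on disjoint padded input slices. A size bound for $h$ then transfers, with polynomial loss absorbed by the $n^{1/(2k)}$ padding, to each component.

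\textbf{Main obstacle.} The delicate step is Step 2: checking that the round-$j$ candidates really form a uniform family describable by a $\PV$-symbol as in the definition, with the index bounded by $\abs{p(1^{(n)})}$. Polynomially many parallel queries suggest taking $p$ with $\abs{p}$ polynomial. This also forces checking that the padding in Lemma~\ref{lemmasuccrep} still lets $h$ in the preceding lemma run in $\DTIME[n^{3k+2}]$ with $n^{2/3}$ advice when the index $i$ is polynomially long. If needed, I would instead let $i$ be supplied as part of the advice, since $\abs{i}=O(\log n)$.
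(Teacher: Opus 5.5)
Your Steps 1 and 2 are the paper's argument. You witness $\UP_{k,c}(g)$ for $g=g_{3k+3}$ by Theorem~\ref{thm:bbwitness}, and treat the polynomially many round-$j$ candidates as a uniform family. You then apply the contradiction hypothesis to its succinct version to get a $\PV$-refuter from Lemma~\ref{lemmarefuter}, substitute it for the teacher, and after constantly many rounds falsify every disjunct in $\mathbb{N}$.

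The flaw is in how you set up the contradiction and in Step 3. You assume one $c$ that works for $g$ and for all successor languages at once ($\exists c\,\forall h$). That is stronger than the negation of the theorem, so refuting it only yields ``for every $c$ some $h$ is not provably in $\mathsf{SIZE}[cn^k]$'', not the required $\exists h\,\forall c$.

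The bundling trick you propose to swap the quantifiers cannot be carried out. The family $f^{(j+1)}$ is only defined once $e^{(j)}$ exists, and $e^{(j)}$ exists only if $\UP(\tilde f^{(j)})$ is already provable. So $g$ and all the $\tilde f^{(j)}$ cannot be packed into one symbol beforehand. Moreover, the families depend on the chosen proof of $\UP_{k,c}(g)$, so ranging over all $c$ would need infinitely many symbols. Your claim that a polynomial loss is absorbed by the padding is also unfounded: the lemma preceding Lemma~\ref{lemmarefuter} needs circuits of size $O(m^k)=O(n^{1/2})$ for the succinct language, so that they fit into $n^{2/3}$ bits of advice.

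No trick is needed if you negate the statement exactly, as the paper does. Assume some $k$ such that every unary $h$ has its own constant $c_h$ with $\PV_1+\mathsf{BB}(\Sigma^b_1)\vdash\UP_{k,c_h}(h)$. Fix one constant $c$ for $g$ and a separate constant $\tilde c_j$ for the $j$-th family. When invoking the lemma, pass to $\max(c,\tilde c_j)$; this is harmless because provability of $\UP_{k,c}$ is monotone in $c$. Then run your Steps 1--2 unchanged. Read positively, the argument shows that the required $h$ is either $g$ itself or the first $\tilde f^{(j)}$ that is not provably in $\mathsf{SIZE}[c'n^k]$ for any $c'$. That $h$ works uniformly for all constants.
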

\begin{proof}
    We will proceed by contradiction. Assume that there is a $k$ where for any unary $h$ there is a constant $c$ such that \[\PV_1+\mathsf{BB}(\Sigma^b_1)\vdash \UP_{k,c}(h).\]

    In particular, there is a constant $c\geq1$ such that $\PV_1+\mathsf{BB}(\Sigma^b_1)\vdash \UP_{k,c}(g).$ By Theorem~\ref{thm:bbwitness}, there is a polynomial-time student $s$ which generates sets of polynomially many answers for the candidate circuit computing the symbol $g$ and after $r$-many rounds of correction from the teacher it always outputs at least one circuit which computes $g$ on the input length $1^{(n)}$. 
    
    Assume that $s^1,\dots,s^{r}$ are the $\PV$-symbols computing the set of answers of the student from the teachers answer for each of the $r$-many rounds, these can be straightforwardly constructed from the subterms of $s$. That is, there is a $\PV$-symbol $p$ such that
    \begin{align*}
        \mathbb{N}\models \quad&(\exists i\leq \abs{p(1^{(n)})})(g((x_1)_i)=\CircuitVal(s^1_i(1^{(n)}),(x_1)_i))\\
                          \lor&(\exists i\leq \abs{p(1^{(n)})})(g((x_2)_i)=\CircuitVal(s^2_i(1^{(n)},x_1),(x_2)_i))\\
                          &\quad\vdots\\
                          \lor&(\exists i\leq \abs{p(1^{(n)})})(g((x_r)_i)=\CircuitVal(s^r_i(1^{(n)},x_1,\dots,x_{r-1}),(x_{r})_i)),
    \end{align*}
    where $(-)_i$ denotes the $\PV$-symbol which extracts the $i$-th element of a sequence. In the rest of the proof we will show that from the assumption, that there is a $\tilde c_1$ such that \[\PV_1+\mathsf{BB}(\Sigma^b_1)\vdash \UP_{k,\tilde c_1}(s^1(1^{(n)})),\] we can invoke Lemma~\ref{lemmarefuter} to falsify the first disjunct. Repeating this $r$-many times, we obtain a contradiction.
    
    By our assumption, there is a $\tilde c_1\geq 1$ such that $\PV_1+\mathsf{BB}(\Sigma^b_1)\vdash \UP_{k,\tilde c_1}(s^1(1^{(n)}))$, by Lemma~\ref{lemmarefuter} there is a $\PV$-symbol $e^1$ and $c_{1}$, such that
    \[\PV_1+\mathsf{BB}(\Sigma^b_1)\vdash (\forall 1^{(n)},n\geq c_1)(\forall i\leq \abs{p(1^{(n)})})(g(e^1(i,1^{(n)}))\neq\CircuitVal(s^1_i(1^{(n)}),e^1(i,1^{(n)}))),\]
    by substituting $\tilde e^1(1^{(n)})$ for $x_1$, where $\tilde e^1(1^{(n)})$ is the $\PV$-symbol computing the sequence \[(e^1(i,1^{(n)}))_{i\leq \abs{p(1^{(n)})}},\] we falsify the first disjunct and obtain
    \begin{align*}
        \mathbb{N}\models \quad&(\exists i\leq \abs{p(1^{(n)})})(g((x_2)_i)=\CircuitVal(s^2_i(1^{(n)},\tilde e^1(1^{(n)})),(x_2)_i))\\
                          \lor&(\exists i\leq \abs{p(1^{(n)})})(g((x_3)_i)=\CircuitVal(s^3_i(1^{(n)},\tilde e^1(1^{(n)}),x_2),(x_3)_i))\\
                          &\quad\vdots\\
                          \lor&(\exists i\leq \abs{p(1^{(n)})})(g((x_r)_i)=\CircuitVal(s^r_i(1^{(n)},\tilde e^1(1^{(n)}),x_2,\dots,x_{r-1}),(x_{r})_i)).
    \end{align*}

    To proceed further, we again use our assumption to obtain $\tilde c_2\geq 1$ such that 
    \[\PV_1+\mathsf{BB}(\Sigma^b_1)\vdash \UP_{k,\tilde c_2}(s^2(1^{(n)},\tilde e^1(1^{(n)}))),\]
    and apply Lemma~\ref{lemmarefuter} again. After $r$-many invokation of Lemma~\ref{lemmarefuter}, we obtain that the empty disjunction is true in $\mathbb{N}$ which is a contradiction.
\end{proof}

\subsection{Unprovability of average-case circuit lower bounds}

The following result closely follows the result in \cite{pichS21}, where we do all the modifications needed for the corresponding theory and the Student-Teacher Game. Another very helpful exposition for the proof can be found in \cite{lioliveira}.

We express the average-case lower bound $\mathsf{NSubExp} \not\subseteq \mathsf{Avg-} \mathsf{coNSIZE}[2^{n^\delta}]$ in an instance manner. That is we assume we have a nondeterministic Turing machine $M$ whose running time is a constructive function $t(n)$ with $t(n)=2^{o(n)}$. The lower bound states that there is no co-nondeterministic circuit $D$ of size $s(n)$, where we will have $s(n)=2^{n^\delta}$, which approximates efficiently $M$, in the sense that it outputs the same for at least $1/2+2^{-n^\delta}$ fraction of the inputs. We can express that with the following sentence in the language of $\PV$:
\begin{align*}
\mathsf{LB}(M,s,m,n_0):= & \;\forall n\geq n_0\in \mathsf{LogLog}\;\; \forall D\in \mathsf{coNSIZE}[s(n)]\\
& \exists m=m(n), (x_1,\dots,x_{m})\in (\{0,1\}^n)^m\; \forall i\leq m(n)\;\mathsf{Error}_{M,D}(x_i)
\end{align*}
where we define the formula
$$
\mathsf{Error}_{M,D}(x) = \big[\exists y,z \; M(x,y)=1\land D(x,z)=0 \big]\; \lor\;  \big[\forall y',z' \; M(x,y)=0\land D(x,z)=1 \big],
$$
which means that $M(x)\neq D(x)$. Note, that the size of $y,z,y',z'$ are at most $2^{n^\delta}$.

\begin{theorem}\label{theorempsllind}
    For every $n_0\in \N$, $\delta\in \mathbb{Q} \,\cap\, (0,1)$, and a non-deterministic Turing machine $M$ with running time $t(n)=2^{o(n)}$, it holds that
    $$
    \PV_1+\mathsf{LLIND}(\mathsf{s}\Sigma^b_1)\not\vdash \mathsf{LB}(M,2^{n^\delta},2^n/2-2^n/2^{n^\delta},n_0).
    $$
\end{theorem}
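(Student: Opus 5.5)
Suppose towards a contradiction that $\PV_1+\mathsf{LLIND}(\mathsf{s}\Sigma^b_1)\vdash \mathsf{LB}(M,2^{n^\delta},2^n/2-2^n/2^{n^\delta},n_0)$. The plan follows the Pich--Santhanam argument, with the KPT witnessing replaced by Theorem~\ref{thm:lindwitness} for $\bd=\log$.

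\emph{Step 1: extract a Student.} The sentence $\mathsf{LB}$ has the form $\forall n\,\forall D\,\exists (x_1,\dots,x_m)\,\forall i\,\mathsf{Error}_{M,D}(x_i)$. Since $n\in\mathsf{LogLog}$, the parameter $N$ with $n=||N||$ makes inputs of length $2^{O(n)}$. The truth table of $M$, the circuit $D$ and the nondeterministic witnesses all have size $2^{O(n)}$, polynomial in $|N|$. So after Herbrandizing, the matrix is open in the language of $\PV$: the universal part consists of $i$ and the $y',z'$ certificates in $\mathsf{Error}$.

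Theorem~\ref{thm:lindwitness} with $\bd=\log$ then gives a $\PV$-Student $f$. It runs in time $\poly(|N|)=2^{O(n)}$ and wins against every teacher in $\poly(\log|N|)=\poly(n)$ rounds with one query per round. In each round it outputs $m\approx 2^n/2-2^n/2^{n^\delta}$ inputs claimed to be errors of $D$. The teacher answers with an index $i$ and certificates showing $M(x_i)=D(x_i)$.

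\emph{Step 2: counterexamples are cheap.} A counterexample is an input $x\in\{0,1\}^n$ plus certificates of size at most $2^{n^\delta}$. Over $\poly(n)$ rounds the whole transcript has size $\poly(n)\cdot 2^{O(n^\delta)}$.

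\emph{Step 3: build a small co-nondeterministic circuit $D$ that beats the Student.} We follow \cite{pichS21} and simulate the Student against the teacher. If the Student's $m$ candidates in some round are all genuine errors, then $D$ disagrees with $M$ on nearly half of all inputs. Otherwise we may pick one candidate where they agree and feed it back, together with its short certificate.

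We define $D$ by hard-wiring the transcript. The round structure is adaptive: each round's correct-agreement answer depends only on previous answers. So the final $D$ is a nondeterministic/co-nondeterministic circuit of size $2^{O(n^\delta)}\cdot\poly(n)$ that
\begin{itemize}
\item guesses and verifies $M$'s computation using $\mathsf{coNSIZE}$ against $\mathsf{NTIME}$, i.e.\ co-nondeterminism simulates the nondeterminism of $M$ on the relevant inputs, and
\item agrees with $M$ on the $\poly(n)$ hard-wired points.
\end{itemize}

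The key counting step: after $\poly(n)$ rounds the Student must output $m$ inputs that are all errors of this $D$. We then build $D$ so that it agrees with $M$ on more than a $1/2+2^{-n^\delta}$ fraction of inputs, by the Pich--Santhanam trick of using $M$'s subexponential nondeterministic time together with a random or hashing-based restriction. This yields a contradiction with the Student's winning condition.

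\emph{Main obstacle.} The number of rounds is the hard part. In the original argument the rounds are constant, so the accumulated advice and any loss per round are trivially absorbed. Here there are $\poly(n)$ rounds, i.e.\ polylogarithmic in the input size $2^n$. I would first check that each round contributes only a $2^{O(n^\delta)}$-size advice, so the total remains $2^{O(n^{\delta'})}$ for slightly larger $\delta'<1$. The unprovability is stated for every rational $\delta$, so we can apply the argument with $\delta/2$ in place of $\delta$, which leaves room for the $\poly(n)$ blow-up. I would also check that the Student's time $2^{O(n)}$ does not interfere, since $D$ only needs to encode the transcript and not the Student.
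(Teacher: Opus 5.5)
There is a genuine gap, and it sits in Step 3, where the whole argument has to happen. You propose to build one co-nondeterministic circuit $D$ at input length $n$ that ``beats the Student''. It would use co-nondeterminism to simulate $M$'s nondeterminism and a hard-wired transcript. But under your assumption, soundness gives that the average-case $\mathsf{LB}$ holds in $\N$. So every small $D$ errs on nearly half the inputs, and the Student wins against it. A small co-nondeterministic $D$ that tracks $M$ is exactly what the lower bound forbids. Constructing an approximating circuit \emph{is} the contradiction you need. Hard-wiring a $\poly(n)$-round transcript does not produce one, and neither does the unexplained ``random or hashing-based restriction''.

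The missing idea is to exploit the Student's \emph{success} on a specially chosen family, at a different input length.
\begin{enumerate}
\item Extract the Student from the worst-case consequence $\mathsf{LB}(M,2^{n^\delta},1,n_0)$.
\item Run it on the Nisan--Wigderson circuits $D_w(x)=\neg M(w|J_x(A))$, built from an $(n,n^{c/2})$-design. If the Student wins at $X_t$ on $w=r_{X_t}(a,y)$, then $M(X_t)\neq D_w(X_t)=\neg M(y)$, i.e.\ $M(X_t)=M(y)$.
\item An averaging argument over traces fixes a common trace $T=(X_1,\dots,X_t)$ and a string $a$.
\item The design property $|J_x(A)\cap J_{X_t}(A)|\le n$ lets the Teacher's answers be simulated for \emph{every} input $y$ with $2^{O(n)}$ advice. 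A single transcript is not enough, because those answers depend on $y$.
\end{enumerate}
The result is a co-nondeterministic circuit of size $2^{O(n)}$ on input length $n^{c/2}$ that approximates $M$. The contradiction is with the truth of the average-case $\mathsf{LB}$ in $\N$ at that length, not with the Student's winning condition.

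Your treatment of the $\poly(n)$ rounds also targets the wrong quantity. Advice and circuit size stay $2^{O(n)}$ however many rounds there are; what degrades is the advantage. The trace lemma only guarantees that a $1/(n3^{t-1}2^{nt})$ fraction of the $y$'s have traces starting with $T$. With $t\le n^k$, that leaves an advantage of roughly $2^{-n^{k+1}}$. At a fixed input length no choice of $\delta<1$ can absorb this, since $k$ depends on the proof. So replacing $\delta$ by $\delta/2$ does nothing for this loss. The fix is to move to input length $n^{c/2}$ with $c$ depending on $k$; the paper takes $c\ge 2\delta^{-1}(k+1)$, so that $2^{-n^{k+1}}$ is at least the required $2^{-(n^{c/2})^\delta}$.
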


\begin{proof}
    Assume that $\PV_1+\mathsf{LLIND}(\mathsf{s}\Sigma^b_1)\vdash \mathsf{LB}(M,2^{n^\delta},2^n/2-2^n/2^{n^\delta},n_0)$. We have two consequences:
    \begin{enumerate}
        \item $\PV_1+\mathsf{LLIND}(\mathsf{s}\Sigma^b_1)\vdash \mathsf{LB}(M,2^{n^\delta},1,n_0)$, which means that $\PV_1$ also shows the worst case lower bound; for all co-nondeterministic circuits $D$, there is at least one input, where $M$ and $D$ do not agree.
        \item $\N\models \mathsf{LB}(M,2^{n^\delta},2^n/2-2^n/2^{n^\delta},n_0)$, which means that in the standard model there is no co-nondeterministic circuit $D$ that can approximate $M$ with success rate greater than $1/2+2^{-n^\delta}$.
    \end{enumerate}
    In this proof, we assume the first point to find a contradiction with the second point.

    From the witnessing theorem for the theory $\PV_1+\mathsf{LLIND}(\mathsf{s}\Sigma^b_1)$ and the first point, we get that the formula $\mathsf{LB}(M,2^{n^\delta},1,n_0)$ can be witnessed by a Student-Teacher Game with poly-logarithmic number of rounds. However, since $n\in \mathsf{LogLog}$, this means that the length of the input is $2^{O(n)}$, which gives us $n^k$ rounds for some $k\in \N$. This $k$ is fixed from now on, since it depends only from the proof in $\PV_1+\mathsf{LLIND}(\mathsf{s}\Sigma^b_1)$, and we will refer to the number of rounds as $r:=n^k$.

    We simulate the Student by the polynomial-time function $f$, which takes as input the number $2^n$ in unary, the description of a co-nondeterministic circuit $D$, the number of round $i$ and depending on $i$, $i-1$ counterexamples in the form of $y',z'$. On every round, $f$ runs in time $2^{O(n)}$.
    After each round, $f$ outputs a triplet $(x,y,z)$, where $x$ is the candidate instance for $M(x)\neq D(x)$, and $y,z$ are the candidate witnesses for the case that $M(x)=1$ and $D(x)=0$, respectively. We will denote the outputs for the respective round $i\leq r$ as $(x_i,y_i,z_i)$.

    We use as input for the Student-Teacher Game, a specific type of co-nondeterministic circuit, $D_{w}(x)$, which are defined using the Nisan-Wigderson generator \cite{nw}. We fix some $c\geq \max(2\delta^{-1}(k+1),4)$ and for some $m'$ with $n^c\leq m'\leq 2n^c$, we consider a boolean $2^n\times m'$ matrix $A=\{a_{i,j}\}$. For every $i\in [2^n]$, we define the set $J_i(A)$ of all $j\in [m']$ such that $a_{i,j}=1$ (the $1$'s in the $i$-th row). Then, $A$ is a $(n,n^{c/2})$-design if for all $i$, $|J_i(A)|=n^{c/2}$ and, for all $i\neq j$, $|J_i(A)\cap J_j(A)|\leq n$. In \cite{nw}, they construct such a design, and they show that there is $n^{9c/2}$-size circuit that given $w\in \{0,1\}^{m'}$ and $i\in [2^n]$, outputs the restriction of $w$ on the bits indexed by $J_i(A)$. Then, we define the circuits $D_w$ as
    $$
    D_w(x)=\neg M(w|J_x(A)).
    $$
    If we assume that $M$ is computable by a nondeterministic circuit of size $2^{n^\epsilon}$, then $D_w(x)$ can be computed in co-nondeterministic size $2^{n^{c\epsilon}}$, which is less than $2^{n^\delta}$, if we choose $\epsilon< \delta /2c$.

    From now on, we work with a specific size $n$, so there is no ambiguity with indexing the rounds with numbers which are a function of $n$. Particularly, $r=n^k$ is also fixed.

    We will use some more definitions. For $a\in\{0,1\}^{m'-n^{c/2}}$ and $b\in \{0,1\}^{n^{c/2}}$, we define $r_x(a,b)$ as the bit-string of length $m'$ with the bits of $b$ into the positions indexed by $J_x(A)$ and the bits of $a$ in the remaining positions. We also define the trace of a string $w\in \{0,1\}^{m'}$ as the sequence of the $x$-outputs of the Student during the Student-Teacher Game with input the circuit $D_w$. For this, we need to specify the outputs of the Teacher (e.g. the Teacher outputs the minimum counterexample lexicographically). Then, $\mathsf{tr}(w)=(x_1,x_2,\dots,x_t)$ with $t\leq r$, where for $x_t$ we have that the Student succeeds, in other words $M(x_t)\neq D(x_t)$.

    \begin{lemma}\label{lem:trace}
        There is a trace $T=(X_1,X_2,\dots,X_t)$ with $t\leq r$ and some $a\in\{0,1\}^{m'-n^{c/2}}$, such that considering all the traces of the form $\mathsf{tr}(r_{X_t}(a,y))$ for all $y\in \{0,1\}^{n^{c/2}}$, a fraction of $s\geq 1/(n3^{t-1}2^{nt})$ of them start with $T$ and a fraction of at least $(2/3-1/n)s$ of them are equal to $T$.
    \end{lemma}

    We do not revise the proof of this lemma, since there are already proofs in \cite{DBLP:journals/jml/Krajicek11,PICH201529,pichS21} and a detailed proof of the averaging argument in Appendix E of \cite{lioliveira}.

    We fix the trace $T$ and the string $a$ from the lemma, and with their help, we will define a new co-nondeterministic circuit, $\mathcal{D}$ that approximates $M$ for the input size $n^{c/2}$. We want this $\mathcal{D}$ to simulate the Student-Teacher Game for the input $n$, but in order to achieve this, we need to simulate the Teacher's answer, as well. This is where we use the property of the Nisan-Wigderson design.

    We only care about instances of the Game, where the input co-nondeterministic circuit is in the form $D_w$ with $w=r_{X_t}(a,b)$ for some $b\in \{0,1\}^{n^{c/2}}$. From the structure of the initial sentence, we see that for any $x\in\{0,1\}^n$, the Teacher must determine counterexamples $y_x',z_x'$ such that $M(x,y_x')=1$ or $D_w(x,z_x')=0$. If there is a an appropriate $y_x'$, then this has size at most $2^n$ and it is independent from the input of the Game, $D_w$. If there is no such $y_x'$, then the Teacher must find some $z_x'$, such that $D_w(x,z_x')=0$, or equivalently, $M(w|J_x(A),z_x')=1$. However, there are only $n^{c/2}$ bits of $w$ which are not fixed by $a$, and from those only $|J_x(A)\cap J_{X_t}(A)|\leq n$ actually contribute for the function, which we get by the $(n,n^{c/2})$-design. This means we can describe the function $x\mapsto z_x'$ with $2^{O(n)}$ bits, too. In total, the answer of the Teacher for each $x$ can be described by some advice $Y_x$ of size $2^{O(n)}$. Thus, the total advice we need to simulate the Teacher for polynomial number of different inputs is still of size $2^{O(n)}$.

    Next, we describe a co-nondeterministic algorithm with advice of size $2^{O(n)}$, which is equivalent with a co-nondeterministic circuit of size $2^{O(n)}$, denoted $\mathcal{D}$. The input must have size $n^{c/2}$.

\begin{algorithm}[H]
     \SetKwInOut{Input}{Input}
     \SetKwInOut{Advice}{Advice}
     \caption{The pseudocode of the algorithm $\mathcal{D}$ that approximates $M$ on input sizes $n^{c/2}$.}
     \Input{A string $y\in \{0,1\}^{n^{c/2}}$.}
     \Advice{The trace $T=(X_1,\dots,X_t)$, $a\in \{0,1\}^{m'-n^{c/2}}$, $b_0,b_1\in \{0,1\}$ and $Y_x$ for all $x\in T$.}
     \vspace{0.3em}
     Define $w=r_{X_t}(a,y)$\;
     Simulate the Student-Teacher Game on input $D_w(x):=\neg M(w|J_x(A))$ until the round $t\leq r$ using the Teacher's answers $Y_x$.\;
     If at any point of the simulation, the trace of $w$ does not match $T$, then output $b_0$\;
     \tcp{We take $b_0$ to be the majority bit among the values of $M$ on all the inputs that do not have $T$ as prefix to their trace.}
     If the simulation works correctly until round $t$, output $b_1$.
     \tcp{We take $b_1= M(X_t)$.}
\end{algorithm}

We will show that $\mathcal{D}$ correctly approximates $M$ on input sizes $n^{c/2}$. For input $y$, if the trace of $r_{X_t}(a,y)$ does not have $T$ as a prefix, we know that the algorithm outputs the majority bit $b_0$. In this case, the algorithm is correct with probability at least $1/2$. On the other hand, we know from \Cref{lem:trace} and the selection of the advice, that for at least $1/(n3^{t-1}2^{nt})$ fraction of the inputs, the corresponding trace has $T$ as prefix, and for $(2/3-1/n)$ fraction of them, their trace is exactly $T$. 

When the latter is the case, the algorithm correctly computes $M(y)$, since from the definition of the trace, the last element is when the Student succeeds in the Game, which means that $M(X_t)\neq D_w(X_t)$. From the advice beat $b_1= M(X_t)$, we get that 
$$
\mathcal{D}(y)=b_1=\neg D_w(X_t)=M(w|J_{X_t}(A))\overset{w=r_{X_t}(a,y)}{=\joinrel=\joinrel=\joinrel=\joinrel=} M(y).
$$

Therefore, in the case of inputs that have $T$ as prefix in their trace, the probability that $\mathcal{D}$ is correct is at least $2/3-1/n$. Overall,
$$
\Pr[\mathcal{D}(y)=M(y)]\geq \frac{1}{2}+\left(\frac{2}{3}-\frac{1}{n}\right)\frac{1}{n3^{t-1}2^{nt}}\geq \frac{1}{2}+\Omega\left(\frac{1}{2^{n^{k+1}}}\right).
$$
The second inequalty is due to the fact that $t\leq r=n^k$.

Since the input size is $n^{c/2}$, we get a contradiction if $2^{n^{k+1}}\leq 2^{(n^{c/2})^\delta}$, which means that we must have $c\geq 2\delta^{-1}(k+1)$, which is the case for our choice of $c$.
\end{proof}

\subsection{Unprovability of both in a single theory}

Both of the previous section used the structure of the Student-Teacher protocol related to the theory. On one hand, it seems to be challenging to adapt unprovability of Krajíček and Oliveira for logarithmically many adaptive rounds of the student. On the other hand it seems downright impossible to use any known witnessing theorems to extend the unprovability reuslt of Pich and Santhanam to polynomially many queries per round, as these queries can cover all possible values for the witness. It is not hard to see that the theory $\PV_1+\mathsf{BB}(\mathsf{s}\Sigma^b_1,\log)$ is contained in both $\PV_1+\mathsf{BB}(\Sigma^b_1)$ and $\PV_1+\mathsf{LLIND}(\mathsf{s}\Sigma^b_1)$ and thus both unprovability results hold for this theory, which is by our results stronger than $\PV_1$ assuming $\NP\not \subseteq \P/poly$.

\begin{corollary}\label{corollarybothunprovabilities}
    \begin{enumerate}
        \item[]
        \item For every $k\geq 1$ there is a unary $\PV$-symbol $h$ such that for every constant $c\geq 1$
            \[\PV_1+\mathsf{BB}(\mathsf{s}\Sigma^b_1,\log)\nvdash \UP_{k,c}(h).\]
        \item For every $n_0\in \N$, $\delta\in \mathbb{Q} \,\cap\, (0,1)$, and a non-deterministic Turing machine $M$ with running time $t(n)=2^{o(n)}$, it holds that
        $
        \PV_1+\mathsf{BB}(\mathsf{s}\Sigma^b_1,\log)\not\vdash \mathsf{LB}(M,2^{n^\delta},2^n/2-2^n/2^{n^\delta},n_0).
        $
        \item Assuming $\NP\not\subseteq \P/poly$, we have $\PV_1\not\equiv \PV_1+\mathsf{BB}(\mathsf{s}\Sigma^b_1,\log)$.
    \end{enumerate}
    \begin{proof}
        Part 1 follows immediately from Theorem~\ref{theoremkobb}. By Theorem~\ref{theoremllindprovesbb} we obtain that \[\PV_1+\mathsf{LLIND}(\mathsf{s}\Sigma^b_1) \vdash\mathsf{BB}(\mathsf{s}\Sigma^b_1,\log),\]
        and thus part 2 follows from Theorem~\ref{theorempsllind}. Part 3 is a direct application of Theorem~\ref{thm:bbsep}.
    \end{proof}
\end{corollary}

%% file: appendix.tex
\section{Function Calculations}\label{sec:appendix}

\subsection{Transformation of Input Sizes}\label{sec:input}

Here, we describe how to compute the function $p(n)$ given the input $n$ and show that it is actually a solution to the relations $m=n\cdot p(n)$ and $p(n)=g(m)$, for a function $g: \N\to\N$, which has sublinear growth rate, it is unbounded and increasing. 

For the general statements of \Cref{thm:st1,thm:st2}, where no logical theory is involved, it is enough for us to show that there is an algorithm that determines the value $p(n)$ for every input $n$. In the theories we consider, we will specify $p(n)$ by some function in the language. This does not need to satisfy $p(n)=g(m)$, since for the separation of theories we do not need the fine-grained separation we prove in these general theorems.

We fix the input $n$, which we assume it is big enough. We need to find a natural number $k$, such that $k = g(n \cdot k)$. Then, $p(n)=k$. For that, we define the function
$$
h_n(k)=g(n\cdot k)-k,
$$
and now our goal is to show that for big enough $n$, the equation $h_n(k)=0$ always has a solution.

For $k=1$, we have that $h_n(1)=g(n)-1$, so it is unbounded and increasing, like $g(n)$. This means that for some $n_0$, for all $n\geq n_0$, $h_n(1)>0$.

On the other hand, we want to show that for all these $n$, $h_n(k)$ will be negative at some point. From the definition of $h_n(k)$, we have that
$$
\frac{h_n(k)}{k} = \frac{g(n\cdot k)}{k}-1 = n\cdot \frac{g(n\cdot k)}{n\cdot k}-1.
$$
However, if $k$ tends to infinity, then $\frac{g(n\cdot k)}{n\cdot k}$ tends to $0$, since $g$ is sublinear. This means that there is some big enough $k$, such that $h_n(k)$ eventually becomes negative.

We have showed that $h_n(1)>0$ and for some big value of $k$, $h_n(k)<0$. The only thing remaining is that the value $0$ is actually achieved. For that, we will show that when $h_n(k)>0$, then $h_n(k+1)\geq 0$, which means that before the change to negative values, we actually get the solution to $h_n(k)=0$.

Since $h_n(k)>0$ and $h_n(k)$ is natural, we get $h_n(k)\geq 1\implies g(n\cdot k)\geq k+1$. This means that
$$
h_n(k+1)= g(n\cdot (k+1))-(k+1) \geq g(n\cdot k)-(k+1)\geq 0,
$$
as we wanted. The first inequality comes from the fact that $g$ is increasing.

\subsection{Growth of Sublinear Functions}\label{sec:growth}

Suppose that we have two increasing functions $\bd_1(x), \bd_2(x) \leq x$, such that for any $k\in\N$ and for large enough $x$, $\bd_1(x)> \bd_2(x)^k$. Then, we want to show the inequality, for large enough $n_0$:
\begin{equation}
\forall n\geq n_0 \;\forall k \;\bd_2(n\cdot\bd_1(n))^k< \bd_1(n)
\nonumber
\end{equation}

From the assumption we have that 
$$
\forall n\geq n_0 \;\forall k \;\bd_2(n\cdot\bd_1(n))^k< \bd_1(n\cdot\bd_1(n)).
$$
We can change the above to show that also
$$
\forall n\geq n_0 \;\forall k \;\forall k_0\;\bd_2(n\cdot\bd_1(n))^k< \bd_1(n\cdot\bd_1(n))^{1/k_0}\leq \bd_1(n^2)^{1/k_0},
$$
where for the second inequality we use the fact that $\bd_1(n)\leq n$.

For the last step, we combine the above relation with
$$
\forall n\geq n_0  \;\exists k_0\; \bd_1(n^2)^{1/k_0}< \bd_1(n),
$$
which is true, because otherwise, for $\bd(n_0)>1$, $\exists n\geq n_0\;\forall k_0\; \bd_1(n^2)>\bd_1(n)^{k_0}$, which is a contradiction, since the right-hand side tends to infinity.